\documentclass{article}
\usepackage{amsmath}

\usepackage{authblk}

\usepackage{graphicx}

\usepackage[margin=1in]{geometry}

\usepackage{amsthm}
\usepackage{amsfonts}
\usepackage{mdframed}

\newtheorem{theorem}{Theorem}
\newtheorem{lemma}[theorem]{Lemma}
\newtheorem{definition}[theorem]{Definition}
\newtheorem{corollary}[theorem]{Corollary}

\newtheorem{problem}[theorem]{Problem}
\newtheorem{proposition}[theorem]{Proposition}

\newcommand{\promprob}[1]{(#1_{\text{$\mathrm{YES}$}},#1_{\text{$\mathrm{NO}$}})}
\newcommand{\promprobY}[1]{#1_{\text{$\mathrm{YES}$}}}
\newcommand{\promprobN}[1]{#1_{\text{$\mathrm{NO}$}}}
\newcommand{\qpromprob}[1]{(\mathcal{#1}_{\text{$\mathrm{YES}$}},\mathcal{#1}_{\text{$\mathrm{NO}$}})}
\newcommand{\qpromprobY}[1]{\mathcal{#1}_{\text{$\mathrm{YES}$}}}
\newcommand{\qpromprobN}[1]{\mathcal{#1}_{\text{$\mathrm{NO}$}}}

\newcommand{\pr}{\text{$\mathrm{Pr}$}}

\newcommand{\cclass}[1]{\text{$\mathrm{#1}$}}

\title{Quantum State Isomorphism}

\author[1]{Joshua Lockhart}
\author[1,2,3]{Carlos E. Gonz\'alez-Guill\'en\thanks{This work was partially completed while the author was on a long term visit to Department of Computer Science, University College London.}}

\affil[1]{Department of Computer Science, University College London}
\affil[2]{Departamento de Matem\'atica Aplicada a la Ingenier\'ia Industrial, Universidad Polit\'ecnica de Madrid}
\affil[3]{Instituto de Matem\'atica Interdisciplinar, Universidad Complutense de Madrid}

\begin{document}
\maketitle

\begin{abstract}

We consider a problem we call \textsc{StateIsomorphism}: given two quantum states of $n$ qubits, can one be obtained from the other by rearranging the qubit subsystems? Our main goal is to study the complexity of this problem, which is a natural quantum generalisation of the problem \textsc{StringIsomorphism}. We show that \textsc{StateIsomorphism} is at least as hard as \textsc{GraphIsomorphism}, and show that these problems have a similar structure by presenting evidence to suggest that \textsc{StateIsomorphism} is an intermediate problem for \cclass{QCMA}. In particular, we show that the complement of the problem, \textsc{StateNonIsomorphism}, has a two message quantum interactive proof system, and that this proof system can be made statistical zero-knowledge. We consider also \textsc{StabilizerStateIsomorphism} (SSI) and \textsc{MixedStateIsomorphism} (MSI), showing that the complement of SSI has a quantum interactive proof system that uses classical communication only, and that MSI is QSZK-hard.

\end{abstract}

\section{Introduction and statement of results}
Ladner's theorem \cite{ladner} states that if $\cclass{P}\neq \cclass{NP}$ then there exists \emph{\cclass{NP}-intermediate problems}: \cclass{NP} problems that are neither \cclass{NP}-hard, nor in \cclass{P}. While of course the \cclass{P} \emph{vs.} \cclass{NP} problem is unresolved, the problem of testing if two graphs are isomorphic (\textsc{GraphIsomorphism}) has the characteristics of such an intermediate problem. \textsc{GraphIsomorphism} is trivially in NP, since isomorphism of two graphs can be certified by describing the permutation that maps one to the other, but as Boppana and H\aa stad show \cite{bh}, if it is NP-complete then the polynomial hierarchy collapses to the second level. Furthermore, while many instances of the problem are solvable efficiently in practice \cite{nauty}, it is still not known if there exists a polynomial time algorithm for the problem.

Recall that Quantum Merlin Arthur (\cclass{QMA}) is considered to be the quantum analogue of \cclass{NP}: the certificate is a quantum state, and the verifier has the ability to perform quantum computation. The class \cclass{QCMA} is defined in the same way but with certificates restricted to be classical bitstrings.
In this paper, we show that there are problems that exhibit similar hallmarks of being intermediate for \cclass{QCMA} \cite{an}. Succinctly: we formulate problems in \cclass{QCMA} that are not obviously in \cclass{BQP}, and which are unlikely to be \cclass{QCMA}-complete.
 
Babai's recent quasi-polynomial time algorithm for \textsc{GraphIsomorphism} \cite{babai} has revived a fruitful body of work that links the problem to algorithmic group theory \cite{babai2,groupgraph,luks,luks2}. This literature deals with a closely related problem called \textsc{StringIsomorphism}: given bitstrings $x,y\in\{0,1\}^n$ and a permutation group $G$, is there $\sigma \in G$ such that $\sigma(x)=y$ (where permutations act in the obvious way on the strings)? This problem has a number of similarities with \textsc{GraphIsomorphism}, and, as we show, can be recast in terms of quantum states.

We study what is arguably the most direct quantum generalisation of this problem, a problem we call \textsc{StateIsomorphism}. Such a generalisation is obtained by replacing the strings $x$ and $y$ by $n$-qubit pure states, and by considering the permutations in the group $G$ to act as ``reshufflings'' of the qubits.
The problem is obviously in \cclass{QCMA}: if there is a permutation mapping one state to the other then its permutation matrix acts as the certificate. Equality of two quantum states can be verified via an efficient quantum procedure known as the SWAP test \cite{swaptest}. Also, if there is an efficient quantum algorithm then the same can be used as an algorithm for \textsc{GraphIsomorphism}: as we shall see later, there exists a polynomial time many-one reduction from \textsc{GraphIsomorphism} to \textsc{StateIsomorphism}.

We first establish that in terms of interactive proof systems that solve the problem, \textsc{StateIsomorphism} has a number of similarities with its classical counterpart. A central part of the Boppana-H\aa stad collapse result is that \textsc{GraphIsomorphism} belongs in $\cclass{co-IP}(2)$: that is, that \textsc{GraphNonIsomorphism} has a two round interactive proof system. We show that \textsc{StateIsomorphism} is in $\cclass{co-QIP}(2)$: its complement has a two round \emph{quantum} interactive proof system. \textsc{GraphIsomorphism} also admits a statistical zero knowledge proof system, and indeed, we prove that \textsc{StateIsomorphism} has an honest verifier quantum statistical zero knowledge proof system. These results are summarised in the following theorem, where QSZK is the class of problems with (honest verifier) quantum statistical zero knowledge proof systems, defined by Watrous in \cite{qszk}. Note that since $\cclass{QIP}(2)\supseteq \cclass{QSZK} = \cclass{co-QSZK}$ (see \cite{qszk}), inclusion in $\cclass{co-QIP}(2)$ follows as a corollary.
\begin{theorem}
\label{theorem:SIQSZK}
 \textsc{StateIsomorphism} is in $\text{\emph{QSZK}}$.
\end{theorem}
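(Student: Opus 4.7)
The plan is to give a three-message honest-verifier quantum statistical zero-knowledge protocol for \textsc{StateIsomorphism}, directly adapting the classical Goldreich--Micali--Wigderson protocol for \textsc{GraphIsomorphism}. Let $(|\psi_0\rangle,|\psi_1\rangle,G)$ be an instance, with $|\psi_b\rangle$ prepared by the given polynomial-size circuits, $G \leq S_n$ specified by generators, and $P_\sigma$ the unitary implementing qubit permutation $\sigma$. The protocol is: (i) the prover samples $\pi \in G$ uniformly (polynomial time by standard permutation-group algorithms for $G$ given by generators) and sends $|H\rangle := P_\pi|\psi_0\rangle$; (ii) the verifier challenges with uniform $b \in \{0,1\}$; (iii) the prover returns $\tau := \pi$ if $b=0$, otherwise $\tau := \pi\phi^{-1}$, where $\phi \in G$ is the witness satisfying $P_\phi|\psi_0\rangle = |\psi_1\rangle$; (iv) the verifier prepares a fresh copy of $P_\tau|\psi_b\rangle$ and performs a SWAP test against $|H\rangle$, accepting iff the test reports equality.

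Completeness holds with probability one, since the honest prover always produces a valid $\tau$ and the SWAP test of two identical pure states accepts deterministically. For soundness in a NO instance with promise $\max_{\sigma \in G}|\langle\psi_0|P_\sigma|\psi_1\rangle| \leq \epsilon$, I will show below that the acceptance probability against any prover is at most $\tfrac34+\tfrac{\epsilon}{4}$. Sequential repetition then amplifies the gap to any desired constant while preserving honest-verifier zero knowledge.

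Zero knowledge is witnessed by a simulator that samples $b \in \{0,1\}$ and $\tau \in G$ uniformly and outputs the transcript $(P_\tau|\psi_b\rangle, b, \tau)$. In a YES instance the real transcript's $\tau$ is itself uniform in $G$ for both challenges (the map $\pi \mapsto \pi\phi^{-1}$ is a bijection on $G$) and satisfies $|H\rangle = P_\tau|\psi_b\rangle$, so the simulator's output is identically distributed to the verifier's view at every round. By the equivalence of honest-verifier and general quantum statistical zero knowledge from \cite{qszk}, this places \textsc{StateIsomorphism} in $\cclass{QSZK}$.

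The delicate step is the soundness bound against adaptive, entangled strategies. Modelling the prover as preparing $|\Xi\rangle_{PV}$, sending the $V$ register, and then applying a unitary $U_b$ on its private $P$ register before a computational-basis measurement producing $\tau$, the acceptance probability averaged over $b$ equals $\tfrac12 + \tfrac14\langle\Xi|(M_0+M_1)|\Xi\rangle$, where $M_b := (U_b^\dagger\otimes I)\Gamma_b(U_b\otimes I)$ and $\Gamma_b := \sum_{\tau \in G}|\tau\rangle\langle\tau|\otimes P_\tau|\psi_b\rangle\langle\psi_b|P_\tau^\dagger$ is a rank-$|G|$ projector. The nontrivial point is to verify $\|M_0+M_1\|_\infty \leq 1+\epsilon$, which reduces via a Cauchy--Schwarz principal-angle argument on unit vectors in the ranges of $M_0,M_1$, using unitarity of $U_0U_1^\dagger$ and the uniform cross-orbit overlap bound from the NO promise. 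All other ingredients---efficient uniform sampling from $G$, sequential amplification of a constant gap, and preparation of permuted copies of $|\psi_b\rangle$---are routine.
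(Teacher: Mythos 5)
Your completeness and simulator arguments are fine, but the protocol itself is unsound: the asserted key inequality $\lVert M_0+M_1\rVert_\infty\le 1+\epsilon$ is false, and with it the claimed $3/4+\epsilon/4$ soundness bound. Pairwise near-orthogonality of the states $P_\tau|\psi_0\rangle$ and $P_{\tau'}|\psi_1\rangle$ does not control $\lVert M_0M_1\rVert$; what enters is a trace-norm (steering) quantity, namely $\sup_{\alpha,\beta}\lVert \mathrm{diag}(\bar\alpha)\,C\,\mathrm{diag}(\beta)\rVert_1$ with $C_{\tau\tau'}=\langle\psi_0|P_\tau^\dagger P_{\tau'}|\psi_1\rangle$, and this can equal $1$ even when every entry of $C$ has modulus $\epsilon$. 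Concretely, let $n$ be odd, let $G$ be the cyclic group generated by the cyclic shift of the $n$ qubits, let $|\psi_0\rangle=|x\rangle$ with $x=10^{n-1}$, and let $|\psi_1\rangle=n^{-1/2}\sum_{j}\omega^{j^2}P_{\mathrm{shift}^j}|x\rangle$ with $\omega=e^{2\pi i/n}$ (efficiently preparable). Every overlap $|\langle\psi_1|P_\sigma|\psi_0\rangle|$ equals $n^{-1/2}$, so this is a NO instance with $\epsilon=n^{-1/2}$. A cheating prover sends the $V$ half of $|\Xi\rangle=n^{-1/2}\sum_k|k\rangle_P\otimes P_{\mathrm{shift}^k}|\psi_1\rangle_V$. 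On challenge $b=1$ he measures $P$ and returns the observed shift, leaving $V$ exactly in $P_{\mathrm{shift}^k}|\psi_1\rangle$. On challenge $b=0$ he first applies the unitary taking $|w_m\rangle:=n^{-1/2}\sum_k\omega^{(m-k)^2}|k\rangle$ to $|m\rangle$ (these are orthonormal by a Gauss-sum computation, using $n$ odd; note $|\Xi\rangle=n^{-1/2}\sum_m|w_m\rangle\otimes P_{\mathrm{shift}^m}|\psi_0\rangle$), measures, and returns the observed shift, leaving $V$ exactly in $P_{\mathrm{shift}^m}|\psi_0\rangle$. He is accepted with probability $1$ on a NO instance, so no amplification can help. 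The underlying reason is that $\frac{1}{|G|}\sum_{\tau\in G}P_\tau|\psi_0\rangle\langle\psi_0|P_\tau^\dagger=\frac{1}{|G|}\sum_{\tau\in G}P_\tau|\psi_1\rangle\langle\psi_1|P_\tau^\dagger$ here, so by Uhlmann's theorem a prover holding a purification can steer his ``commitment'' register onto either orbit ensemble at will; the classical GMW soundness argument (one graph cannot be isomorphic to both) has no quantum analogue against entangled commitments.

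This steering phenomenon is precisely what the paper exploits, but in the sound direction: it does not give a prover-commits protocol for isomorphism at all. Instead it proves \textsc{StateNonIsomorphism} is in QIP$(2)$ and then in QSZK, via a verifier-driven protocol in which the verifier sends $k=O(\log|G|/(1-\epsilon(n)))$ copies of $P_\sigma|\psi_j\rangle$ for uniformly random $\sigma\in G$ and $j\in\{0,1\}$, and the honest prover identifies $j$ using the Harrow--Lin--Montanaro sequential-measurement lemma; soundness in the isomorphic case follows from exactly the identity of twirled ensembles that breaks your protocol, and Theorem~\ref{theorem:SIQSZK} then follows since $\cclass{QSZK}=\cclass{co\text{-}QSZK}$. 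To salvage your approach you would have to bound the steering quantity above, which the counterexample shows is impossible in general.
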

A corollary of this theorem provides evidence to suggest that \textsc{StateIsomorphism} is not \cclass{QCMA}-complete. If it were, then every problem in \cclass{QCMA} would have an honest verifier quantum statistical zero knowledge proof system. 
Furthermore, this result is evidence against the problem being \cclass{NP}-hard: it is unlikely that $\cclass{NP}\subseteq \cclass{QSZK}$. 
\begin{corollary}
	 \label{corollary:qcma}
	 If \textsc{StateIsomorphism} is \emph{\cclass{QCMA}}-complete then \emph{$\cclass{QCMA}\subseteq \cclass{QSZK}$}.
\end{corollary}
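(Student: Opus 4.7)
The plan is essentially a one-line deduction from Theorem \ref{theorem:SIQSZK} combined with a standard closure property of $\cclass{QSZK}$. Suppose \textsc{StateIsomorphism} is $\cclass{QCMA}$-complete. By the usual convention, this means that every language $L\in\cclass{QCMA}$ admits a polynomial-time computable many-one reduction $f$ such that $x\in L$ if and only if $f(x)\in\textsc{StateIsomorphism}$. Theorem \ref{theorem:SIQSZK} supplies an honest-verifier quantum statistical zero knowledge proof system for \textsc{StateIsomorphism}.

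Given these ingredients, I would construct a $\cclass{QSZK}$ proof system for an arbitrary $L\in\cclass{QCMA}$ by composition: on input $x$, the verifier first classically computes $f(x)$, then runs the verifier of the \textsc{StateIsomorphism} protocol on input $f(x)$, interacting with the prover in the obvious way. Completeness and soundness transfer immediately from those of the underlying protocol. For the zero-knowledge property, observe that a simulator for the composed protocol is obtained by first computing $f(x)$ and then invoking the simulator guaranteed for \textsc{StateIsomorphism} on $f(x)$; since $f$ is deterministic and polynomial-time computable, the resulting simulated transcripts are statistically close to the honest transcripts whenever $x\in L$. This shows $L\in\cclass{QSZK}$, and since $L$ was arbitrary, $\cclass{QCMA}\subseteq\cclass{QSZK}$.

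There is no real obstacle; the only point to verify carefully is that $\cclass{QSZK}$ is closed under polynomial-time many-one reductions (so that precomposing with $f$ yields a valid $\cclass{QSZK}$ protocol), which is a standard fact established in \cite{qszk}. If one wished to allow a stronger notion of $\cclass{QCMA}$-completeness, for instance Cook-style reductions, a more delicate argument would be required, but for the standard Karp-reduction notion of completeness the corollary follows from the above composition in a few lines.
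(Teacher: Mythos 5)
Your proposal is correct and is essentially the paper's own argument: the paper likewise deduces the corollary by noting that \cclass{QCMA}-completeness of \textsc{StateIsomorphism} would let every \cclass{QCMA} problem be many-one reduced to it, and then invoking Theorem \ref{theorem:SIQSZK} together with closure of \cclass{QSZK} under polynomial-time many-one reductions. No gaps; nothing further is needed.
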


In pursuit of stronger evidence against \cclass{QCMA}-hardness of \textsc{StateIsomorphism}, we consider a quantum polynomial hierarchy in the same vein as those considered by Gharibian and Kempe \cite{gk}, and Yamakami \cite{yamakami}. This hierarchy is defined in terms of quantum $\exists$ and $\forall$ complexity class operators like those of \cite{yamakami}, but from our definitions it is easy to verify that lower levels correspond to well known complexity classes. In particular, $\Sigma_0=\Pi_0=\cclass{BQP}$, and $\Sigma_1=\cclass{QCMA}$ or $\Sigma_1=\cclass{QMA}$ depending on whether we take the certificates to be classical or quantum (see Section \ref{section:aquantumpolynomialhierarchy}). Also, from the definition we provide, it is clear that the class $\text{cq-}\Sigma_2$ corresponds directly to the identically named class in \cite{gk}.

We prove the following, where $\cclass{QPH}=\cup_{i=1}^\infty \Sigma_i$, and $\cclass{QCAM}$ is the quantum generalisation of the class $\cclass{AM}$ where all communication between Arthur and Merlin is restricted to be classical \cite{mw}.
\begin{theorem}
	\label{theorem:collapse}
	Let $A$ be a promise problem in \emph{$\cclass{QCMA}\cap \text{co-}\cclass{QCAM}$}. If $A$ is \emph{$\cclass{QCMA}$}-complete, then \emph{$\cclass{QPH}\subseteq \Sigma_2$}.
\end{theorem}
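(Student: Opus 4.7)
The plan is to adapt the classical Boppana-H{\aa}stad argument (``$\cclass{NP}\subseteq\cclass{co-AM}$ implies $\cclass{PH}\subseteq\Sigma_2^p$'') to the quantum hierarchy. The argument proceeds in three steps: (a) deduce $\cclass{QCMA}\subseteq\cclass{co-QCAM}$ from the hypothesis; (b) show by induction that $\Sigma_k\subseteq\cclass{QCAM}$ for every $k\geq 1$; (c) establish a quantum analogue of the Sipser-style inclusion $\cclass{AM}\subseteq\Pi_2^p$, namely $\cclass{QCAM}\subseteq\Pi_2$. Chained together these give $\cclass{QPH}\subseteq\cclass{QCAM}\subseteq\Pi_2=\Sigma_2$.

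For (a), any $L\in\cclass{QCMA}$ reduces to $A$ by a polynomial-time many-one reduction $f$, and composing $f$ with the $\cclass{QCAM}$-protocol for $\bar A$ yields a $\cclass{QCAM}$-protocol for $\bar L$, so $\cclass{QCMA}\subseteq\cclass{co-QCAM}$, equivalently $\Pi_1=\cclass{co-QCMA}\subseteq\cclass{QCAM}$. For (b), the base $\Sigma_1=\cclass{QCMA}\subseteq\cclass{QCAM}$ is immediate. For $\Sigma_2=\exists\cdot\Pi_1\subseteq\exists\cdot\cclass{QCAM}$, amplify the $\cclass{QCAM}$-verifier's error to below $2^{-n-|y|}$ where $y$ is the outer $\exists$-witness; a union bound on the soundness side then merges $\exists y$ into the existential layer of $\cclass{QCAM}$, giving $\Sigma_2\subseteq\cclass{QCAM}$ and hence $\Pi_2\subseteq\cclass{co-QCAM}$. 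For the inductive step at $k\geq 3$, the inductive hypothesis gives $\Pi_{k-1}\subseteq\cclass{co-QCAM}$, so $\Sigma_k\subseteq\exists\cdot\cclass{co-QCAM}$; amplification and a union bound over the outer $\exists$-witness place this in $\cclass{BP}\cdot\Sigma_2$ (with $\cclass{BQP}$-base), and combining with the previously established $\Sigma_2\subseteq\cclass{QCAM}$ and the identity $\cclass{BP}\cdot\cclass{QCAM}=\cclass{QCAM}$ yields $\Sigma_k\subseteq\cclass{QCAM}$.

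For (c), amplifying a $\cclass{QCAM}$-language $L$ gives a $\cclass{BQP}$-verifier $V$ satisfying $\Pr_r[\exists w: V(x,r,w)\text{ accepts}]\geq 1-2^{-t}$ on YES inputs and $\leq 2^{-t}$ on NO inputs. The standard random-shift argument then rewrites ``$x\in L$'' as the $\Pi_2$-predicate $\forall(r_1,\ldots,r_m)\,\exists s,w_1,\ldots,w_m: V(x, r_i\oplus s, w_i)\text{ accepts for every }i$, for suitable polynomials $m$ and $t$; the innermost polynomial-size conjunction of $\cclass{BQP}$-predicates is again $\cclass{BQP}$ after a round of amplification. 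The principal obstacle is the amplification bookkeeping in step (b) --- at each level the verifier's error must be driven below the reciprocal of the next outer witness space for the union bound to survive --- but this is routine via parallel repetition. The quantum content enters only through the innermost $\cclass{BQP}$-verifier, which under the standard $\exists,\forall,\cclass{BP}$ operators behaves exactly as the deterministic polynomial-time base of the classical argument.
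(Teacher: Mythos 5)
Your proposal is correct in outline and shares the paper's skeleton at the two ends, but the middle takes a genuinely different route. Step (a) is exactly the paper's opening move, and your step (c) random-shift argument is, in substance, the paper's chain $\cclass{QCAM}\subseteq\cclass{BP}\cdot\cclass{QCMA}\subseteq\text{cc-}\Pi_2$ (Propositions \ref{prop:bpqam} and \ref{prop:qampi2}), just carried out directly on the amplified verifier. The divergence is in step (b). First, your merging of the outer $\exists$-witness into the $\cclass{QCAM}$ protocol by driving the soundness error below $2^{-n-|y|}$ and union-bounding is precisely the content of $\cclass{QCMAM}=\cclass{QCAM}$, which the paper does not reprove but imports from Kobayashi--Le Gall--Nishimura (Theorem \ref{theorem:kobayashi}); your sketch is the standard argument and is fine because Merlin's messages are classical, so parallel repetition amplifies $\cclass{QCAM}$ as in the classical case. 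Second, and more significantly, the paper never performs your level-by-level induction $\Sigma_k\subseteq\cclass{QCAM}$: it stops at $\Sigma_2\subseteq\cclass{QCMAM}=\cclass{QCAM}\subseteq\Pi_2$ (Proposition \ref{prop:coqcmaqcam}) and then invokes the purely structural collapse Lemma \ref{lemma:subsetcollapse}, proved by quantifier bookkeeping via Lemma \ref{lemma:eahousekeeping}, to absorb all higher levels into $\Sigma_2$. Your induction is sound in outline but does redundant work and leans on several asserted-but-unproved ingredients at every level --- $\text{co-}\cclass{QCAM}\subseteq\cclass{BP}\cdot\text{co-}\cclass{QCMA}$, $\cclass{BP}\cdot\cclass{QCAM}=\cclass{QCAM}$, and repeated amplification of the $\cclass{BP}$ operator relative to a promise class, which requires closure of $\text{co-}\cclass{QCMA}$ under threshold combinations --- exactly the bookkeeping the structural lemma lets the paper avoid. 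What each approach buys: the paper's is shorter and modular, isolating the quantum content in two cited results; yours is self-contained (no black-box use of $\cclass{QCMAM}=\cclass{QCAM}$) and delivers $\cclass{QPH}\subseteq\cclass{QCAM}$ explicitly, the stronger statement the paper also records in Proposition \ref{prop:coqcmaqcam}. If you keep your route, the one place to be careful is the promise-problem and amplification bookkeeping in the inductive step; none of it fails, but it is not entirely ``routine'' and should be spelled out.
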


While the relationship between the levels of this hierarchy and the levels of the classical hierarchy remains an open research question \cite{bqppolyhi}, the fact that the lower levels of this quantum hierarchy coincide with well known classes gives weight to collapse results of this kind.

 We draw attention to the fact that the collapse implication in Theorem \ref{theorem:collapse} is for the classical certificate classes \cclass{QCMA} and \cclass{QCAM}, rather than for the more well known \cclass{QMA} and \cclass{QAM} \cite{mw}. While the problems we consider are in \cclass{QCMA}, meaning that the current statement of the theorem is all we need, already we have an interesting open question: is there a similar collapse theorem that relates \cclass{QMA} and \cclass{QAM}? The proof of Theorem \ref{theorem:collapse} relies on the fact that $\cclass{QCMAM}=\cclass{QCAM}$ (proved by Kobayashi \emph{et al.} in \cite{kobayashi}), but it is unlikely that $\cclass{QMAM}=\cclass{QAM}$, since $\cclass{QMAM}=\cclass{QIP}=\cclass{PSPACE}$ \cite{mw,qip=pspace}. 
 
 As we shall see in Section \ref{section:interactiveproofsforquantumstateisomorphism}, there is a barrier that prevents us from applying Theorem \ref{theorem:collapse} to \textsc{StateIsomorphism}: our quantum interactive proof systems for \textsc{StateNonIsomorphism} require quantum communication between verifier and prover. This prevents us from proving inclusion in QCAM. It is not clear that the problem admits such a proof system.
 
  However, if it is possible to produce an efficient classical description of the quantum states in the problem instance that is independent from how they are specified in the input, then it is possible to prove inclusion
  in QCAM. We show that this is the case for a restricted family of quantum states called \emph{stabilizer states}, a fact which allows us to prove the following. 
\begin{corollary}
	\label{theorem:productCollapse}
	If \textsc{StabilizerStateIsomorphism} is \emph{\cclass{QCMA}}-complete, then \emph{$\cclass{QPH}\subseteq\Sigma_2$}. 
\end{corollary}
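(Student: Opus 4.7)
The plan is to derive this corollary directly from Theorem \ref{theorem:collapse} by establishing that \textsc{StabilizerStateIsomorphism} (SSI) lies in $\cclass{QCMA}\cap\text{co-}\cclass{QCAM}$. Containment in $\cclass{QCMA}$ is inherited from \textsc{StateIsomorphism}: a yes-certificate is a description of the permutation $\sigma\in G$ mapping $|\psi_0\rangle$ to $|\psi_1\rangle$, and the verifier checks this with a SWAP test. All the work therefore lies in building a $\cclass{QCAM}$ proof system for co-SSI, after which Theorem \ref{theorem:collapse} applied with $A = \mbox{SSI}$ immediately gives the stated collapse.

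For the protocol I would adapt the classical \textsc{GraphNonIsomorphism} $\cclass{AM}$ proof system to the stabilizer setting. The key fact foreshadowed in the paragraph preceding the corollary is that stabilizer states admit a polynomial-size classical description, namely a stabilizer tableau, which can be reduced to a \emph{canonical form} by an efficient classical algorithm — for instance, by putting the generator matrix into reduced row-echelon form over $\mathbb{F}_2$ — with the property that two stabilizer states coincide if and only if their canonical tableaux are bitwise identical. Moreover, a qubit permutation $\sigma$ acts on a stabilizer state by permuting the tensor factors in each Pauli generator, which is trivially computable classically.

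Given an instance $(|\psi_0\rangle,|\psi_1\rangle,G)$ of co-SSI, Arthur samples a uniform bit $b$ and a uniform $\sigma\in G$, classically computes the canonical stabilizer tableau of $\sigma|\psi_b\rangle$, and sends this bitstring to Merlin. Merlin replies with a bit $b'$, and Arthur accepts iff $b'=b$. If the states are non-isomorphic, the $G$-orbits of $|\psi_0\rangle$ and $|\psi_1\rangle$ are disjoint, so Merlin recovers $b$ from the canonical form alone and completeness is $1$. If the states are isomorphic, the canonical form of $\sigma|\psi_b\rangle$ has the same distribution under $b=0$ and $b=1$, so Merlin's success probability is exactly $1/2$; standard parallel repetition then amplifies the gap, placing co-SSI in $\cclass{QCAM}$.

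The main obstacle — and the reason the argument does not extend to arbitrary states — is precisely the existence of an efficiently computable canonical form. Without it, the very string Arthur sends would encode $b$ through the chosen representation and leak it to Merlin, destroying soundness. Fortunately the canonicalisation of stabilizer tableaux is standard in the stabilizer formalism and runs in polynomial time, so this obstacle dissolves. Once co-SSI$\in\cclass{QCAM}$ is in hand, the hypotheses of Theorem \ref{theorem:collapse} are satisfied and the conclusion $\cclass{QPH}\subseteq\Sigma_2$ follows.
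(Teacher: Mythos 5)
Your high-level strategy is the same as the paper's: reduce the corollary to Theorem \ref{theorem:collapse} by showing that \textsc{StabilizerStateIsomorphism} lies in $\cclass{QCMA}\cap\text{co-}\cclass{QCAM}$, with the efficient classical description of stabilizer states as the key ingredient. The genuine gap is in the step ``placing co-SSI in $\cclass{QCAM}$.'' The protocol you describe is a \emph{private-coin} two-message protocol: Arthur's message to Merlin is the canonical tableau of $P_\sigma|\psi_b\rangle$, a function of his hidden randomness $(b,\sigma)$, and its soundness in the isomorphic (NO) case depends entirely on $b$ remaining hidden. But $\cclass{QCAM}$ as defined in this paper is a public-coin class: Arthur's message is the uniformly random string $y$ itself, and Merlin's classical reply may depend on $y$. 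If the coins $(b,\sigma)$ were made public, Merlin would read off $b$ and your soundness bound of $1/2$ would evaporate; so your protocol, as stated, establishes membership only in a two-message private-coin class, and parallel repetition does not convert private coins to public ones. This conversion is exactly the nontrivial step, and it is what the paper supplies: it proves \textsc{StabilizerStateNonIsomorphism}$\in\cclass{QCAM}$ via a Goldwasser--Sipser set lower bound protocol applied to a set $S_G$ built from permuted stabilizer descriptions together with automorphisms, whose cardinality is $|G|$ in the isomorphic case and $2|G|$ in the non-isomorphic case, with Arthur able to verify membership in $S_G$. To keep your GNI-style protocol you would need, in addition, a quantum-verifier analogue of the Goldwasser--Sipser private-to-public transformation, which you have not provided.

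A secondary inaccuracy: you have Arthur ``classically compute'' the canonical tableau of $\sigma|\psi_b\rangle$, but in \textsc{SSI} the states are given by arbitrary circuits $Q_{\psi_0},Q_{\psi_1}\in\mathcal{Q}_{n,n}$ that are merely promised to produce stabilizer states; they need not be stabilizer circuits, so Gottesman--Knill-style classical tableau tracking is not available from the input. The paper instead has the (quantum) verifier run Montanaro's algorithm (Theorem \ref{theorem:classicaldescription}) on $O(n)$ copies of the prepared state to extract generators of its stabilizer group, with failure probability $1/\exp(n)$. Your canonicalisation idea (row reduction of the generators over $\mathbb{F}_2$) is a reasonable ingredient once those generators have been extracted quantumly, but the public-coin obstruction above is the missing piece that actually blocks the argument as written.
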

Furthermore, the fact that stabilizer states can be described classically also implies the following.
\begin{theorem}
\textsc{StabilizerStateNonIsomorphism} is in $\text{\emph{QCSZK}}$.
\end{theorem}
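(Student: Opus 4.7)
The plan is to adapt the standard honest-verifier statistical zero knowledge protocol for \textsc{GraphNonIsomorphism} to the stabilizer setting, exploiting the fact that a stabilizer state on $n$ qubits admits an efficient classical description via its stabilizer group, and that this description can be put into a canonical form in classical polynomial time (for instance, the reduced tableau of Aaronson and Gottesman). Because all the information the verifier needs to manipulate is classical, the resulting protocol has a classical polynomial-time verifier communicating classically with an unbounded prover, and this is more than enough to place the problem in $\cclass{QCSZK}$.

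Concretely, on input (classical descriptions of) stabilizer states $|\phi_0\rangle$ and $|\phi_1\rangle$ on $n$ qubits, the verifier samples $b\in\{0,1\}$ and $\sigma\in S_n$ uniformly at random, classically computes the canonical stabilizer description of $\sigma(|\phi_b\rangle)$, and sends this description to the prover, who replies with a bit $b'\in\{0,1\}$; the verifier accepts iff $b=b'$. Completeness against an unbounded prover is immediate in the non-isomorphic case, as the prover can search $S_n$ to identify which of the two inputs the received state is a permutation of. For soundness in the isomorphic case I would fix some $\tau\in S_n$ with $\tau(|\phi_0\rangle)=|\phi_1\rangle$ and observe that $\sigma\mapsto\sigma\tau$ is a bijection of $S_n$ under which the distribution of messages produced when $b=0$ is carried exactly onto that produced when $b=1$, so no prover can guess $b$ with probability better than $1/2$. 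Honest-verifier zero knowledge then follows from the natural simulator that samples $(b,\sigma)$ uniformly, computes the canonical description of $\sigma(|\phi_b\rangle)$, and outputs the pair (description, $b$); in YES instances this reproduces the honest transcript distribution exactly.

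The main point that requires care, and essentially the only nontrivial ingredient, is that the canonical form used by the verifier must depend only on the underlying stabilizer state, not on the particular generating set appearing in the input. Without this, the two conditional distributions on the verifier's message in the soundness argument would agree only as distributions over quantum states and not as distributions over bit strings, and both soundness and the perfect simulation would break. I would handle this by invoking the canonical tableau construction of Aaronson and Gottesman, equivalently a reduced row-echelon form for the stabilizer generators in the symplectic representation of the Pauli group, which is computable in classical polynomial time and is a function of the state alone, so all three properties above go through unchanged.
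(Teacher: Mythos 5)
Your protocol is structurally the same as the paper's (random bit $j$ and random permutation, send a classical description of the permuted stabilizer state, prover guesses the bit; a coset bijection gives soundness; the trivial simulator gives honest-verifier zero knowledge), but there is a genuine gap in how your verifier obtains that classical description. In the paper, an instance of \textsc{StabilizerStateIsomorphism} is \emph{not} given as a classical stabilizer description: it is a pair of general quantum circuits $Q_{\psi_0},Q_{\psi_1}$ that are merely promised to prepare stabilizer states, and these circuits need not be Clifford circuits, so a classical polynomial-time verifier has no evident way to ``classically compute the canonical stabilizer description of $\sigma(|\phi_b\rangle)$''. That step is precisely the nontrivial ingredient: the paper keeps the \textsc{StateNonIsomorphism} protocol intact except that the verifier --- a quantum polynomial-time machine, which \cclass{QCSZK} allows --- prepares $O(n)$ copies of $P_\sigma|\psi_j\rangle$ from the input circuits and runs Montanaro's algorithm (Theorem \ref{theorem:classicaldescription}) to extract generators of the stabilizer group, sending only this classical string. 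Your worry that the transmitted description must depend only on the underlying state and not on its presentation in the input is well placed, but it is handled automatically in this route, since the description is produced by measurements on copies of the state itself, so its distribution is a function of the state alone (up to the algorithm's $1/\exp(n)$ failure probability, which only perturbs completeness and the simulation negligibly); your canonical-tableau normalization is a fine alternative, but it is only available \emph{after} such a quantum extraction step, so the verifier cannot be taken classical as you claim --- fortunately only the \emph{communication} needs to be classical for \cclass{QCSZK}.

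A second, smaller issue: the problem is defined relative to a permutation group $G\le\mathfrak{S}_n$ specified by generators, whereas your verifier samples $\sigma$ uniformly from all of $\mathfrak{S}_n$. Completeness then fails on instances where the states are non-isomorphic under $G$ yet isomorphic under some permutation outside $G$: the two message distributions coincide and even an unbounded prover cannot recover $b$. The fix is the one used in the paper: sample $\sigma$ uniformly from $G$ itself, which is possible in polynomial time after computing a base and strong generating set via the Schreier--Sims algorithm; your coset argument survives unchanged because in the isomorphic case the witness $\tau$ lies in $G$, so $\sigma\mapsto\sigma\tau$ is a bijection of $G$.
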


Finally, we consider the state isomorphism problem for mixed quantum states. We show that this problem is QSZK-hard by reduction from the QSZK-complete problem of determining if a mixed state is product or separable. 
\begin{theorem}
\label{theorem:msiqszkhardNICE}
$(\epsilon,1-\epsilon)$-\textsc{MixedStateIsomorphism} is \cclass{QSZK}-hard.
\end{theorem}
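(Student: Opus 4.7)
The plan is to reduce from the \cclass{QSZK}-complete problem of deciding whether a bipartite mixed state is close to a product state or far in trace distance from every product state; call this problem \textsc{Product}. An instance is a description (via a quantum circuit with ancillas) of a mixed state $\rho_{AB}$ on two $m$-qubit registers $A$ and $B$, with the promise that either $\rho_{AB}$ is a product state $\rho_A \otimes \rho_B$, or $\rho_{AB}$ is $(1-\epsilon)$-far in trace distance from every product state on $AB$.

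Given such an instance, I would build the $(\epsilon,1-\epsilon)$-\textsc{MixedStateIsomorphism} instance by setting $\sigma_1 := \rho_{AB}$ and $\sigma_2 := \rho_A \otimes \rho_B$, where $\rho_A = \mathrm{tr}_B\,\rho_{AB}$ and $\rho_B = \mathrm{tr}_A\,\rho_{AB}$, and choosing the permutation group $G$ to be the trivial group generated by the identity. A circuit preparing $\sigma_2$ is obtained from the given circuit for $\rho_{AB}$ by running two independent copies and tracing out the $B$-register of one and the $A$-register of the other, which is a polynomial-time transformation. Since both $\sigma_1$ and $\sigma_2$ live on the same $2m$-qubit register structure, they form a valid input for \textsc{MixedStateIsomorphism}.

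Correctness is immediate and I would handle it next. In the YES case of \textsc{Product} we have $\sigma_1 = \sigma_2$ exactly, so the identity (the only element of $G$) witnesses $\|\sigma_1-\sigma_2\|_1 = 0 \leq \epsilon$. In the NO case, the promise that $\rho_{AB}$ is $(1-\epsilon)$-far in trace distance from every product state implies in particular that it is $(1-\epsilon)$-far from $\sigma_2 = \rho_A \otimes \rho_B$; since $G$ is trivial, no permutation can reduce the distance, so $\|\pi(\sigma_1)-\sigma_2\|_1 \geq 1-\epsilon$ for every $\pi \in G$.

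The one place where I expect to spend effort is calibrating the promise parameters: the standard statement of the \cclass{QSZK}-complete product-state problem may come with weaker completeness/soundness parameters than $(\epsilon,1-\epsilon)$, and I would have to invoke the polarization lemma for \cclass{QSZK} to amplify the gap before performing the reduction. A minor secondary concern is whether the formal definition of \textsc{MixedStateIsomorphism} permits the trivial group as an input; if not, the cosmetic fix is to append two fresh ancilla qubits in the state $|00\rangle\langle 00|$ to each of $\sigma_1, \sigma_2$ and take $G = \langle \tau \rangle$ where $\tau$ swaps these ancillas. This leaves both states invariant under $G$ and preserves the YES/NO analysis verbatim.
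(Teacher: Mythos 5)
Your plan is in the right spirit (compare the input state with the product of its own marginals), but two concrete gaps remain. First, the problem you reduce from is not the one known to be \cclass{QSZK}-hard. The hardness result available (Gutoski et al., as used in the paper) is for $(\alpha,\beta)$-\textsc{ProductState} in which YES instances are only promised to be $\alpha$-close to \emph{some} product state -- not exactly product -- and the strong parameters $\alpha=\epsilon$, $\beta=1-\epsilon$ with $\epsilon$ exponentially small are established for the \emph{multipartite} (one qubit per party) version. Your correctness step ``in the YES case $\sigma_1=\sigma_2$ exactly'' leans on exactness, so as written it does not apply to the problem you can actually cite. The missing ingredient is exactly the lemma the paper invokes: if $\rho$ is $\alpha$-close to some product state, then it is $(n+1)\alpha$-close to the product of its marginals $\rho_1\otimes\cdots\otimes\rho_n$ (Gutoski et al., Lemma 7.4; in your bipartite setting a triangle-inequality argument gives a constant-factor blowup), after which one must check that the inflated completeness parameter still fits the $(\epsilon,1-\epsilon)$ promise -- which works because $\epsilon$ may be taken exponentially small. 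Relatedly, your fallback of ``invoke the polarization lemma'' does not obviously help: polarization is a statement about state distinguishability, and there is no off-the-shelf polarization for distance from the set of product states, so the bipartite product problem with farness $1-\epsilon$ (let alone with an exact-product YES case) is not simply quotable.

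Second, even once repaired, your reduction only produces degenerate instances: with $G$ trivial, \textsc{MixedStateIsomorphism} collapses to the \cclass{QSZK}-complete state-closeness problem, so you could equally have reduced from \textsc{StateDistinguishability} in one line. That does establish the literal statement of the theorem, but it says nothing about instances where the permutation structure matters. The paper instead reduces the multipartite $(\epsilon,1-\epsilon)$-\textsc{ProductState} problem to $\mathfrak{S}_n$-\textsc{MixedStateIsomorphism} (Theorem \ref{theorem:msiqszk}), building $\rho'=\rho_1\otimes\cdots\otimes\rho_n$ from $n$ copies of the input circuit; the NO case then survives \emph{every} permutation because $P_\sigma\rho' P_\sigma^\dagger$ is still a product state. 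This yields the stronger conclusion that the problem remains \cclass{QSZK}-hard even when the group is all of $\mathfrak{S}_n$, which your trivial-group reduction does not give.
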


While these state isomorphism problems all have classical certificates, we have been able to demonstrate that the complexity of each problem depends precisely on the inherent computational difficulty of working with the input states. Stabilizer states form one end of the spectrum: with a polynomial number of measurements a classical description can be produced. The other extreme is the mixed states, these are so computationally difficult to work with that it is not clear that \textsc{MixedStateIsomorphism} even belongs in \cclass{QMA}; even the problem of testing equivalence of two such states is \cclass{QSZK}-complete (see \cite{qszk}). Between these two extremes we have \textsc{StateIsomorphism}. While such states can be efficiently processed by a quantum circuit, and isomorphism can be certified classically, the analysis in Section \ref{section:interactiveproofsforquantumstateisomorphism} uncovers an interesting caveat. It seems that the ability to communicate quantum states is still required when we wish to check \emph{non}-isomorphism by interacting with a prover, or perhaps even to certify isomorphism with statistical zero knowledge. We thus draw attention to the following open question: can our protocols be modified to use exclusively classical communication?

The fact that an efficient quantum algorithm for \textsc{StateIsomorphism} would also yield one for \textsc{GraphIsomorphism}, combined with
Corollary \ref{corollary:qcma}, gives weight to the idea that this problem can be thought of as a candidate for a \cclass{QCMA}-intermediate problem. The fact that there are problems ``in between'' \cclass{BQP} and \cclass{QCMA}, and furthermore, that such problems are obtained by generalising \textsc{StringIsomorphism} suggests an interesting parallel between the classical and quantum classes.

In Section \ref{section:preliminariesanddefinitions} we give an overview of the tools and notation we will use for the rest of the paper. We also define the key problems and complexity classes we will be working with and prove some initial results that we build on later. In Section \ref{section:interactiveproofsforquantumstateisomorphism} we demonstrate quantum interactive proof systems for the \textsc{StateIsomorphism} problems. In Section \ref{section:aquantumpolynomialhierarchy} we define a notion of a quantum polynomial hierarchy, and prove the hierarchy collapse results.

\section{Preliminaries and definitions}
\label{section:preliminariesanddefinitions}

Recall that quantum states are represented by unit trace positive semi-definite operators $\rho$ on a Hilbert space $\mathcal{H}$ called the \emph{state space} of the system. A state is \emph{pure} if $\rho^2=\rho$. Otherwise, we say that the state is \emph{mixed}. 
By definition then, for any pure state $\rho$ on $\mathcal{H}$ we have that $\rho=|\psi\rangle\langle\psi|$ for some unit vector $|\psi\rangle\in\mathcal{H}$, and we refer to pure states by their corresponding \emph{state vector} $|\psi\rangle$ (which is unique up to multiplication by a phase).
Mixed states are convex combinations of the outer products of some set of state vectors
$
\rho=\sum_{i} p_i|\psi_i\rangle\langle\psi_i|.
$
In what follows we refer to the Hilbert space $\mathbb{C}^2$ by $\mathcal{H}_2$. Recall that an $n$-qubit pure state $|\psi\rangle\in\mathcal{H}_2^{\otimes n}$ is \emph{product} if
$|\psi\rangle=|\psi_1\rangle\otimes\cdots\otimes|\psi_n\rangle
$
where $\otimes$ denotes tensor product and for all $i$, $|\psi_i\rangle\in\mathcal{H}_2$. For any bitstring $x_1\dots x_n\in\{0,1\}^n$, we say that $|x\rangle=\otimes_{i=1}^n|x_i\rangle$ is a computational basis state.

A useful measure of the distinguishability of a pair of quantum states is the \emph{trace distance}. Let $\rho,\sigma$ be quantum states with the same state space. Their trace distance is the quantity
$D(\rho,\sigma)=\frac{1}{2}\lVert \rho-\sigma\rVert_1,
$ where $\lVert M \rVert_1=\text{tr}[|M|]$ is the trace norm.

We say that a quantum circuit $Q$ \emph{accepts} a state $|\psi\rangle$ if measuring the first qubit of the state $Q|\psi\rangle$ in the computational basis yields outcome $1$. We say that the circuit \emph{rejects} the state otherwise.
Let $X$ be an index set. We say that a uniform family of quantum circuits $\{Q_x~:~x\in X\}$ is \emph{polynomial-time generated} if there exists a polynomial-time Turing machine that takes as input $x\in X$ and halts with an efficient description of the circuit $Q_x$ on its tape. Such a definition neatly captures the notion of an efficient quantum computation \cite{watrous}. 

\begin{figure}[h!]
\centering
\includegraphics[scale=1.0]{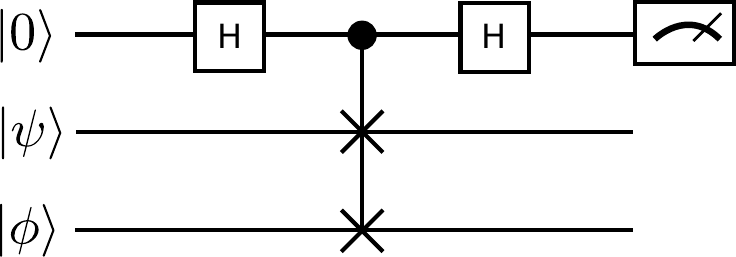}
\caption{The SWAP test circuit.}
\label{fig:swap}
\end{figure}

 We make use of a quantum circuit known as the \emph{SWAP test} \cite{swaptest}, illustrated in Figure \ref{fig:swap}. This circuit takes as input pure states $|\psi\rangle,|\phi\rangle$ and accepts (denoted $T(|\psi\rangle,|\phi\rangle)=1$) with probability $(1+|\langle\psi|\phi\rangle|^2)/2$. Note that $T(|\psi\rangle,|\phi\rangle)=1$ with probability $1$ if $|\psi\rangle=e^{i\tau}|\phi\rangle$ for some $\tau\in[-2\pi,2\pi]$, but is equal to $1$ with probability $1/2$ if they are orthogonal.
  The SWAP test can be therefore be used as an efficient quantum algorithm for testing if two quantum states are equivalent. In what follows we use some notation from complexity theory and formal language theory. In particular, if a problem $A$ is polynomial-time many-one reducible to a problem $B$ we denote this by $A \le_p B$. We denote by $\{0,1\}^n$ the set of bitstrings of length $n$, furthermore, $\{0,1\}^*$ denotes the set of all bitstrings. For a bitstring $x$, we denote by $|x|$ the length of the bitstring. We say that a function $f:\mathbb{N}\rightarrow [0,1]$ is \emph{negligible} if for every constant $c$ there exists $n_c$ such that for all $n\ge n_c$, $f(n)<1/n^c$. We use the shorthand $f(n)=\text{poly}(n)$ (\emph{resp.} $f(n)=\exp(n)$) to state that $f$ scales as a polynomially bounded (exponentially bounded) function in $n$.
  
  A decision problem is a set of bitstrings $A\subseteq\{0,1\}^*$. An algorithm is said to decide $A$ if for all $x\in\{0,1\}^*$ it outputs YES if $x\in A$ and NO otherwise. In quantum computational complexity it is useful to use the less well known notion of a \emph{promise problem} to allow for more control over problem instances. A promise problem is a pair of sets $(A_{\text{YES}},A_{\text{NO}})\subseteq\{0,1\}^*\times \{0,1\}^*$ such that $A_{\text{YES}}\cap A_{\text{NO}}=\emptyset$. An algorithm is said to decide $(A_{\text{YES}},A_{\text{NO}})$ if for all $x\in A_{\text{YES}}$ it outputs YES and for all $x\in A_{\text{NO}}$ it outputs NO. Note that the algorithm is not required to do anything in the case where an input $x$ does not belong to $A_{\text{YES}}$ or $A_\text{NO}$.

\subsection{Quantum Merlin-Arthur, Quantum Arthur-Merlin}
For convenience, we give a number of definitions related to quantum generalisations of public coin proof systems. In particular, we focus on Quantum Arthur-Merlin (\cclass{QAM}) and Quantum Merlin-Arthur, the quantum versions of AM and MA respectively. We use the definitions in \cite{watrous,mw} as our guide.

\begin{definition}[\cclass{QMA}]
	A promise problem $A=\promprob{A}$ is in $\cclass{QMA}(a,b)$ for functions $a,b:\mathbb{N}\rightarrow[0,1]$ if there exists a polynomial-time generated uniform family of quantum circuits $\{V_{x}~:~x\in\{0,1\}^*\}$ and polynomially bounded $p:\mathbb{N}\rightarrow\mathbb{N}$ such that
	\begin{itemize}
	\item for all $x\in\promprobY{A}$ there exists $|\psi\rangle\in\mathcal{H}_2^{\otimes p(|x|)}$ such that
	\begin{align*}
	\pr[V_x \text{ accepts } |\psi\rangle] \ge a(|x|);
	\end{align*}
	\item for all $x\in\promprobN{A}$ and for all $|\psi\rangle\in\mathcal{H}_2^{\otimes p(|x|)}$,
		\begin{align*}
		\pr[V_x \text{ accepts } |\psi\rangle] \le b(|x|).
		\end{align*}
	\end{itemize}
\end{definition}
The class \cclass{QCMA} is defined in the same way, but with the restriction that the certificate $|\psi\rangle$ must be a computational basis state $|x\rangle$.

A \emph{\cclass{QAM} verification procedure} is a tuple $(V,m,s)$ where \begin{align*}V=\{V_{x,y}~:~x\in\{0,1\}^*,y\in\{0,1\}^{s(|x|)}\}\end{align*} is a uniform family of polynomial time generated quantum circuits, and $m,s:\mathbb{N}\rightarrow\mathbb{N}$ are polynomially bounded functions. Each circuit acts on $m(|x|)$ qubits sent by Merlin, and $k(|x|)$ qubits which correspond to Arthur's workspace. For all $x,y$, we say that $V_{x,y}$ accepts (\emph{resp.} rejects) a state $|\psi\rangle\in\mathcal{H}_2^{\otimes m(|x|)}$ if, upon measuring the first qubit of the state \begin{align*}V_{x,y} |\psi\rangle|0\rangle^{\otimes k(|x|)}\end{align*} in the standard basis, the outcome is `$1$' (\emph{resp.} `$0$').
\begin{definition}[\cclass{QAM}]
	A promise problem $A=\promprob{A}$ is in $\cclass{QAM}(a,b)$ for functions $a,b:\mathbb{N}\rightarrow[0,1]$ if there exists a \cclass{QAM} verification procedure $(V,m,s)$ such that
	\begin{itemize}
		\item for all $x\in \promprobY{A}$, there exists a collection of $m(|x|)$-qubit quantum states $\{|\psi_{y}\rangle\}$ such that
		\begin{align*}
		\frac{1}{2^{s(|x|)}}\sum_{y\in\{0,1\}^{s(|x|)}}\pr[V_{x,y} \text{ accepts } |\psi_y\rangle]\ge a(|x|);
		\end{align*}
		\item for all $x\in \promprobN{A}$, and for all collections of $m(|x|)$-qubit quantum states $\{|\psi_{y}\rangle\}$, it holds that
		\begin{align*}
		\frac{1}{2^{s(|x|)}}\sum_{y\in\{0,1\}^{s(|x|)}}\pr[V_{x,y} \text{ accepts } |\psi_y\rangle]\le b(|x|).
		\end{align*}
	\end{itemize}
\end{definition}
The class \cclass{QCAM} is defined in the same way but with the states $\{|\psi_y\rangle\}$ restricted to computational basis states. The class \cclass{QCMAM} is similar, but has an extra round of interaction.

\begin{definition}[\cclass{QCMAM}]
	A promise problem $A=\promprob{A}$ is in $\cclass{QCMAM}(a,b)$ for functions $a,b:\mathbb{N}\rightarrow[0,1]$ if there exists a $\cclass{QAM}$ verification procedure $(V,m,s)$ and a polynomially bounded function $p:\mathbb{N}\rightarrow\mathbb{N}$ such that
	\begin{itemize}
		\item for all $x\in \promprobY{A}$, there is a certificate bitstring $c\in\{0,1\}^{p(|x|)}$ and a collection of length $m(|x|)$ bitstrings $\{z^c_{y}\}$ such that
		\begin{align*}
		\frac{1}{2^{s(|x|)}}\sum_{y\in\{0,1\}^{s(|x|)}}\pr[V_{x,y} \text{ accepts } |c\rangle\otimes|z^c_y\rangle]\ge a(|x|);
		\end{align*}
		\item for all $x\in\promprobN{A}$, all certificate bitstrings $c\in\{0,1\}^{p(|x|)}$ and all collections of length $m(|x|)$ bitstrings $\{z^c_{y}\}$, it holds that
		\begin{align*}
		\frac{1}{2^{s(|x|)}}\sum_{y\in\{0,1\}^{s(|x|)}}\pr[V_{x,y} \text{ accepts } |c\rangle\otimes|z^c_y\rangle]\le b(|x|).
		\end{align*}
	\end{itemize}
\end{definition}

\subsection{Quantum interactive proofs and zero knowledge}
An interactive proof system consists of a \emph{verifier} and a \emph{prover}. The computationally unbounded prover attempts to convince the computationally limited verifier that a particular statement is true. A quantum interactive proof system is where the verifier is equipped with a quantum computer, and quantum information can be transferred between verifier and prover. Our formal definitions will follow those of Watrous \cite{qszk,watrous}.

A \emph{quantum verifier} is a polynomial time computable function $V$, where for each $x\in\{0,1\}^*$, $V(x)$ is an efficient classical description of a sequence of quantum circuits $V(x)_1,\dots,V(x)_{k(|x|)}$. Each circuit in the sequence acts on $v(|x|)$ qubits that make up the verifier's private workspace, and a buffer of $c(|x|)$ communication qubits that both verifier and prover have read/write access to.

A \emph{quantum prover} is a function $P$ where for each $x\in\{0,1\}^*$, $P(x)$ is a sequence of quantum circuits $P(x)_1,\dots P(x)_{l(|x|)}$. Each circuit in the sequence acts on $p(|x|)$ qubits that make up the prover's private workspace, and the $c(|x|)$ communication qubits that are shared with each verifier circuit. Note that no restrictions are placed on the circuits $P(x)$, since we wish the prover to be computationally unbounded. We say that a verifier $V$ and a prover $P$ are \emph{compatible} if all their circuits act on the same number of communication qubits, and if for all $x\in\{0,1\}^*$, $k(|x|)=\lfloor m(|x|)/2+1\rfloor$ and $l(|x|)=\lfloor m(|x|)/2+1/2\rfloor$, for some $m(|x|)$ which is taken to be the number of messages exchanged between the prover and verifier. We say that $(P,V)$ are a compatible $m$-message prover-verifier pair.

Given some compatible $m$-message prover-verifier pair $(P,V)$, we define the quantum circuit
\begin{align*}
(P(x),V(x)):=\begin{cases}V(x)_1\cdot P(x)_1\dots P(x)_{m(|x|)/2}\cdot V(x)_{m(|x|)/2+1}&\text{ if } m(|x|) \text{ is even,}\\
P(x)_1\cdot V(x)_1\dots P(x)_{(m(|x|)+1)/2}\cdot V(x)_{(m(|x|)+1)/2}&\text{ if } m(|x|) \text{ is odd.}
\end{cases}
\end{align*}
Let $q(|x|)=p(|x|)+c(|x|)+v(|x|)$. We say that $(P,V)$ accepts an input $x\in\{0,1\}^*$ if the result of measuring the verifier's first workspace qubit of the state
\begin{align*}
(P(x),V(x))|0^{q(|x|)}\rangle
\end{align*}
in the computational basis is $1$, and that it rejects the input if the measurement result is $0$.
\begin{definition}[$\cclass{QIP}(k)$]
Let $M=\promprob{M}$ be a promise problem, let $a,b:\mathbb{N}\rightarrow[0,1]$be functions and $k\in\mathbb{N}$. Then $M\in \cclass{QIP}(k)(a,b)$ if and only if
there exists a $k$-message verifier $V$ such that
\begin{itemize}
\item if $x\in \promprobY{M}$ then 
\begin{align*}
\max_{P}\left(\pr[(P,V) \text{ accepts } x]\right) \ge a(|x|),
\end{align*}
\item if $x\in \promprobN{M}$ then 
\begin{align*}
\max_P \left(\pr[(P,V) \text{ accepts } x]\right) \le b(|x|),
\end{align*}
where the maximisation is performed over all compatible $k$-message provers. We say that the pair $(P,V)$ is an interactive proof system for $M$.
\end{itemize}
\end{definition}
Let us now define what it means for a quantum interactive proof system to be \emph{statistical zero-knowledge}. Define the function 
\begin{align*}
\text{view}_{V,P}(x,j)=\text{tr}_P[(P(x),V(x))_j|0^{q(|x|)}\rangle\langle 0^{q(|x|)}|(P(x),V(x))_j^\dagger],
\end{align*}
where $(P(x),V(x))_j$ is the circuit obtained from running $(P(x),V(x))$ up to the $j^{\text{th}}$ message.
For some index set $X$, we say that a set of density operators $\{\rho_x~:~x\in X\}$ is \emph{polynomial-time preparable} if there exists a polynomial-time uniformly generated family of quantum circuits $\{Q_x~:~x\in X\}$, each with a designated set of output qubits, such that for all $x\in X$, the state of the output qubits after running $Q_x$ on a canonical initial state $|0\rangle^{\otimes n}$ is equal to $\rho_x$.
\begin{definition}[Honest Verifier Quantum Statistical Zero-Knowledge (HVQSZK)]
Let $M=\promprob{M}$ be a promise problem, let $a,b:\mathbb{N}\rightarrow[0,1]$ and $k:\mathbb{N}\rightarrow\mathbb{N}$ be functions. Then $M\in \cclass{HVQSZK}(k)(a,b)$ if and only if $M\in \cclass{QIP}(k)(a,b)$ with quantum interactive proof system $(P,V)$ such that there exists a polynomial-time preparable set of density operators $\{\sigma_{x,i}\}$ such that for all $x\in\{0,1\}^*$, if $x\in \promprobY{M}$ then
\begin{align*}
D(\sigma_{x,i},\text{\emph{view}}_{P,V}(x,i))\le \delta(|x|)
\end{align*}
for some negligible function $\delta$.
\end{definition}

It is known that the class of problems that have quantum statistical zero knowledge proof systems (QSZK) is equivalent to the class of problems that have honest verifier quantum statistical zero knowledge proof systems (HVQSZK) \cite{qszk}. Therefore, we refer to HVQSZK as QSZK, and only consider honest verifiers.

In the next section we give a formal definition of \textsc{StringIsomorphism}.

\subsection{Permutations and \textsc{StringIsomorphism}}
\label{subsection:permutationsandstringisomorphism}
Let $\Omega$ be a finite set. A bijection $\sigma:\Omega\rightarrow\Omega$ is called a \emph{permutation} of the set $\Omega$. 
 The set of all permutations of a finite set $\Omega$ forms a group under composition. This group is called the \emph{symmetric group}, and we denote it by $\mathfrak{S}(\Omega)$. For $x\in\Omega$ and $\sigma\in \mathfrak{S}(\Omega),$ we denote the image of $x$ under $\sigma$ by $\sigma(x)$.

A \emph{string} $\mathfrak{s}:\Omega\rightarrow\Sigma$ is an assignment of \emph{letters} from a finite set $\Sigma$ called an \emph{alphabet} to the elements of a finite \emph{index set} $\Omega$. Let $\mathfrak{s}:\Omega\rightarrow\Sigma$ be a string. The letters of $\mathfrak{s}$ are indexed by elements of the index set $\Omega$. The letter corresponding to $i\in\Omega$ is thus denoted by $\mathfrak{s}_{i}$. Let $\sigma\in \mathfrak{S}(\Omega)$ be a permutation. Then the action of $\sigma$ on $\mathfrak{s}$ is denoted by $\sigma(\mathfrak{s})$, and is a string such that for all $i\in\Omega$,
$\sigma(\mathfrak{s})_{i}=\mathfrak{s}_{\sigma(i)}.
$
In this paper we often deal with permutations of strings indexed by natural numbers. Hence, we denote the symmetric group  $\mathfrak{S}([n])$ by $\mathfrak{S}_n$, where $[n]:=\{1,\dots,n\}$. In what follows we denote the fact that a group $G$ is a subgroup of a group $H$ by $G\le H$. The following decision problem is related to \textsc{GraphIsomorphism}\cite{luks, babai}, and forms the basis of our work.

\begin{problem}
	\textsc{StringIsomorphism}\\
	\textit{Input:} \emph{Finite sets $\Omega,\Sigma$, a permutation group $G\le \mathfrak{S}(\Omega)$ specified by a set of generators, and strings $\mathfrak{s},\mathfrak{t}:\Omega\rightarrow\Sigma$.}\\
	\textit{Output:} \textsc{Yes} \emph{if and only if there exists $\sigma\in G$ such that} 
	$
	\sigma(\mathfrak{s})=\mathfrak{t}.$
\end{problem}

It is clear that \textsc{StringIsomorphism} is at least as hard as \textsc{GraphIsomorphism}: a polynomial time many-one reduction can be obtained from \textsc{GraphIsomorphism} by ``flattening'' the adjacency matrices of the graphs in question into bitstrings. The set of string permutations that correspond to graph isomorphisms form a proper subgroup of the full symmetric group. Indeed the  algorithm in \cite{babai} is actually an algorithm for \textsc{StringIsomorphism}, which solves \textsc{GraphIsomorphism} as a special case.
\subsection{Stabilizer states}
The Gottesmann-Knill theorem \cite{gknill} states that any quantum circuit made up of CNOT, Hadamard and phase gates along with single qubit measurements can be simulated in polynomial time by a classical algorithm. Such circuits are called stabilizer circuits, and any $n$-qubit quantum state $|\psi\rangle$ such that $|\psi\rangle=Q|0\rangle^{\otimes n}$ for a stabilizer circuit $Q$ is referred to as a \emph{stabilizer state}. 

Let $|\psi\rangle$ be an $n$-qubit state. A unitary $U$ is said to be a \emph{stabilizer} of $|\psi\rangle$ if $U|\psi\rangle=\pm 1|\psi\rangle$. The set of stabilizers of a state $|\psi\rangle$ forms a finite group under composition called the \emph{stabilizer group} of $|\psi\rangle$, denoted $\text{Stab}(|\psi\rangle)$.

The Pauli matrices are the unitaries
\begin{align*}
\sigma_{00}:=\begin{pmatrix}
1&0\\0&1\end{pmatrix},~\sigma_{01}:=\begin{pmatrix}0&1\\1&0
\end{pmatrix},~\sigma_{10}:=\begin{pmatrix}
1&0\\0&-1
\end{pmatrix},~\sigma_{11}:=\begin{pmatrix}
0&-i\\i&0
\end{pmatrix},
\end{align*}
which form a finite group $\mathcal{P}$ under composition called the \emph{single qubit Pauli group}. The $n$-qubit Pauli group $\mathcal{P}_n$ is the group with elements $\{(\pm 1)U_1\otimes \dots\otimes (\pm 1 )U_n~:~U_j\in \mathcal{P}\}\cup \{(\pm i)U_1\otimes \dots\otimes (\pm i)U_n~:~U_j\in \mathcal{P}\}$.

It is well known (\emph{c.f.} \cite{ag} Theorem 1) that an $n$-qubit stabilizer state $|\psi\rangle$ is uniquely determined by the finite group $S(|\psi\rangle):=\text{Stab}(|\psi\rangle)\cap \mathcal{P}_n$, of size $2^n$. Hence, $|\psi\rangle$ is determined by the $n=\log(2^n)$ elements of $\mathcal{P}_n$ that generate $S(|\psi\rangle)$. These elements each take $2n$ bits to specify the Pauli matrices in the tensor product, and an extra bit to specify the overall $\pm 1$ phase. This fact, along with the following theorem, means that given a polynomial number of copies of a stabilizer state $|\psi\rangle$, we can produce an efficient classical description of that state by means of the generators of $S(|\psi\rangle)$.
\begin{theorem}[Montanaro \cite{mont}, corollary of Theorem $1$]
\label{theorem:classicaldescription}
There exists a quantum algorithm with the following properties:
\begin{itemize}
\item Given access to $O(n)$ copies of an $n$-qubit stabilizer state $|\psi\rangle$, the algorithm outputs a bitstring describing a set of $n$-qubit Pauli operators $s_1,\dots, s_n\in \mathcal{P}_n$ such that $\langle s_1,\dots s_n\rangle = S(|\psi\rangle)$;
\item the algorithm halts after $O(n^3)$ classical time steps;
\item all collective measurements are performed over at most two copies of the state $|\psi\rangle$;
\item the algorithm succeeds with probability $1-1/\exp(n)$.
\end{itemize}
\end{theorem}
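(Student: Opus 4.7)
The strategy is to use Bell sampling on pairs of copies of $|\psi\rangle$ to draw uniformly from the stabilizer group $S(|\psi\rangle)$ modulo phase, and then reconstruct a generating set by linear algebra over $\mathbb{F}_2$.

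First I would establish the key sampling fact. A two-qubit Bell basis measurement yields one of four outcomes, which we label by pairs $(a,b)\in\{0,1\}^2$ identifying the single-qubit Pauli $\sigma_{ab}$. Performing this measurement in parallel on each of the $n$ matched qubit pairs of two copies of $|\psi\rangle$ returns a $2n$-bit string that labels a Pauli $P\in\mathcal{P}_n$ up to its overall phase. Using the expansion $|\psi\rangle\langle\psi|=2^{-n}\sum_{s\in S(|\psi\rangle)} s$ valid for any stabilizer state, together with the maximally entangled state identity $(U\otimes U^T)|M\rangle=|M\rangle$, a short calculation shows that the induced distribution on $\mathcal{P}_n$ modulo phase is supported exactly on the image of $S(|\psi\rangle)$, each of its $2^n$ elements occurring with probability $2^{-n}$.

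Second I would bound the number of samples needed. Identifying the equivalence classes of $\mathcal{P}_n$ modulo phase with $\mathbb{F}_2^{2n}$, the image of $S(|\psi\rangle)$ is an $n$-dimensional $\mathbb{F}_2$-subspace. Uniform samples from such a subspace fail to span it after $m$ trials with probability at most $\sum_{k=0}^{n-1}2^{k-m}\le 2^{n-m}$; choosing $m=2n$ makes this at most $2^{-n}$. So $O(n)$ Bell samples, consuming $O(n)$ copies of $|\psi\rangle$ grouped in pairs, suffice except with probability $1/\exp(n)$. Gaussian elimination over $\mathbb{F}_2$ on the resulting $m\times 2n$ sample matrix isolates $n$ linearly independent rows in $O(n^3)$ time, producing Paulis $s_1,\ldots,s_n$ that specify the tensor factors but not their $\pm 1$ signs. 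For each $s_i$ I would then measure the Hermitian observable $s_i$ on one fresh copy of $|\psi\rangle$; since $|\psi\rangle$ is a $\pm 1$-eigenstate of $s_i$, the outcome fixes the sign deterministically, and $n$ further copies resolve all phases without ever entangling more than one copy per measurement.

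The principal obstacle is the Bell sampling claim itself: one must verify carefully that the outcome distribution collapses precisely onto $S(|\psi\rangle)$ modulo phase with uniform weight. The argument depends on the Pauli-sum form of the stabilizer projector and requires attention to the difference between sampling from $|\psi\rangle\otimes|\psi\rangle$ and from $|\psi\rangle\otimes|\psi^*\rangle$, which can introduce extra signs for non-real stabilizer states; the remaining steps are routine $\mathbb{F}_2$ linear algebra together with a union bound.
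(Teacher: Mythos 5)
The paper itself offers no proof of this statement: it is imported directly from Montanaro (cited as a corollary of his Theorem 1), so the comparison can only be against that source, whose Bell-sampling route you are indeed following. However, your central sampling claim is wrong as stated, and this is a genuine gap rather than a detail to be "verified carefully." Measuring the $n$ matched qubit pairs of $|\psi\rangle\otimes|\psi\rangle$ in the Bell basis returns the Pauli $P$ (mod phase) with probability $|\langle\psi^{*}|P|\psi\rangle|^{2}/2^{n}$; the complex conjugate is unavoidable because $\langle\Phi^{+}|^{\otimes n}\,(|a\rangle\otimes|b\rangle)\propto\langle a^{*}|b\rangle$. For a stabilizer state one has $|\psi^{*}\rangle\propto\sigma_{c}|\psi\rangle$ for some (unknown) Pauli $\sigma_{c}$, so the outcome distribution is uniform on the coset $\sigma_{c}S(|\psi\rangle)$ modulo phase, not on $S(|\psi\rangle)$ itself. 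Concretely, for the single-qubit state $(|0\rangle+i|1\rangle)/\sqrt{2}$, stabilized by $Y$, Bell sampling on two copies returns $X$ or $Z$ each with probability $1/2$ and never $I$ or $Y$. Hence the "short calculation" you invoke cannot yield your stated conclusion for non-real stabilizer states: your samples need not lie in the $\mathbb{F}_2$-subspace you intend to span, and running Gaussian elimination on them reconstructs an affine shift of it, i.e.\ generators of the wrong group.

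The repair is exactly Montanaro's: combine the Bell samples in pairs. If $P_{1},P_{2}$ are independent uniform samples from the coset $\sigma_{c}S$ mod phase, then $P_{1}P_{2}$ is, up to phase, uniform on $S$ mod phase; equivalently, in $\mathbb{F}_2^{2n}$ subtract one fixed sample from all the others, turning your affine sample set into uniform samples of the subspace. This costs only a constant factor in copies, so your spanning bound ($m=O(n)$ samples fail with probability $2^{-\Omega(n)}$), the $O(n^{3})$ Gaussian elimination, and the $O(n)$ copy count all survive. Your sign-fixing step is fine — since $|\psi\rangle$ is a $\pm1$ eigenstate of each $s_{i}$ the measurement is deterministic, and because the $s_{i}$ commute you could even read off all $n$ signs from a single fresh copy — and all collective measurements remain on at most two copies, as required.
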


\subsection{Permutations of quantum states and isomorphism}
\label{subsection:quantumstateisomorphism}
Let $\sigma\in\mathfrak{S}_n$ be a permutation. Then the following is a unitary map acting on $n$-partite states that implements it as a permutation of the subsystems (see \emph{e.g.} \cite{aram})
\begin{align}
\label{eq:harrowop}
P_\sigma:=\sum_{i_1,\dots,i_n\in[d]}|i_{\sigma(1)}\dots i_{\sigma(n)}\rangle\langle i_1,\dots i_n|.
\end{align}
Note that $P_\sigma$ depends on the dimensions of the subsystems of the $n$-partite states on which it acts. Nevertheless, here we will only consider quantum states where each subsystem is a qubit.

The focus of this work is on a number of variations on the following promise problem, \textsc{StateIsomorphism}. 

In what follows, let $\mathcal{Q}_{m,n}$ for $m\ge n$ denote the set of all quantum circuits with $m$ input qubits and $n$ output qubits. In particular, $Q_{n,n}$ is the set of all pure state quantum circuits on $n$ qubits. Then, for $m>n$, $Q_{m,n}$ is the set of all mixed state circuits that can be obtained by discarding the last $m-n$ output qubits of the circuits in $Q_{m,m}$.

When we specify a circuit with a subscript label, such as  $Q_\psi\in\mathcal{Q}_{m,n}$, we do so to easily refer to the state of the output qubits when the circuit is applied to the state $|0\rangle^m$. In particular, when $m=n$ this is the pure state $|\psi\rangle\in\mathbb{C}^{2^n}$, and the mixed state $\psi$ acting on $\mathbb{C}^{2^n}$ otherwise.

\begin{problem}
	\emph{
\textsc{StateIsomorphism} (SI)\\
\textit{Input}: Efficient descriptions of quantum circuits $Q_{\psi_0}$ and $Q_{\psi_1}$ in $\mathcal{Q}_{n,n}$, a set of permutations $\{\tau_1,\dots \tau_k\}$ generating some permutation group $\langle\tau_1\dots \tau_k\rangle=:G\le \mathfrak{S}_n$, and a function $\epsilon:\mathbb{N}\rightarrow[0,1]$ such that $\epsilon(n)\ge 1/\text{poly}(n)$ for all $n$.
\begin{itemize}
\item[] \textsc{YES}: There exists a permutation $\sigma \in G$ such that
$
|\langle \psi_1|P_\sigma |\psi_0\rangle| = 1.
$
\item[] \textsc{NO}: For all permutations $\sigma \in G$, 
$
|\langle \psi_1|P_\sigma |\psi_0\rangle| \le \epsilon(n).
$\end{itemize}
}
\end{problem}
The next problem is a special case of the above, defined in terms of stabilizer states.
\begin{problem}
	\emph{
\textsc{StabilizerStateIsomorphism} (SSI)\\
\textit{Input}: Efficient descriptions of quantum circuits $Q_{\psi_0}$ and $Q_{\psi_1}$ in $\mathcal{Q}_{n,n}$ such that $|\psi_0\rangle$ and $|\psi_1\rangle$ are stabilizer states, a set of permutations $\{\tau_1,\dots \tau_k\}$ generating some permutation group $\langle\tau_1\dots \tau_k\rangle=:G\le \mathfrak{S}_n$, and $\epsilon:\mathbb{N}\rightarrow[0,1]$ such that $\epsilon(n)\ge 1/\text{poly}(n)$ for all $n$.
\begin{itemize}
\item[] \textsc{YES}: There exists a permutation $\sigma \in G$ such that
$
|\langle \psi_1|P_\sigma |\psi_0\rangle| = 1.
$
\item[] \textsc{NO}: For all permutations $\sigma \in G$, 
$
|\langle \psi_1|P_\sigma |\psi_0\rangle| \le \epsilon(n).
$\end{itemize}
}
\end{problem}
Finally, we consider the state isomorphism problem for mixed states.
\begin{problem}
	\emph{
$(\epsilon,1-\epsilon)$-\textsc{MixedStateIsomorphism} (MSI)\\
\textit{Input}: Efficient descriptions of quantum circuits $Q_{\rho_0}$ and $Q_{\rho_1}$ in $\mathcal{Q}_{2n,n}$, a set of permutations $\{\tau_1,\dots \tau_k\}$ generating some permutation group $\langle\tau_1\dots \tau_k\rangle=:G\le \mathfrak{S}_n$, and $\epsilon:\mathbb{N}\rightarrow[0,1]$. 
\begin{itemize}
\item[] \textsc{YES}: There exists a permutation $\sigma \in G$ such that
$
D(P_\sigma\rho_0 P_\sigma^\dagger,\rho_1) \le \epsilon(n).
$
\item[] \textsc{NO}: For all permutations $\sigma \in G$, 
$
D(P_\sigma\rho_0 P_\sigma^\dagger,\rho_1) \ge 1-\epsilon(n).
$
\end{itemize}
}
\end{problem}
We also consider the above problems where the permutation group specified is equal to the symmetric group $G=\mathfrak{S}_n$. We denote these problems with the prefix $\mathfrak{S}_n$, for example, $\mathfrak{S}_n\text{-SI}$. It is clear that $\textsc{SSI}\le_p \textsc{SI}\le_p\textsc{MSI}$. We now show that SI is in \cclass{QCMA}.
\begin{proposition}
	$\textsc{StateIsomorphism}\in \cclass{QCMA}$.
\end{proposition}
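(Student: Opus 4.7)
The plan is to give a \cclass{QCMA} verification procedure whose classical certificate is the witnessing permutation $\sigma\in G$ itself, written down explicitly as a function $[n]\to[n]$ using $O(n\log n)$ bits. The verifier then needs to check exactly the two conditions that make $\sigma$ a witness: that $\sigma$ lies in the group $G$ specified by the generators $\tau_1,\dots,\tau_k$, and that $P_\sigma$ sends $|\psi_0\rangle$ to $|\psi_1\rangle$ up to global phase.

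First I would specify the verifier. Given the candidate certificate $\sigma$, it (i) runs a classical polynomial-time algorithm for membership in a permutation group specified by generators (Schreier--Sims, or Furst--Hopcroft--Luks) on input $(\tau_1,\dots,\tau_k,\sigma)$, rejecting if $\sigma\notin G$; then (ii) prepares $|\psi_0\rangle\otimes|\psi_1\rangle$ by running $Q_{\psi_0}$ and $Q_{\psi_1}$ on fresh $|0\rangle^{\otimes n}$ registers, applies $P_\sigma$ to the first register (which factorises into $O(n)$ two-qubit SWAP gates directly from its expression in~(\ref{eq:harrowop})), and runs the SWAP test on the two resulting $n$-qubit registers, accepting iff the SWAP test accepts.

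The completeness/soundness analysis is then immediate from the recalled property of the SWAP test. In a YES instance the honest prover sends a witnessing $\sigma$, step~(i) passes, and the SWAP test accepts with probability $(1+|\langle\psi_1|P_\sigma|\psi_0\rangle|^2)/2 = 1$. In a NO instance, for every certificate $\sigma$ either step~(i) rejects (when $\sigma\notin G$) or else $|\langle\psi_1|P_\sigma|\psi_0\rangle|\le\epsilon(n)$, in which case the SWAP test accepts with probability at most $(1+\epsilon(n)^2)/2$. The resulting completeness--soundness gap $(1-\epsilon(n)^2)/2$ is at least $1/\text{poly}(n)$ whenever $1-\epsilon(n)$ is, so polynomially many independent parallel repetitions of step~(ii) with a majority-vote rule amplify the acceptance probabilities into the standard \cclass{QCMA} window.

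I do not expect a serious obstacle. The only non-routine ingredient is the classical polynomial-time algorithm for membership in a permutation group given by generators, which is well known in computational group theory; everything else---preparing the states from the given quantum circuits, implementing $P_\sigma$ as a sequence of qubit swaps, applying the SWAP test recalled in Section~\ref{section:preliminariesanddefinitions}, and doing standard \cclass{QCMA} amplification---is essentially bookkeeping. The only point that needs care in writing up is to state explicitly the parameter regime for $\epsilon$ in which the amplified gap is constant.
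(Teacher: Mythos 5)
Your proposal is correct and follows essentially the same route as the paper's own proof: the certificate is the permutation $\sigma$, membership in $G$ is checked classically via Schreier--Sims-type permutation-group algorithms, and isomorphism is checked by applying $P_\sigma$ and running the SWAP test, with the completeness/soundness gap coming from the promise on $\epsilon(n)$. Your added remarks on decomposing $P_\sigma$ into qubit swaps and on amplifying the gap are just explicit bookkeeping of what the paper leaves implicit, so there is nothing further to fix.
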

\begin{proof}
In the case of a YES instance, there exists $\sigma\in G$ such that $|\langle \psi_1|P_\sigma|\psi_0\rangle|=1$. The latter equality can be verified by means of a SWAP-test on the states $P_\sigma|\psi_0\rangle$ and $|\psi_1\rangle$, which by definition will accept with probability equal to $1$. Since the states $|\psi_0\rangle$ and $|\psi_1\rangle$ are given as an efficient classical descriptions of quantum circuits that will prepare them, this verification can be performed in quantum polynomial time. Furthermore, there exists an efficient classical description of the permutation $\sigma$ in terms of the generators of the group specified in the input, each of which can be described via their permutation matrices. The unitary $P_\sigma$ can be implemented efficiently by Arthur given the description of $\sigma$. 

Determining membership/non-membership of some permutation $\sigma\in\mathfrak{S}_n$ in the permutation group $G\le \mathfrak{S}_n$ specified by the set of generators $\{\tau_1,\dots \tau_k\}$ can be verified in classical polynomial time by utilizing standard techniques from computational group theory. In particular, since we are considering permutation groups we can use the Schreier-Sims algorithm to obtain a base and a strong generating set for $G$ in polynomial time from $\{\tau_1,\dots,\tau_k\}$. These new objects can then be used to efficiently verify membership in $G$ \cite{sims, FHL, luks2}.

In the case that the states are not isomorphic, we have by definition that for all permutations $\sigma\in G$, $|\langle \psi_1|P_\sigma|\psi_0\rangle|\le \epsilon(n)$, which can again be verified by using the SWAP-test, which will accept the states with probability at most $1/2+\epsilon(n)$. 
\end{proof}

It is not clear if MSI is in \cclass{QCMA}, or even in \cclass{QMA}. While the isomorphism $\sigma$ can still be specified efficiently classically, it is not known if there exists an efficient quantum circuit for testing if two mixed states are close in trace distance. In fact, this problem is known as the \textsc{StateDistinguishability} problem, and is QSZK-complete \cite{qszk}. 

There exists a polynomial-time many-one reduction from \textsc{GraphIsomorphism} to \textsc{SSI}, indeed it is identical to the reduction from \textsc{GraphIsomorphism} to \textsc{StringIsomorphism}. \textsc{SSI} is in turn trivially reducible to the isomorphism problems for pure and mixed states respectively. These problems are therefore at least as hard as \textsc{GraphIsomorphism}. Interestingly however, there also exists a reduction from \textsc{GraphIsomorphism} to a restricted form of \textsc{SI} where the permutation group $G$ is equal to the full symmetric group $\mathfrak{S}_n$ (as stated earlier, we refer to this problem as $\mathfrak{S}_n$-$\textsc{StateIsomorphism}$). In order to demonstrate this, we require a family of quantum states referred to as \emph{graph states} \cite{graphstates}. Let $G=(V,E)$ be an $n$-vertex graph. For each vertex $v\in V$, define the observable
$
K^{(v)}:=\sigma_x^{(v)}\prod_{w\in N(v)}\sigma_z^{(w)}
$
where $N(v)$ is the neighborhood of $v$, and $\sigma_i^{(j)}$ denotes the $n$-qubit operator consisting of Pauli $\sigma_i$ applied to the $j^{\text{th}}$ qubit and identity on the rest. The graph state $|G\rangle$ is defined to be the state stabilized by the set $S_G:=\{K^{(v)}~:~v\in V\}$, that is,
$
K^{(v)}|G\rangle=|G\rangle
$
for all $v\in V$. Since the stabilizers of a graph state $|G\rangle$ are all elements of the $|V|$ qubit Pauli group, graph states are stabilizer states, and the following theorem provides an upper bound on the overlap of non-equal graph states.

\begin{theorem}[Aaronson-Gottesmann  \cite{ag}. See also \cite{gmc}, Theorem 8]
	\label{theorem:agottstab}
	Let $|\psi\rangle,|\phi\rangle$ be non-orthogonal stabiliser states, and let $s$ be the minimum, taken over all sets of generators $\{P_1,\dots P_n\}$ for $S(|\psi\rangle)$ and $\{Q_1,\dots Q_n\}$ for $S(|\phi\rangle)$, of the number of $i$ values such that $P_i\neq Q_i$. Then $|\langle\psi|\phi\rangle|=2^{-s/2}$.
\end{theorem}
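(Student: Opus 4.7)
The plan is to compute $|\langle\psi|\phi\rangle|^2$ directly from the standard expansion of a stabilizer projector and then convert the answer, which will naturally be expressed in terms of $|S(|\psi\rangle)\cap S(|\phi\rangle)|$, into a statement about generators.

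First I would use the fact that a stabilizer state $|\psi\rangle$ with $n$-element stabilizer generating set is the unique $+1$ eigenvector of all the generators, so the projector decomposes as
\begin{equation*}
|\psi\rangle\langle\psi|=\prod_{i=1}^{n}\frac{I+P_i}{2}=\frac{1}{2^n}\sum_{P\in S(|\psi\rangle)}P,
\end{equation*}
and similarly for $|\phi\rangle\langle\phi|$ with the $Q_i$. Expanding $|\langle\psi|\phi\rangle|^2=\mathrm{tr}(|\psi\rangle\langle\psi|\,|\phi\rangle\langle\phi|)$ gives a double sum of traces $\mathrm{tr}(PQ)$ over $(P,Q)\in S(|\psi\rangle)\times S(|\phi\rangle)$. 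Since $P$ and $Q$ are signed $n$-qubit Paulis squaring to $I$, $\mathrm{tr}(PQ)$ vanishes unless $PQ=\pm I$, i.e.\ unless $P=\pm Q$, contributing $\pm 2^n$ in those cases.

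Next I would rule out the $-I$ contributions using the non-orthogonality hypothesis. If some $Q\in S(|\phi\rangle)$ satisfied $-Q\in S(|\psi\rangle)$, then $Q$ would fix $|\phi\rangle$ but act as $-1$ on $|\psi\rangle$, forcing $\langle\psi|\phi\rangle=\langle\psi|Q|\phi\rangle=-\langle\psi|\phi\rangle$ and hence orthogonality, a contradiction. Therefore only the $P=Q$ terms survive, yielding
\begin{equation*}
|\langle\psi|\phi\rangle|^2=\frac{|S(|\psi\rangle)\cap S(|\phi\rangle)|}{2^n}.
\end{equation*}

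Finally I would translate $|S(|\psi\rangle)\cap S(|\phi\rangle)|=2^{n-s}$ into the generator statement. Viewing $\mathcal{P}_n/\{\pm I,\pm iI\}$ as an $\mathbb{F}_2$-vector space of dimension $2n$, each $S(|\psi\rangle)$ lifts to an $n$-dimensional isotropic subspace, and the intersection has dimension $n-s^\star$ where $s^\star$ is exactly the minimum number of differing generators claimed: starting from a basis of the intersection and extending independently on each side produces generating sets differing in at most $s^\star$ slots, and conversely if fewer than $s^\star$ generators differed the intersection would have dimension strictly greater than $n-s^\star$. Combining this with the overlap formula gives $|\langle\psi|\phi\rangle|=2^{-s^\star/2}$.

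The only subtle step is the second one — handling the possible $-I$ contributions — since everything else is a mechanical computation or a dimension count; the rest of the argument is essentially the standard character-sum identity for stabilizer states.
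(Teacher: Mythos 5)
Your argument is correct, but note that the paper does not prove this statement at all: it is imported as a known result of Aaronson--Gottesman (see also Garcia--Markov--Cross, Theorem 8), so there is no internal proof to compare against. Your route is the standard and arguably cleanest one: expand $|\psi\rangle\langle\psi|=2^{-n}\sum_{P\in S(|\psi\rangle)}P$, compute $|\langle\psi|\phi\rangle|^2=\mathrm{tr}\bigl(|\psi\rangle\langle\psi|\,|\phi\rangle\langle\phi|\bigr)$ as a character sum, kill the $P=-Q$ terms using non-orthogonality, and convert $|S(|\psi\rangle)\cap S(|\phi\rangle)|=2^{n-s}$ into the generator-counting statement; this differs from the cited Garcia--Markov--Cross proof, which proceeds by normalizing the two generator lists into a matched canonical form rather than via the trace identity. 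Two small points deserve explicit care. First, in the $\mathbb{F}_2^{2n}$ picture the intersection of the two \emph{images} can a priori be larger than the image of the intersection of the \emph{groups}, namely when some unsigned Pauli occurs with sign $+$ in one group and $-$ in the other; your sign argument from the non-orthogonality hypothesis is exactly what rules this out, so you should invoke it again at that point rather than leave it implicit. Second, the converse half of the dimension count should be phrased as: if two generating sets agree in $n-s$ positions, those common generators are independent and generate a subgroup of $S(|\psi\rangle)\cap S(|\phi\rangle)$ of size $2^{n-s}$, so the minimum number of differing positions is at least $n-\dim\bigl(S(|\psi\rangle)\cap S(|\phi\rangle)\bigr)$; together with the extension-of-a-basis construction this pins the minimum to exactly $n-\dim$, which is what the final substitution needs. (You are also implicitly using the standard convention that $S(|\psi\rangle)$ consists of Pauli operators fixing $|\psi\rangle$ with eigenvalue $+1$; the paper's wording ``$U|\psi\rangle=\pm1|\psi\rangle$'' is a slip, as its own claim $|S(|\psi\rangle)|=2^n$ only holds under the standard convention, so your reading is the right one.)
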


We can now describe the reduction.
\begin{proposition}
$\textsc{GraphIsomorphism}\le_p \mathfrak{S}_n\text{-}\textsc{StateIsomorphism}$.
\end{proposition}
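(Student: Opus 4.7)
The plan is to map an instance $(G_1,G_2)$ of \textsc{GraphIsomorphism} with $n$ vertices to the $\mathfrak{S}_n\text{-}\textsc{StateIsomorphism}$ instance whose input states are the graph states $|G_1\rangle, |G_2\rangle$, together with the parameter $\epsilon(n)=1/\sqrt{2}$ (a constant, so certainly $\ge 1/\text{poly}(n)$ as required). Since $|G\rangle$ can be prepared from $|+\rangle^{\otimes n}$ by applying a $CZ$ along each edge of $G$, the circuits $Q_{|G_1\rangle}$ and $Q_{|G_2\rangle}$ are constructible from $G_1,G_2$ in polynomial time. This gives a polynomial-time many-one reduction, provided we establish the equivalence of YES/NO instances.

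The central observation is that for any $\sigma\in\mathfrak{S}_n$, the permutation unitary $P_\sigma$ acts on a graph state by relabeling vertices: $P_\sigma|G\rangle=|\sigma(G)\rangle$, where $\sigma(G)$ denotes the graph obtained from $G$ by permuting its vertex labels according to $\sigma$. This follows because permuting the qubit subsystems carries the stabilizer $K^{(v)}=\sigma_x^{(v)}\prod_{w\in N(v)}\sigma_z^{(w)}$ to $K^{(\sigma(v))}$ in the graph $\sigma(G)$, and graph states are uniquely determined by their stabilizer generators. Consequently $\langle G_2|P_\sigma|G_1\rangle=\langle G_2|\sigma(G_1)\rangle$.

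For the YES direction, if $G_1\cong G_2$ via an isomorphism $\tau$, then choosing $\sigma=\tau$ gives $\sigma(G_1)=G_2$, hence $|\langle G_2|P_\sigma|G_1\rangle|=1$. For the NO direction, if $G_1\not\cong G_2$ then for every $\sigma\in\mathfrak{S}_n$ the graphs $\sigma(G_1)$ and $G_2$ are distinct, so $|\sigma(G_1)\rangle$ and $|G_2\rangle$ are distinct stabilizer states. Theorem \ref{theorem:agottstab} of Aaronson--Gottesman then forces the overlap to be either $0$ (if the states are orthogonal) or $2^{-s/2}$ with $s\ge 1$ (since any two generating sets must differ in at least one position), giving $|\langle G_2|P_\sigma|G_1\rangle|\le 1/\sqrt{2}=\epsilon(n)$ uniformly in $\sigma$.

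The only step that requires genuine argument rather than bookkeeping is the lower bound $s\ge 1$ in the Aaronson--Gottesman bound for distinct graph states; this comes down to noting that the stabilizer group $S(|G\rangle)$ determines the graph (e.g.\ one reads off each neighbourhood from the $\sigma_z$-pattern of the canonical generator $K^{(v)}$), so no choice of generating sets for two distinct graph states can coincide entrywise. Everything else is either a direct consequence of the definition of $P_\sigma$ in (\ref{eq:harrowop}) or an efficient classical construction, so no hidden obstacle remains.
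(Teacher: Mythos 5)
Your reduction is correct and is essentially the paper's own argument: encode the graphs as graph states prepared by $CZ$ circuits, use the conjugation relation $K^{(\sigma(v))}=P_\sigma K^{(v)}P_\sigma^{T}$ so that permuting qubits permutes vertex labels, and invoke the Aaronson--Gottesman overlap bound to get $|\langle G_2|P_\sigma|G_1\rangle|\le 2^{-1/2}$ in the non-isomorphic case. Your explicit handling of the orthogonal case and of why $s\ge 1$ for distinct graph states only fills in details the paper leaves implicit.
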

\begin{proof}
Consider two $n$-vertex graphs $G$ and $H$. If $G=H$ then clearly $|\langle G|H\rangle|^2=1$ since $|G\rangle$ and $|H\rangle$ are the same state up to a global phase. Suppose $G\neq H$. Then necessarily $s>0$, so by Theorem \ref{theorem:agottstab} we have that $|\langle G|H\rangle|^2\le \frac{1}{2}$.
Consider a permutation $\sigma\in\mathfrak{S}_n$. Then for each $v\in V$, $K^{(\sigma(v))}=P_\sigma K^{(v)} P_\sigma^T$, so $|\langle\sigma(G)|P_\sigma|G\rangle|^2=1$. Explicitly, if $G\cong H$ then there exists a permutation of the vertices $\sigma$ such that $\sigma(G)=H$ and so $|\langle \sigma(G)|H\rangle|^2=|\langle G|P_\sigma^T |H\rangle|^2=1$. If $G\not\cong H$ then for all $\sigma$, $\langle G|P_\sigma^T |H\rangle|^2\le \frac{1}{2}$.

To complete the reduction we must show that for any graph $G=(V,E)$, a description of a quantum circuit that prepares $|G\rangle$ can be produced efficiently classically. This is trivial, an alternate definition of graph states \cite{graphstates} gives us that $|G\rangle=\Pi_{\{i,j\}\in E}CZ_{ij}|+\rangle^{\otimes |V|}$, where $CZ_{ij}$ is the controlled-$\sigma_z$ operator with qubit $i$ as control and $j$ as output.
\end{proof}

Therefore, the \textsc{StateIsomorphism} problem where no restriction is placed on the permutations is at least as hard as \textsc{GraphIsomorphism}. This is in stark contrast to the complexity of the corresponding classical problem, which is trivially in \cclass{P}: two bitstrings are isomorphic under $\mathfrak{S}_n$ if and only if they have the same Hamming weight, which is easily determined.

\section{Interactive proof systems}\label{section:interactiveproofsforquantumstateisomorphism}
In this section we will prove Theorem \ref{theorem:SIQSZK}. To do so, we will first demonstrate a quantum interactive proof system for \textsc{StateNonIsomorphism} (SNI) with two messages. We then show that this quantum interactive proof system can be made statistical zero knowledge. In order to prove the former, we will require the following lemma.

\begin{lemma}[Harrow-Lin-Montanaro \cite{hlm}, Lemma 12]
	\label{theorem:harrowetal}
	Given access to a sequence of unitaries $U_1,\dots, U_n$, along with their inverses $U_1^\dagger,\dots, U_n^\dagger$ and controlled implementations c-$U_1$,\dots ,c-$U_n$, as well as the ability to produce copies of a state $|\psi\rangle$ promised that one of the following cases holds:
	\begin{enumerate}
	\item For some $i$, $U_i|\psi\rangle=|\psi\rangle$;
		\item For all $i$, $|\langle\psi|U_i|\psi\rangle|\le 1-\delta$.
	\end{enumerate}
	Then there exists a quantum algorithm which distinguishes between these cases using $O(\log n /\delta)$ copies of $|\psi\rangle$, succeeding with probability at least $2/3$.
\end{lemma}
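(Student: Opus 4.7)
The plan is to build the distinguishing algorithm around the Hadamard test as the central primitive. For each $i$, one fresh copy of $|\psi\rangle$ together with one invocation of c-$U_i$ yields a classical bit that is $0$ with probability $(1+\text{Re}\langle\psi|U_i|\psi\rangle)/2$. In case $(1)$, the witness index $i^*$ makes this probability exactly $1$; in case $(2)$, the hypothesis $|\langle\psi|U_i|\psi\rangle|\le 1-\delta$ forces $\text{Re}\langle\psi|U_i|\psi\rangle \le 1-\delta$, so the outcome $1$ appears with probability at least $\delta/2$ for every $i$. This gives a clean per-index gap to exploit.

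As a warm-up and correctness check, I would first record the straightforward strategy: for each $i\in[n]$, run $T$ independent Hadamard tests on fresh copies of $|\psi\rangle$, and declare ``$i$ witnesses case $(1)$'' iff all $T$ outcomes equal $0$. A Chernoff bound puts the false-acceptance probability for any fixed $i$ in case $(2)$ at most $e^{-\Omega(\delta T)}$, so $T=\Theta((\log n)/\delta)$ together with a union bound over the $n$ values of $i$ pushes the total error below $1/3$; in case $(1)$, the index $i^*$ always survives. This version is already correct, but spends $O((n\log n)/\delta)$ copies of $|\psi\rangle$, which is too many by a factor of $n$.

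To sharpen the copy count down to the claimed $O((\log n)/\delta)$, I would exploit the fact that the Hadamard test is \emph{gentle} on states fixed by the tested unitary: if $U_i|\psi\rangle=|\psi\rangle$, then conditioning on outcome $0$ leaves the state exactly $|\psi\rangle$. This suggests maintaining a single working copy of $|\psi\rangle$, running a sequence of Hadamard tests with randomly chosen $i$, and drawing a fresh copy only when an outcome of $1$ forces a restart. In case $(2)$, every test refreshes with probability at least $\delta/2$, yielding a clear statistical signal (a high refresh rate) over $O((\log n)/\delta)$ rounds; in case $(1)$, tests with $U_{i^*}$ never disturb the working copy, so the refresh rate is strictly smaller. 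The main obstacle, and the technical heart of the argument, is controlling the back-action of ``wrong'' tests with $i\ne i^*$ on the working copy in case $(1)$: outcome $0$ projects the state away from $|\psi\rangle$ whenever $U_i|\psi\rangle\ne|\psi\rangle$, and this disturbance can accumulate across rounds. I expect to address this by invoking the sequential-measurement / gentle-measurement machinery of \cite{hlm}, which bounds the cumulative trace-distance damage from a long chain of near-identity measurements and ensures that the refresh-rate separation between the two cases survives over $O((\log n)/\delta)$ rounds with the required constant success probability.
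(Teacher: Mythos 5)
This statement is not proved in the paper at all: it is imported verbatim as Lemma 12 of Harrow--Lin--Montanaro \cite{hlm}, so there is no internal argument to compare against and your attempt has to stand on its own as a proof of the cited result. Judged that way, it has a genuine gap. Your warm-up (per-index Hadamard tests with a Chernoff and union bound) is correct but, as you note, uses $O((n\log n)/\delta)$ copies, so the entire burden of the lemma rests on the single-working-copy improvement, and that part does not work as described. The statistic you propose -- the refresh rate when the tested index is chosen at random -- does not separate the two cases: in case (1) the promise says nothing about the $n-1$ unitaries other than $U_{i^*}$, so with probability $1-1/n$ a random round tests a unitary that may disturb $|\psi\rangle$ just as badly as any unitary in case (2). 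The aggregate refresh rate can therefore be essentially the same in both cases, and the claimed ``strictly smaller'' refresh rate in case (1) is false in general. To extract the witness index one needs a per-index structure, e.g.\ running $\Theta(1/\delta)$ consecutive tests with the \emph{same} $i$ and looking for an index that survives all of them, combined with a quantitative bound on how much the surviving state has been damaged by earlier failed blocks so that copies can be reused across indices.

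That last ingredient is exactly where the second problem appears: you defer ``the technical heart'' to ``the sequential-measurement / gentle-measurement machinery of \cite{hlm}'', but the statement being proved \emph{is} \cite{hlm}, Lemma 12. Appealing to that paper's machinery without stating the specific disturbance bound you need (e.g.\ a quantum union bound controlling the trace-distance damage accumulated over a chain of almost-surely-accepting tests) and showing how it yields the $O((\log n)/\delta)$ copy count is circular in this context and leaves the argument incomplete. As written, the only fully justified algorithm in your proposal is the warm-up, which misses the claimed copy complexity by a factor of $n$.
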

We can now prove the following.
\begin{theorem}
\textsc{StateNonIsomorphism} is in \cclass{QIP}$(2)$.
\end{theorem}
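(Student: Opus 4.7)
My plan is to adapt the classical Graph Non-Isomorphism protocol to the quantum setting, with Lemma~\ref{theorem:harrowetal} powering the prover's analysis in the completeness case. In the two-message protocol the verifier samples $b\in\{0,1\}$ and $\sigma\in G$ uniformly at random (uniform sampling from a group presented by generators can be done in classical polynomial time using a strong generating set, exactly as in the proof that $\textsc{SI}\in\cclass{QCMA}$), then runs $Q_{\psi_b}$ followed by $P_\sigma$ to prepare $k=\text{poly}(n)$ copies of $|\phi\rangle:=P_\sigma|\psi_b\rangle$ and sends them to the prover. The prover replies with a bit $b'$ and the verifier accepts iff $b'=b$. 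Soundness (SI YES instance) is immediate: if $P_\tau|\psi_0\rangle=e^{i\theta}|\psi_1\rangle$ for some $\tau\in G$, the substitution $\sigma\mapsto\sigma\tau^{-1}$ in the average over $G$ shows that the mixed states $\rho_b^{(k)}:=\mathbb{E}_{\sigma\in G}(P_\sigma|\psi_b\rangle\langle\psi_b|P_\sigma^\dagger)^{\otimes k}$ for $b=0,1$ coincide, so Helstrom's theorem bounds any prover's probability of correctly guessing $b$ by $1/2$.

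For completeness (SI NO) the prover invokes Lemma~\ref{theorem:harrowetal}. For each $i\in\{0,1\}$ she prepares copies of the auxiliary state $|\Psi_i\rangle:=|\phi\rangle|\psi_i\rangle$ by pairing a received copy of $|\phi\rangle$ with a freshly prepared copy of $|\psi_i\rangle$ (which she produces via $Q_{\psi_i}$). She then runs HLM with the unitary family $U_\sigma:=\mathrm{SWAP}\cdot(P_\sigma\otimes P_\sigma^\dagger)$ indexed by $\sigma\in G$. A direct calculation gives $\langle\Psi_i|U_\sigma|\Psi_i\rangle=|\langle\psi_i|P_\sigma|\phi\rangle|^2$, with equality to $1$ precisely when $P_\sigma|\phi\rangle$ is a unit scalar multiple of $|\psi_i\rangle$. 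For $i=b$ the choice $\sigma=\sigma_{\mathrm{ver}}^{-1}$ witnesses the stabilizer case; for $i\neq b$, re-parametrising by $\sigma'=\sigma\sigma_{\mathrm{ver}}$ and using the SI NO promise yield $|\langle\psi_i|P_\sigma|\phi\rangle|^2\le\epsilon^2$ for every $\sigma\in G$. Lemma~\ref{theorem:harrowetal} thus applies with gap parameter $\delta=1-\epsilon^2\ge 1/\text{poly}(n)$ and index set of size $|G|\le n!$, succeeding with probability $\ge 2/3$ using $O(\log|G|/\delta)=O(n\log n/(1-\epsilon^2))$ copies of $|\Psi_i\rangle$; the verifier simply sends enough copies of $|\phi\rangle$ to supply both HLM invocations. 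The prover outputs the $i$ whose HLM invocation reports a stabilizer, randomising on ties, and a short case analysis gives overall acceptance probability at least $2/3$.

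The resulting $2/3$ versus $1/2$ gap is inverse polynomial and can be amplified by standard techniques for $\cclass{QIP}(2)$. The main conceptual obstacle is repackaging the isomorphism-search question into the stabilizer-search form that Harrow--Lin--Montanaro require; the essential trick is the $\mathrm{SWAP}$ in $U_\sigma$, which makes $\langle\Psi_i|U_\sigma|\Psi_i\rangle$ precisely the quantity $|\langle\psi_i|P_\sigma|\phi\rangle|^2$ that the SI promise controls. A related subtlety worth flagging is that the prover is unbounded, so she is free to implement the exponentially many $U_\sigma$ and their controlled versions and to run HLM over the whole $|G|$-sized index set; nothing in Lemma~\ref{theorem:harrowetal} obstructs this because its cost depends only logarithmically on the number of unitaries.
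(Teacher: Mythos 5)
Your proposal is correct and follows essentially the same route as the paper: the verifier sends polynomially many copies of $P_\sigma|\psi_b\rangle$ for random $b$ and $\sigma$, soundness in the isomorphic case follows because the group-averaged states coincide, and completeness uses the Harrow--Lin--Montanaro lemma with the family $\mathrm{SWAP}\cdot(P_\sigma\otimes P_\sigma^\dagger)$, exactly the paper's construction. The only (inessential) difference is that your prover runs HLM for both hypotheses $i\in\{0,1\}$ and decides by which one reports a stabilizer, whereas the paper's prover guesses $j'$ at random and runs HLM once; both give the same $2/3$ versus $1/2$ gap.
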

\begin{proof}
We will prove that the following constitutes a two message quantum interactive proof system for SNI.
\begin{mdframed}
\begin{enumerate}
\item (V) Uniformly at random, select $\sigma\in G$ and $j\in\{0,1\}$. Send the state $|\Psi\rangle^{\otimes k}$ to the prover, where $k=O(\log(|G|)/(1-\epsilon(n)))$ and $|\Psi\rangle=P_\sigma|\psi_j\rangle$.
\item (P) Send $j'\in\{0,1\}$ to the verifier.
\item (V) Accept if and only if $j'=j$.
\end{enumerate}
\end{mdframed}
Obtaining a uniformly random element from $G$ as in step $1$ can be achieved efficiently if the verifier is in possession of a base and a strong generating set for $G$. These can be obtained in polynomial time from any generating set of $G$ by using Schreier-Sims algorithm \cite{sims, FHL, luks2}.
 For a permutation $\pi\in G$, we define the $2n$ qubit circuit $U^{(j)}_\pi=\text{SWAP}\cdot (P_{\pi^{-1}}\otimes P_\pi)$, where the SWAP acts so as to swap the two $n$ qubit states, that is, $\text{SWAP}|\psi_0\rangle|\psi_1\rangle=|\psi_1\rangle|\psi_0\rangle$. Now consider the sets of quantum circuits $C^{(j)}_G=\{U^{(j)}_{\pi}~:~\pi\in G\}$ for $j\in\{0,1\}$, each of cardinality $|G|$. Since each circuit in $C^{(0)}_G\cup C^{(1)}_G$ is made up two permutations and a SWAP gate, each of their inverses can easily be obtained. Additionally, the controlled versions of these gates can be implemented via standard techniques.

Consider first the YES case. The $k=O(\log(|G|)/(1-\epsilon(n)))$ copies of $|\Psi\rangle$ enables the prover to determine $j$ with success probability at least $2/3$ in the following manner.
\begin{mdframed}
\begin{enumerate}
\item Uniformly at random, select $j'\in\{0,1\}$.
\item Prepare $k$ copies of the state $|\Psi\rangle|\psi_{j'}\rangle$
\item Use the HLM algorithm with the state $|\Psi\rangle|\psi_{j'}\rangle$ and the set of circuits $C^{(j')}_G$ as input. If the algorithm reports case $1$ then output $j'$, otherwise output $j'\oplus 1$.
\end{enumerate}
\end{mdframed}
Let us check that the HLM algorithm will work for our purposes. In the case that the prover's guess is correct and $j'=j$, we have that $|\Psi\rangle|\psi_{j'}\rangle = (P_\sigma\otimes I)|\psi_j\rangle|\psi_j\rangle$, and so  \begin{align*}U_{\sigma}(P_\sigma\otimes I)|\psi_j\rangle|\psi_j\rangle&=
\text{SWAP}\cdot (P_{\sigma^{-1}}\otimes P_{\sigma})\cdot(P_\sigma\otimes I)|\psi_j\rangle|\psi_j\rangle\\
&=\text{SWAP}\cdot(I\otimes P_\sigma)|\psi_j\rangle|\psi_j\rangle\\
&=|\Psi\rangle|\psi_j\rangle.
\end{align*}
This corresponds to case $1$ of Lemma \ref{theorem:harrowetal}.
If the prover's guess is incorrect $j'\neq j$ then for all $\pi\in G$
\begin{align*}
|\langle \Psi |\langle \psi_{j'} |U_\pi|\Psi\rangle|\psi_{j'}\rangle|&=|\langle \Psi |\langle \psi_{j'} |\text{SWAP}\cdot(P_{\pi^{-1}}\otimes P_{\pi})(P_\sigma\otimes I)|\psi_j\rangle|\psi_{j'}\rangle|\\
&=|\langle\Psi|\langle\psi_{j'}|(P_{\pi}\otimes P_{\pi^{-1}\cdot\sigma})|\psi_{j'}\rangle|\psi_{j}\rangle|\\
&\le |\langle \psi_{j}|P_{\sigma}^\dagger P_{\pi}|\psi_{j'}\rangle|
\cdot
|\langle \psi_{j'}| P_{\pi^{-1}\cdot \sigma}|\psi_{j}\rangle|\\
&\le \epsilon(n)^2,
\end{align*}
with the last inequality following from the fact that we are in the YES case: for all $\sigma\in G$, we have that $|\langle\psi_2|P_\sigma|\psi_1\rangle|\le a(n)$. This corresponds to case $2$ of Lemma \ref{theorem:harrowetal}. Therefore, the HLM algorithm allows the prover to determine if her guess was correct or not, with success probability at least $2/3$.

Consider now the NO case, where we have that for some $\sigma\in G$, $|\langle\psi_1|P_\sigma|\psi_2\rangle|=1$. To determine $j$ correctly, a cheating prover must be able to distinguish the mixed states $\rho_j=\frac{1}{|G|}\sum_{\pi\in G}\left(P_\pi|\psi_j\rangle\langle\psi_j|P_\pi^\dagger\right)^{\otimes k}$ correctly for $j\in\{1,2\}$, when given $k$ copies. However,
\begin{align*}
\lVert \rho_1-\rho_2\rVert_1 &= \frac{1}{|G|}\left\lVert \sum_{\pi\in G}P_\pi^{\otimes k}\left(|\psi_1\rangle\langle\psi_1|\right)^{\otimes k}P_\pi^{\dagger\otimes k}-\sum_{\pi\in G}P_\pi^{\otimes k}\left(|\psi_2\rangle\langle\psi_2|\right)^{\otimes k}P_\pi^{\dagger\otimes k}\right\rVert_1\\
&= \frac{1}{|G|}\left\lVert \sum_{\pi\in G}P_\pi^{\otimes k}P_\sigma^{\otimes k}\left(|\psi_1\rangle\langle\psi_1|\right)^{\otimes k}P_\sigma^{\dagger\otimes k}P_\pi^{\dagger\otimes k}-\sum_{\pi\in G}P_\pi^{\otimes k}\left(|\psi_2\rangle\langle\psi_2|\right)^{\otimes k}P_\pi^{\dagger\otimes k}\right\rVert_1\\
&= \frac{1}{|G|}\left\lVert \sum_{\pi\in G}P_\pi^{\otimes k}\left(|\psi_2\rangle\langle\psi_2|\right)^{\otimes k}P_\pi^{\dagger\otimes k}-\sum_{\pi\in G}P_\pi^{\otimes k}\left(|\psi_2\rangle\langle\psi_2|\right)^{\otimes k}P_\pi^{\dagger\otimes k}\right\rVert_1\\
&=0,
\end{align*}
so they are indistinguishable. Note that the fact that the prover has been given $k$ copies does not help, as the overlap is $0$. In this case, the probability that the prover can guess $j$ correctly is therefore equal to $1/2$.
\end{proof}

We can use a standard amplification argument to modify the above protocol so that it has negligible completeness error, which means that it can be made statistical zero knowledge. We prove this now.

\begin{theorem}
\textsc{StateNonIsomorphism} is in \emph{QSZK}.
\end{theorem}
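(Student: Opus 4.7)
The plan is to upgrade the two-message proof system for StateNonIsomorphism from the previous theorem into an honest-verifier quantum statistical zero-knowledge protocol by first amplifying the completeness error and then exhibiting an efficient simulator for each round of the honest verifier's view.

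For amplification, I would raise the number of copies sent in step~1 from $k = O(\log|G|/(1-\epsilon(n)))$ to $k' = O(n\cdot\log|G|/(1-\epsilon(n)))$ and have the honest prover execute $\Theta(n)$ independent runs of the HLM subroutine from Lemma~\ref{theorem:harrowetal}, returning the majority bit. A standard Chernoff bound applied to the $2/3$-success guarantee drives the honest prover's error on YES instances down to $2^{-\Omega(n)}$. Soundness is unaffected: the trace-distance calculation in the preceding proof already gave $\lVert\rho_1-\rho_2\rVert_1 = 0$ for every $k$, so a cheating prover on NO instances learns nothing about $j$ and still guesses correctly with probability at most $1/2$. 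Hence the amplified protocol remains a QIP(2) system, now with completeness $1-2^{-\Omega(n)}$ and soundness $1/2$.

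Next, I define the simulators. Set $\sigma_{x,0}$ equal to the all-zeros state on the verifier-and-communication registers, which matches $\text{view}_{V,P}(x,0)$ by definition. For the post-first-message view, take
\[\sigma_{x,1} := \frac{1}{2|G|}\sum_{\sigma\in G,\, j\in\{0,1\}} |\sigma,j\rangle\langle\sigma,j|_V \otimes \bigl(P_\sigma|\psi_j\rangle\langle\psi_j|P_\sigma^\dagger\bigr)^{\otimes k'},\]
where the first tensor factor records $(\sigma,j)$ in the verifier's private workspace and the second lives on the communication register. This is polynomial-time preparable because (a) a uniformly random $\sigma \in G$ can be produced classically using the Schreier-Sims algorithm as in the proof that SI is in QCMA, (b) the circuit $Q_{\psi_j}$ is given as part of the input, and (c) $P_\sigma$ admits an efficient quantum circuit given a description of $\sigma$. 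Moreover $\sigma_{x,1}$ coincides exactly with $\text{view}_{V,P}(x,1)$, since this is precisely how the honest verifier prepares its state after step~1. For the post-second-message view, set
\[\sigma_{x,2} := \frac{1}{2|G|}\sum_{\sigma,j}|\sigma,j\rangle\langle\sigma,j|_V \otimes |j\rangle\langle j|_C,\]
prepared identically to $\sigma_{x,1}$ but with the communication register overwritten by a classical copy of $j$.

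It remains to bound $D(\sigma_{x,2},\text{view}_{V,P}(x,2))$. The amplified honest prover measures the outcomes of its HLM runs and writes the resulting bit $j'$ onto the communication register from a classical auxiliary register, so after tracing out the prover's workspace the communication register is diagonal in the computational basis with $j' = j$ except with probability $2^{-\Omega(n)}$, uniformly in $(\sigma,j)$. Therefore $D(\sigma_{x,2},\text{view}_{V,P}(x,2)) \le \Pr[j'\neq j] \le 2^{-\Omega(n)}$, which is negligible. The main technical point to verify is the absence of coherent leakage in the prover's response; this follows because each HLM invocation terminates in a standard-basis measurement, so the communication register contains only a classical mixture of $|0\rangle$ and $|1\rangle$ with no residual off-diagonal terms entangling back to the verifier's workspace.
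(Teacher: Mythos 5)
Your proposal is correct and follows essentially the same route as the paper: amplify completeness by sending $k'=O(n\log|G|/(1-\epsilon(n)))$ copies so the honest prover can run HLM $\Theta(n)$ times and answer by majority vote (Chernoff bound giving $2^{-\Omega(n)}$ error, with soundness untouched since the two ensembles remain identical), and then simulate the honest verifier's views by sampling $(\sigma,j)$ and preparing the copies of $P_\sigma|\psi_j\rangle$ for the first message and supplying $j$ directly for the second, with the deviation bounded by the negligible completeness error. Your write-up is simply a more explicit rendering of the paper's simulator description.
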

\begin{proof}
We first show that the protocol above can be modified to have exponentially small completeness error. This allows us to show that the protocol is quantum statistical zero knowledge. 

First, the verifier sends the prover $k'=O(n\log(|G|)/(1-\epsilon(n)))$ copies of the state $|\Psi\rangle$. The prover can then use HLM $n$ times to guess $j$, responding with the value of $j$ that appears in $n/2$ or more of the trials. Let $X_1\dots X_n\in\{\text{`T'},\text{`F'}\}$ be the set of independent random variables corresponding to whether or not the prover guessed correctly on the $i^\text{th}$ repetition. By Lemma \ref{theorem:harrowetal}, we have that $\pr[X_i=\text{`T'}]\ge2/3$ and so 
\begin{align*}
\pr\left[\text{Prover guesses correctly}\right]&=1-\pr\left[\frac{1}{n}\sum_{i=1}^n X_i < 1/2\right]\\
&=1-\pr\left[\frac{1}{n}\sum_{i=1}^n X_i-2/3<-1/6\right]\\
&\ge 1-2^{-\Omega(n)}
\end{align*}
via the Chernoff bound (explicitly, for $p,q\in[0,1]$, we have that $\pr\left[\sum_{i=1}^n (X_i-p)/n<-q \right]<e^{-q^2n/2p(1-p)}$). Clearly, sending $k'$ copies of $|\Psi\rangle$ rather than $k$ gives no advantage to the prover, the trace distance between the mixed states $\rho_0$ and $\rho_1$ is still $0$ in the NO case.

What remains is to show that the protocol is statistical zero knowledge. This is easily obtained, and follows by similar reasoning to the protocol in \cite{qszk}: the view of the verifier after the first step can be obtained by the simulator by selecting $\sigma$ and $j$ then preparing $k'$ copies of the state $|\Psi\rangle$. The view of the verifier after the prover's response can be obtained by tracing out the message qubits and supplying the verifier with the value $j$. Since completeness error is exponentially small, the trace distance between the simulated view and the actual view is a negligible function.
\end{proof}

If we change (relax) the condition for the two states to be non isomorphic (NO instance) to: `There exists $\sigma \in G$ such that $|\langle\psi_2|P_\sigma|\psi_1\rangle|\geq b(n)$' then the distance between the two states $\rho_j=\frac{1}{|G|}\sum_{\pi\in G}\left(P_\pi|\psi_j\rangle\langle\psi_j|P_\pi^\dagger\right)^{\otimes k}$ for $j\in\{1,2\}$ is upper bounded by \begin{align*}
\lVert \rho_1-\rho_2\rVert_1
&=\frac{1}{|G|}\left\lVert \sum_{\pi\in G}\left(P_{\pi}\right)^{\otimes k}\left(P_\sigma|\psi_1\rangle\langle\psi_1|P_{\sigma}^\dagger - |\psi_2\rangle\langle\psi_2|\right)^{\otimes k} \left(P_\pi^\dagger\right)^{\otimes k}\right\rVert_1\\
&\leq \frac{1}{|G|}\sum_{\pi\in G} \left\lVert \left(P_{\pi}\right)^{\otimes k}\left(P_\sigma|\psi_1\rangle\langle\psi_1|P_{\sigma}^\dagger - |\psi_2\rangle\langle\psi_2|\right)^{\otimes k} \left(P_\pi^\dagger\right)^{\otimes k}\right\rVert_1\\
&= \left\lVert \left(P_\sigma|\psi_1\rangle\langle\psi_1|P_{\sigma}^\dagger - |\psi_2\rangle\langle\psi_2|\right)^{\otimes k}\right\rVert_1\\
&=2 \sqrt {1- \left|\langle\psi_1|P_{\sigma}^\dagger  |\psi_2\rangle\right|^{2 k}}\leq 2 \sqrt {1- \epsilon(n)^{2 k}}, \end{align*}
where first inequality is just triangular inequality, last inequality follows from the promise and last equality is just rewriting the trace distance for pure states in terms of their scalar product. Now, putting the value of $k=\frac {\log n}{1-a(n)}$, algebraic manipulations and using the fact that $\log(1-x)>-2x$ for all $x\in (0,1/2)$, we get, for any $b(n)\in (1/2,1)$,
\begin{align*}
\lVert \rho_1-\rho_2\rVert_1 
&= 2 \sqrt {1- b(n)^{\frac {2 \log n}{1-a(n)}}}= 2 \sqrt {1- n^{\frac {2\log {b(n)}}{1-a(n)} }}\\
&\leq  2 \sqrt {1- n^{\frac {-4(1-b(n))}{1-a(n)} }}.\end{align*}
\\
Then the maximal probability of distinguishing between these two states is upper bounded by
\begin{align*}
p\leq 1/2+\sqrt{1- n^{\frac {-4(1-b(n))}{1-a(n)} }}.
\end{align*}
We have thus proved Theorem \ref{theorem:SIQSZK}. Corollary \ref{corollary:qcma} follows easily: if SI was QCMA-complete then all QCMA problems would be reducible to it, and would belong in QSZK. 
While SI belongs in QCMA, the above protocol requires quantum communication. It is not clear if a similar protocol exists that uses classical communication only. In the next theorem we show that such a protocol exists for \textsc{StabilizerStateNonIsomorphism}, since stabilizer states can be described efficiently classically.
\begin{theorem}
\textsc{StabilizerStateNonIsomorphism} is in $\emph{QCSZK}$.
\end{theorem}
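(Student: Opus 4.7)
The plan is to mimic the $\cclass{QIP}(2)$ protocol for \textsc{StateNonIsomorphism}, replacing the quantum message $|\Psi\rangle^{\otimes k}$ by a \emph{classical} description of the stabilizer state $P_\sigma |\psi_j\rangle$. Theorem \ref{theorem:classicaldescription} makes this possible: from $O(n)$ copies of each $|\psi_i\rangle$ (prepared by running the input circuits $Q_{\psi_0}, Q_{\psi_1}$), the verifier can extract generators for the stabilizer groups $S(|\psi_0\rangle), S(|\psi_1\rangle)$. Since conjugation by $P_\sigma$ merely permutes the tensor factors of each Pauli generator, the generators of $S(P_\sigma|\psi_j\rangle)$ can then be computed in classical polynomial time and brought to a canonical form (for instance, reduced row echelon form on the symplectic bitstring representation, as in \cite{ag}) that depends only on the state and not on the route by which it was obtained.

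Concretely, after the Montanaro extraction, the verifier picks $j \in \{0,1\}$ and $\sigma \in G$ uniformly at random (using Schreier--Sims to sample $G$), computes the canonical description $s$ of $P_\sigma|\psi_j\rangle$, and sends $s$ to the prover. The unbounded prover returns the unique $j' \in \{0,1\}$ such that $s$ is the canonical description of some $P_\tau|\psi_{j'}\rangle$ with $\tau \in G$; the verifier accepts iff $j' = j$. For completeness (YES instances of SSNI), the NO promise of SSI gives $|\langle\psi_1|P_\tau|\psi_0\rangle| \le \epsilon(n) < 1$ for every $\tau$, so the $G$-orbits of $|\psi_0\rangle$ and $|\psi_1\rangle$ are disjoint as sets of stabilizer states (up to phase); the prover identifies $j$ except with the $1/\exp(n)$ Montanaro failure probability. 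For soundness (NO instances of SSNI), some $\sigma^* \in G$ satisfies $P_{\sigma^*}|\psi_0\rangle = |\psi_1\rangle$ up to phase, so $P_\sigma|\psi_1\rangle = P_{\sigma\sigma^*}|\psi_0\rangle$ has exactly the same distribution as $P_\sigma|\psi_0\rangle$ when $\sigma$ is uniform; hence $s$ is statistically independent of $j$, and any prover succeeds with probability at most $1/2$, which we amplify by parallel repetition.

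Zero knowledge follows the same pattern as the previous theorem: the simulator runs Montanaro on the input circuits, samples $(j,\sigma)$, computes $s$ canonically, and appends $j' = j$ as the prover's response. In the YES case of SSNI the honest view already has $j' = j$ with probability $1 - 1/\exp(n)$, so the simulated and honest views are within negligible trace distance. Since both the challenge $s$ and the response $j'$ are classical bitstrings, the protocol is in $\cclass{QCSZK}$.

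The main technical point I expect to be delicate is the \emph{canonicalization} of the stabilizer description. Montanaro's algorithm returns some generating set whose distribution may depend on the input in subtle ways, and naively conjugating by $P_\sigma$ would encode $\sigma$ in the specific shape of the generators, leaking information about $j$ in the YES case of SSI and breaking soundness. Insisting that only a state-dependent canonical normal form (on the symplectic representation) is transmitted is what ensures that $s$ reveals nothing beyond the identity of the state, which is all the completeness, soundness, and zero-knowledge arguments above use.
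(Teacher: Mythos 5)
Your proposal is correct and takes essentially the same approach as the paper: keep the QSZK protocol for \textsc{StateNonIsomorphism} and replace the quantum message by a classical stabilizer description obtained via Montanaro's algorithm (Theorem \ref{theorem:classicaldescription}), with the completeness, soundness, and zero-knowledge analysis carrying over unchanged. The only difference is that the paper runs Montanaro's algorithm directly on copies of $|\Psi\rangle = P_\sigma|\psi_j\rangle$ prepared from the input circuits, so the transmitted description's distribution depends only on the state itself and the canonicalization step you flag as delicate is not needed (though it is a valid way to repair your variant, which permutes a previously extracted description classically).
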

\begin{proof}
It suffices to show that the state $|\Psi\rangle$ in the protocol above can be communicated to the prover using classical communication only. We know from Theorem \ref{theorem:classicaldescription} that a classical description can be obtained efficiently from $O(n)$ copies of $|\Psi\rangle$. These copies can be prepared efficiently, since they are specified in the problem instance by quantum circuits that prepare them.
\end{proof}
We now prove that \textsc{MixedStateIsomorphism} is QSZK-hard (Theorem \ref{theorem:msiqszkhardNICE}). We actually prove the following stronger result.
\begin{theorem}
	\label{theorem:msiqszk}
	$(\epsilon,1-\epsilon)$-$\mathfrak{S}_n$-\textsc{MixedStateIsomorphism} is \emph{\cclass{QSZK}}-hard for all $\epsilon(n)=1/\exp(n)$.
\end{theorem}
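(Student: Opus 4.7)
I would reduce the QSZK-complete problem of distinguishing a mixed state $\rho$ from its product of marginals $\rho_A\otimes\rho_B$---an instance of the state distinguishability problem shown QSZK-complete in \cite{qszk}---to $(\epsilon,1-\epsilon)$-$\mathfrak{S}_n$-MSI. Given a circuit producing $\rho$ on a bipartite $n$-qubit Hilbert space $A\otimes B$, the reduction sets $\rho_0=\rho$ and $\rho_1=\rho_A\otimes\rho_B$; the preparation circuit for $\rho_1$ runs the input circuit twice and discards the $B$-qubits of the first copy and the $A$-qubits of the second, so the construction is polynomial-time.

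Completeness is immediate: if $D(\rho,\rho_A\otimes\rho_B)\le\epsilon$ then the identity permutation already witnesses the MSI YES case. The main obstacle is soundness: assuming $D(\rho,\rho_A\otimes\rho_B)\ge 1-\epsilon$, we have to rule out every $\sigma\in\mathfrak{S}_n$ as an approximate witness. The starting observation is that partial tracing at qubit $i$ gives $(P_\sigma\rho P_\sigma^\dagger)_i=\rho_{\sigma(i)}$ while $(\rho_A\otimes\rho_B)_i=\rho_i$, so if the single-qubit marginals of $\rho$ are pairwise well-separated then the data-processing inequality forces $\sigma$ to fix every qubit, whence the MSI trace distance collapses to the assumed bound $1-\epsilon$.

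To remove the distinctness hypothesis, I would pad both $\rho_0$ and $\rho_1$ with a common identifier state $\bigotimes_{i=1}^{n}|\mathrm{enc}(i)\rangle\langle\mathrm{enc}(i)|$, where each $|\mathrm{enc}(i)\rangle$ is a distinct $\Theta(\log n)$-qubit computational-basis label; padding preserves the trace distance on the nose, so the YES/NO promise is inherited by the padded instance. The delicate step, and where I expect the bulk of the work, is handling permutations that swap a data qubit with an identifier qubit: in such cases the affected identifier register no longer carries the intended label, but the data qubit sitting there might (adversarially) have marginal close to a computational-basis state. To control this I would either (i) make each label use $\Theta(n)$ identifier qubits so that multiple simultaneous coincidences are required and their joint probability decays as $2^{-\Omega(n)}$, or (ii) entangle each data qubit with its own identifier block via a short sequence of CNOTs, so that any cross-block swap perturbs the identifier marginal beyond what a single data qubit can locally fake. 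Combined with the exponentially generous promise gap $\epsilon=1/\exp(n)$, either option yields the required soundness bound and hence Theorem~\ref{theorem:msiqszk}.
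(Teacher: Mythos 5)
There is a genuine gap, and it sits exactly where you expect it to: soundness. Your source problem only promises, in the NO case, that $\rho$ is far from the \emph{one specific} state $\rho_A\otimes\rho_B$; but the $\mathfrak{S}_n$-\textsc{MixedStateIsomorphism} NO condition requires $D(P_\sigma\rho P_\sigma^\dagger,\rho_A\otimes\rho_B)\ge 1-\epsilon$ for \emph{every} $\sigma$, i.e.\ that $\rho$ is far from every permuted copy $P_{\sigma^{-1}}(\rho_A\otimes\rho_B)P_{\sigma^{-1}}^\dagger$, and nothing in your promise rules out a nontrivial $\sigma$ witnessing closeness. Your proposed patches do not close this hole. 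Padding with unentangled identifier labels (option (i)) constrains nothing: a permutation acting only on the data qubits leaves the identifier factor untouched, so you are back to needing pairwise-distinct single-qubit marginals, which is precisely the hypothesis you set out to remove; moreover the phrase ``joint probability decays as $2^{-\Omega(n)}$'' has no referent, since both the permutation and the state are adversarial and deterministic within the promise. Option (ii), CNOT-coupling each data qubit to its label block, is a genuinely new construction whose soundness you assert rather than prove, and it is not clear it survives an adversarial $\rho$ (e.g.\ data qubits whose states make the coupled blocks locally indistinguishable). Separately, the structured problem ``distinguish $\rho$ from $\rho_A\otimes\rho_B$'' is not what \cite{qszk} proves \cclass{QSZK}-complete (that is distinguishability of two \emph{arbitrary} circuit outputs); its hardness would itself need an argument via \cite{septesting}.

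The paper's proof avoids all of this by choosing a source problem whose NO promise is strong enough to make soundness a one-liner: it reduces from $(\epsilon,1-\epsilon)$-\textsc{ProductState} \cite{septesting}, where a NO instance is far from \emph{all} product states, and sets $\rho_1=\rho_1\otimes\cdots\otimes\rho_n$, the tensor product of all $n$ single-qubit marginals (preparable from $n$ copies of the input circuit). Completeness uses Lemma~7.4 of \cite{septesting} (closeness to some product state implies closeness to the product of marginals, up to a factor $n+1$, which the exponentially small $\epsilon$ absorbs); soundness is immediate because $P_\sigma(\rho_1\otimes\cdots\otimes\rho_n)P_\sigma^\dagger$ is again a product state for every $\sigma\in\mathfrak{S}_n$, hence at distance at least $1-\epsilon$ from $\rho$ by the promise. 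If you want to rescue your write-up, the cleanest fix is to switch your source problem to \textsc{ProductState} and compare against the product of single-qubit marginals rather than a fixed bipartite cut; the identifier machinery then becomes unnecessary.
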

We prove this by reduction from the following problem $(\alpha,\beta)$-\textsc{ProductState}, which as shown in \cite{septesting} is \cclass{QSZK}-hard even when $\alpha=\epsilon,\beta=1-\epsilon$ and $\epsilon$ goes exponentially small in $n$.

\begin{problem}
	$(\alpha,\beta)\text{-}\textsc{ProductState}$\\
	\emph{
		\textit{Input}: Efficient description of a quantum circuit $Q_{\rho}$ in $\mathcal{Q}_{0,n}$.\\
		\textit{YES:} There exists an $n$-partite product state $\sigma_1\otimes \cdots\otimes\sigma_n$ such that
		$D(\rho,\sigma_1\otimes\cdots\otimes\sigma_n)\le\alpha\\$
		\textit{NO:} For all $n$-partite product states $\sigma_1\otimes \cdots\otimes\sigma_n$,
$D(\rho,\sigma_1\otimes\cdots\otimes\sigma_n)\ge\beta.
		$}
\end{problem}
We make use of the following lemma. 
For an $n$-partite mixed state $\rho$, let $\rho_i$ denote the state of the $i^{\text{th}}$ subsystem, obtained by tracing out the other subsystems.
\begin{lemma}[Gutoski et al. \cite{septesting}, Lemma 7.4]
	Let $\rho$ be an $n$ qubit state. If there exists a product state $\sigma_1\otimes\cdots\otimes\sigma_n$ such that
	$\lVert\rho-\sigma_1\otimes\cdots\otimes \sigma_n\rVert_1\le \alpha,
	$ then
	$\lVert\rho-\rho_1\otimes\cdots\otimes \rho_n\rVert_1\le (n+1)\alpha
	$
\end{lemma}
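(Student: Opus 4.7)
The plan is to prove this lemma by combining two standard facts about trace distance: monotonicity under partial trace, and subadditivity under tensor products, glued together with the triangle inequality.

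First I would observe that since partial trace is a quantum channel, and trace distance is contractive under quantum channels, we have $\lVert \rho_i - \sigma_i \rVert_1 \le \lVert \rho - \sigma_1 \otimes \cdots \otimes \sigma_n \rVert_1 \le \alpha$ for every index $i \in \{1,\dots,n\}$, where $\sigma_i$ is recovered as the marginal of $\sigma_1 \otimes \cdots \otimes \sigma_n$ on the $i^{\mathrm{th}}$ subsystem. This identifies the hypothetical product state's factors as being individually close to the true marginals $\rho_i$.

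Next I would invoke the triangle inequality with $\sigma_1 \otimes \cdots \otimes \sigma_n$ as an intermediate point:
\begin{align*}
\lVert \rho - \rho_1 \otimes \cdots \otimes \rho_n \rVert_1 \le \lVert \rho - \sigma_1 \otimes \cdots \otimes \sigma_n \rVert_1 + \lVert \sigma_1 \otimes \cdots \otimes \sigma_n - \rho_1 \otimes \cdots \otimes \rho_n \rVert_1.
\end{align*}
The first term is bounded by $\alpha$ by hypothesis. For the second term I would use a hybrid / telescoping argument: write the difference as a sum of $n$ terms each of the form $\rho_1 \otimes \cdots \otimes \rho_{i-1} \otimes (\sigma_i - \rho_i) \otimes \sigma_{i+1} \otimes \cdots \otimes \sigma_n$, apply the triangle inequality, and use multiplicativity of the trace norm on tensor products together with $\lVert \rho_j \rVert_1 = \lVert \sigma_j \rVert_1 = 1$ to conclude that each hybrid term contributes $\lVert \sigma_i - \rho_i \rVert_1 \le \alpha$, for a total of at most $n\alpha$.

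Combining the two bounds gives the desired $(n+1)\alpha$. The argument is essentially routine; the only mildly delicate step is the hybrid decomposition, and even there the only thing to check is that inserting $\rho_j$'s on the left and $\sigma_j$'s on the right keeps the trace-norm factor equal to $1$, which follows because both are genuine density operators.
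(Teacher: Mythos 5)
Your argument is correct. Note that the paper does not prove this lemma at all---it is imported verbatim from Gutoski et al.\ \cite{septesting} (their Lemma~7.4), so there is no in-paper proof to compare against; your combination of monotonicity of the trace norm under partial trace (giving $\lVert\rho_i-\sigma_i\rVert_1\le\alpha$ for each $i$), the triangle inequality through $\sigma_1\otimes\cdots\otimes\sigma_n$, and the telescoping hybrid bound $\lVert\sigma_1\otimes\cdots\otimes\sigma_n-\rho_1\otimes\cdots\otimes\rho_n\rVert_1\le\sum_{i=1}^{n}\lVert\sigma_i-\rho_i\rVert_1\le n\alpha$ is exactly the standard route and matches the argument in the cited reference.
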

\begin{figure}
\centering
\includegraphics{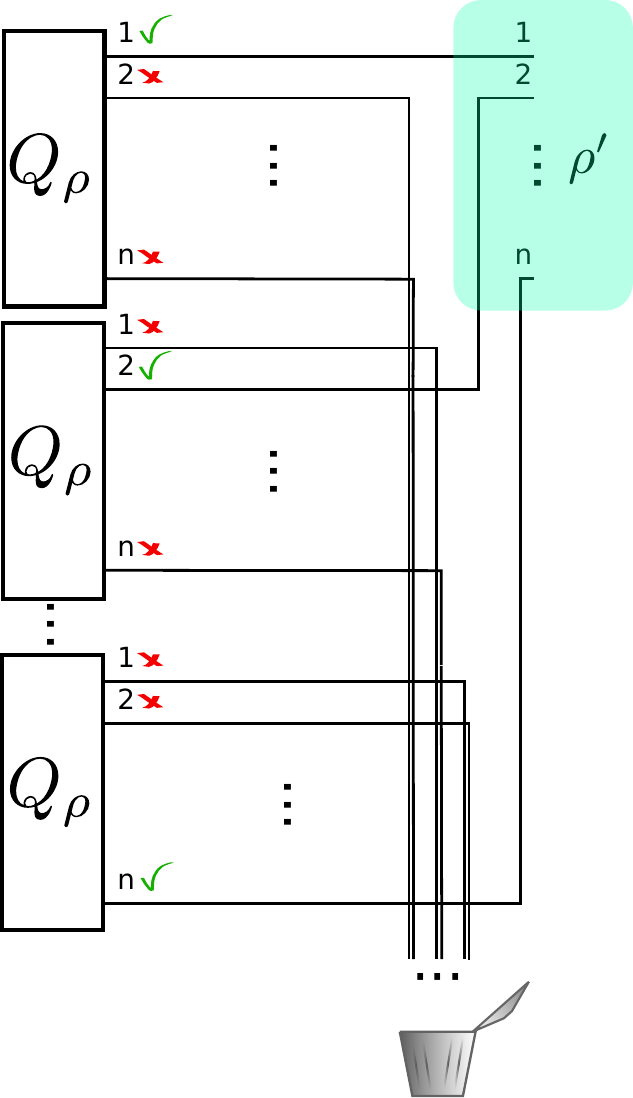}

\caption{Constructing the state $\rho'=\rho_1\otimes\dots\otimes\rho_n$ from $n$ copies of the input circuit $Q_\rho$.}
\label{fig:qcirc}
\end{figure}
\begin{proof}[Proof of Theorem \ref{theorem:msiqszk}.]
	We now must show that every instance of $(\alpha,\beta)$-\textsc{ProductState} can be converted to an instance of $(\alpha',\beta')$-$\mathfrak{S}_n$-\textsc{MixedStateIsomorphism}. In particular, consider an instance $\rho$ of $(\alpha,\beta)$-\textsc{ProductState}. Our reduction takes this to an instance $(\rho,\rho')$ of $((n+1)\alpha,\beta)$-$\mathfrak{S}_n$-\textsc{MixedStateIsomorphism}, where $\rho'=\rho_1\otimes\dots\otimes \rho_n$ can be prepared in the following way from $n$ copies of the state $\rho$. Denote these $n$ copies as $\rho^{(1)},\dots,\rho^{(n)}$. The $i^{\text{th}}$ qubit line of $\rho'$ is the $i^{\text{th}}$ qubit line of $\rho^{(i)}$, all unused qubit lines are discarded (illustrated in Figure \ref{fig:qcirc}).
	
	Let $\rho$ be an $n$-partite state. If $\rho$ is product then 
$
	D(\rho,\rho_1\otimes \cdots \otimes \rho_n) \le (n+1)\alpha/2
$
	and so $(\rho,\rho')$ correspond to a YES instance of $((n+1)\alpha,\beta)$-$\mathfrak{S}_n$-\textsc{MixedStateIsomorphism}. If $\rho$ is a NO instance of $(\alpha,\beta)$-\textsc{ProductState} then $D(\rho,\theta)\ge \beta$ for all product states $\theta$. This means that $D(\rho,P_\sigma\rho_1\otimes\cdots\otimes\rho_n P_\sigma)\ge \beta$ for all $\sigma\in\mathfrak{S}_n$ since all such states are product.
\end{proof}

In this section we have shown that \textsc{StateIsomorphism} is in QSZK, and so is unlikely to be QCMA-complete unless all problems in QCMA have quantum statistical zero knowledge proof systems. We have also shown that \textsc{StabilizerStateIsomorphism} has a quantum statistical zero knowledge proof system that uses classical communication only, and that \textsc{MixedStateIsomorphism} is QSZK-hard.

In the next section, we show that the quantum polynomial hierarchy collapses if \textsc{StabilizerStateIsomorphism} is QCMA-complete.
\section{A quantum polynomial hierarchy}
\label{section:aquantumpolynomialhierarchy}
Yamakami \cite{yamakami} considers a more general framework of quantum complexity theory, where computational problems are specified with quantum states as inputs, rather than just classical bitstrings. We find that using this more general view of computational problems makes it easier to define a very general quantum polynomial-time hierarchy, which can then be ``pulled back'' to a hierarchy that has more conventional complexity classes (\emph{e.g.} \cclass{BQP}, \cclass{QMA}) as its lowest levels.

Following \cite{yamakami} we consider classes of \emph{quantum promise problems}, where the YES and NO sets are made up of quantum states. We use the work's notion of quantum $\exists$ and $\forall$ complexity class operators in our definitions. These yield classes that are more general than we need, so we use restricted versions where all instances are computational basis states. 

Let $|\psi\rangle\in\mathcal{H}_2^{\otimes n}$ be an $n$-qubit state. Then in analogy to the length of a classical bitstring $|x_1\dots x_n|=n$, we define the length of the state $|\psi\rangle$ as $\big| |\psi\rangle\big|=n$. The set $\{0,1\}^*:=\cup_{i=1}^{\infty}\{0,1\}^i$ is the set of all bitstrings. Analogously, the set
$\mathcal{H}_2^*:= \bigcup_{i=1}^{\infty}\mathcal{H}_2^{\otimes i}
$
is the set of all qubit states. A \emph{quantum promise problem} is therefore a pair of sets $\mathcal{A}_{\text{YES}},\mathcal{A}_{\text{NO}}\subseteq \mathcal{H}_2^*$ with $\mathcal{A}_{\text{YES}}\cap\mathcal{A}_{\text{NO}}=\emptyset$. Note that to differentiate quantum promise problems from the traditional definition with bitstrings, we use the calligraphic font. We make use of the following complexity class, made up of quantum promise problems.
\begin{definition}[$\cclass{BQP}^q$]
	A quantum promise problem $\qpromprob{A}$ is in the class $\cclass{BQP}^q(a,b)$, for functions $a,b:\mathbb{N}\rightarrow [0,1]$ if there exists a polynomial-time generated uniform family of quantum circuits $\{Q_{n}~:~n\in\mathbb{N}\}$ such that for all $|\psi\rangle\in\mathcal{H}_2^*$
	\begin{itemize}
		\item if $|\psi\rangle\in \qpromprobY{A}$ then 
		$\pr[Q_{l} \text{ accepts } |\psi\rangle]\ge a(l);
		$\item if $|\psi\rangle\in \qpromprobN{A}$ then 
		$\pr[Q_{l} \text{ accepts } |\psi\rangle]\le b(l),
		$
	\end{itemize}
	where $l=\big||\psi\rangle\big|$.
\end{definition}
Classes made up of quantum promise problems will always be denoted with the `q' superscript. It is clear that $\cclass{BQP}\subseteq\cclass{BQP}^q$, because any classical promise problem can be converted to a quantum promise problem by considering bitstrings as computational basis states. There is nothing to be gained computationally by imposing that inputs are expressed as computational basis states rather than bitstrings, so we make no distinction between the ``bitstring promise problems'' and the ``computational basis state'' promise problems. Indeed let $\cclass{C}^q$ be a quantum promise problem class. Then we define \begin{align*}\cclass{C}:=\{\mathcal{A}\in\cclass{C}^q~:~\text{all states in } \qpromprobY{A} \text{ and } \qpromprobN{A} \text{ are computational basis states.}\}\end{align*}
The classes $\cclass{BQP}^q$ and $\cclass{BQP}$ are related in this way.
 For the remainder of this work we will assume that all complexity classes are made up of quantum promise problems. It will be convenient for us to consider even conventional complexity classes such as \cclass{QMA} and \cclass{QCMA} to be defined with problem instances specified as computational basis states, rather than as bitstrings. Defining them in this way does not affect the classes in any meaningful way, but it is useful for our purposes. In particular, instead of referring to instances of a promise problem $x\in A_{\text{YES}}\cup A_{\text{NO}}$, we will refer to computational basis states in a quantum promise problem $|x\rangle\in\mathcal{A}_{\text{YES}}\cup \mathcal{A}_{\text{NO}}$.
 
 The following operators are well known from classical complexity theory, and are adapted here for quantum promise problem classes.

\begin{definition}[$\exists_s$/$\forall_s$ operator]
	Let $\cclass{C}$ be a complexity class. A promise problem $\qpromprob{A}$ is in  $\exists_s\cclass{C}$ for $s\in\{q,c\}$ if there exists a promise problem $\qpromprob{B}\in\cclass{C}$ and a polynomially bounded function $p:\mathbb{N}\rightarrow\mathbb{N}$ such that 
	\begin{align*}
	\qpromprobY{A}=\{|\psi\rangle\in\mathcal{H}_2^*~:~\exists |y\rangle\in S~|\psi\rangle\otimes |y\rangle\in \qpromprobY{B}\},
	\end{align*}
	and 
	\begin{align*}
	\qpromprobN{A}=\{|\psi\rangle\in\mathcal{H}_2^*~:~\forall |y\rangle\in S~|\psi\rangle\otimes |y\rangle\in \qpromprobN{B}\},
	\end{align*}
	where the set $S$ is equal to $\{|x\rangle~:~x\in \{0,1\}^{p(||\psi\rangle|)}\}$ if $s=c$, and $\mathcal{H}_2^{\otimes p(||\psi\rangle|)}$ if $s=q$. 
	The class $\forall_s \cclass{C}$ is defined analogously, but with the quantifiers swapped.
\end{definition}

We can now define the quantum polynomial hierarchy.

\begin{definition}[Quantum polynomial time hierarchy]
	Let $\Sigma^q_0=\Pi^q_0=\cclass{BQP}^q$. For $k\ge 1$, let $s_1\dots s_k\in\{c,q\}^k$. Then
	\begin{align*}
	s_1\dots s_k\text{-}\Sigma_k^q=\exists_{s_1}s_2\cdots s_k\text{-}\Pi_{k-1}^q
	\end{align*}
	and
	\begin{align*}
	s_1\dots s_k\text{-}\Pi_k^q=
\forall_{s_1}s_2\cdots s_k\text{-}\Sigma_{k-1}^q
	\end{align*}
\end{definition}

This definition leads to complexity classes that include promise problems with quantum inputs. Such classes are not well understood, so we do not use this hierarchy in its full generality. Instead we take each level $\Sigma_i^q$ or $\Pi_i^q$, and strip out all problems except those defined in terms of computational basis states by using $\Sigma_i$ or $\Pi_i$. Doing so makes familiar classes emerge, indeed it is clear that $\Sigma_0=\Pi_0=\cclass{BQP}$, $\text{c-}\Sigma_1=\cclass{QCMA}$ and $\text{q-}\Sigma_1=\cclass{QMA}$. This provides a generalisation of the ideas of Gharibian and Kempe \cite{gk} into a full hierarchy: our definition of the class $\text{cq-}\Sigma_2$ corresponds directly to theirs. For our purposes we require the following technical lemma.
\begin{lemma}
	\label{lemma:eahousekeeping}
For all $k$, let $C_k=s\text{-}\Sigma_k^q$ or $C_k=s\text{-}\Pi_k^q$ for any $s\in\{q,c\}^k$. Then
\begin{enumerate}
	\item \label{item:ececeqec} $\exists_c\exists_c C_k=\exists_c C_k$
	\item \label{item:acaceqac}
	$\forall_c\forall_c C_k=\forall_c C_k$
	\item \label{item:ecsubeq} $\exists_c C_k\subseteq\exists_q C_k$
	\item \label{item:acsubaq}
	$\forall_c C_k\subseteq\forall_q C_k$
	\item \label{item:eceqequeq}
	$\exists_c\exists_q C_k=\exists_q\exists_c C_k=\exists_q C_k$
	\item \label{item:acaqequaq}
	$\forall_c\forall_q C_k=\forall_q\forall_c C_k=\forall_q C_k$
\end{enumerate}
\end{lemma}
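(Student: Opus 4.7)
My plan is to verify each of the six containments by direct manipulation of witness registers, relying on two elementary ideas: first, two quantifiers of the same type may be merged by concatenating their witnesses into a single larger one; second, any problem in $C_k$ may be pre-composed with a computational-basis measurement on a designated register without leaving $C_k$, because each level of the hierarchy is defined via a polynomial-time verifier that can freely perform such a measurement as its first step.

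For parts (\ref{item:ececeqec}) and (\ref{item:acaceqac}), given $\mathcal{A}\in\exists_c\exists_c C_k$ witnessed by $\mathcal{B}\in C_k$ with outer and inner witness lengths $p_1$ and $p_2$, I would take the single classical witness of length $p_1+p_2$ to be the concatenation $y_1 y_2$. The YES and NO conditions for $\exists_c\mathcal{B}$ on this reinterpreted witness then coincide line-by-line with those for $\exists_c\exists_c\mathcal{B}$, giving both inclusions; part (\ref{item:acaceqac}) is symmetric.

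For parts (\ref{item:ecsubeq}) and (\ref{item:acsubaq}), given $\mathcal{A}\in\exists_c C_k$ witnessed by $\mathcal{B}\in C_k$, the subtlety is that a quantum witness $|y\rangle$ could be a superposition, so even if every computational-basis witness forces $|\psi\rangle\otimes|z\rangle\in\mathcal{B}_{\text{NO}}$, it is not immediate that the same holds for an arbitrary $|y\rangle$. I would address this by replacing $\mathcal{B}$ with the problem $\mathcal{B}'$ obtained by having the verifier first measure the witness register in the computational basis and then run $\mathcal{B}$ on the outcome; since $C_k$ is closed under such pre-processing, $\mathcal{B}'\in C_k$, and by linearity the acceptance probability of $\mathcal{B}'$ on any quantum witness is a convex combination of its acceptance probabilities on classical witnesses. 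Hence $\exists_q\mathcal{B}'=\exists_c\mathcal{B}'=\exists_c\mathcal{B}=\mathcal{A}$; part (\ref{item:acsubaq}) is analogous. For parts (\ref{item:eceqequeq}) and (\ref{item:acaqequaq}), the easy inclusions $\exists_q C_k\subseteq\exists_c\exists_q C_k$ and $\exists_q C_k\subseteq\exists_q\exists_c C_k$ are obtained by appending a dummy classical witness that the verifier ignores; the reverse inclusions combine the two previous ideas, merging the classical and quantum witnesses into one enlarged quantum register while replacing $\mathcal{B}$ with the variant that first measures the sub-register corresponding to the originally classical witness.

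The hard part throughout is precisely the superposition issue flagged in parts (\ref{item:ecsubeq}) and (\ref{item:eceqequeq}): a quantum witness over a combined register may be entangled across the ``classical'' and ``quantum'' components, so one cannot simply factor it into the two quantifiers promised by the definitions. The dephasing construction is the standard resolution; once this is in place, the remainder is routine bookkeeping of YES/NO probability bounds, which survive the resulting convex combinations without loss.
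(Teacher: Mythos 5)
Your proposal is correct and follows essentially the same route as the paper: witnesses of like type are concatenated, and classical certificates are embedded into quantum ones by having the verifier measure the relevant register in the computational basis first, with the convexity of acceptance probabilities handling superposed or entangled witnesses. Your explicit dephasing discussion just spells out what the paper's proof states more tersely.
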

\begin{proof}
	(\ref{item:ececeqec}) and (\ref{item:acaceqac}) are trivial.
	(\ref{item:ecsubeq}) follows because a \cclass{BQP} verifier circuit can force all certificates to be classical by measuring each qubit in the standard basis before processing. (\ref{item:acsubaq}) follows because this class is complementary.
	(\ref{item:eceqequeq}) follows by a similar argument: take $\qpromprob{A}\in\exists_c\exists_q C_k$, where the classical certificate is of length $p_1(|x|)$, and the quantum certificate is of length $p_2(|x|)$. Clearly $\qpromprob{A}$ is in $\exists_q C_k$ with certificate length $p_1(|x|)+p_2(|x|)$, since the first $p_1(|x|)$ qubits can be measured before processing, so that they are forced to be computational basis states. The other direction, $\exists_q C_k\subseteq \exists_c\exists_q C_k$, follows trivially by setting the classical certificate length to $0$. Then (\ref{item:acaqequaq}) follows from (\ref{item:eceqequeq}) because the classes are complementary.   
\end{proof}

\subsection{Quantum hierarchy collapse}
\label{section:quantumhierarchycollapse}
Our main focus in this paper is on problems in \cclass{QCMA}. Therefore, it is sufficient to adopt the definition of the hierarchy with all certificates classical. Let
$
\cclass{QPH}^q:=\bigcup_{i=0}^{\infty}\text{cc}\cdots \text{c-}\Sigma_{i}^q.
$
We consider the restricted hierarchy $\cclass{QPH}$, (\emph{N.B.}, without the `q' superscript). Since each certificate is classical, when we refer to classes at each level we omit the certificate specification, referring to each level as simply $\Sigma_i$ or $\Pi_i$. Also, note that we are considering the computational basis state restriction of each level of the hierarchy so we omit the `q' superscript.
We make use of the following lemmas.
\begin{lemma}
For all $i\ge 1$, $\exists_c \Sigma_i=\Sigma_i$ and 
$\forall_c\Pi_i=\Pi_i$.
\end{lemma}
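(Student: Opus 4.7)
The plan is to reduce each equality to the corresponding item of Lemma~\ref{lemma:eahousekeeping}, via the recursive definition of the hierarchy. For the forward inclusion $\exists_c \Sigma_i \subseteq \Sigma_i$, the key observation is that $\Sigma_i$ is the computational-basis-state restriction of $\text{cc}\cdots\text{c-}\Sigma_i^q = \exists_c(\text{cc}\cdots\text{c-}\Pi_{i-1}^q)$ by the recursive definition. Consequently, any $A \in \exists_c \Sigma_i$ lies in $\exists_c \exists_c(\text{cc}\cdots\text{c-}\Pi_{i-1}^q)$, which by Lemma~\ref{lemma:eahousekeeping}(\ref{item:ececeqec}) collapses back to $\exists_c(\text{cc}\cdots\text{c-}\Pi_{i-1}^q) = \text{cc}\cdots\text{c-}\Sigma_i^q$.

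The subtlety is that Lemma~\ref{lemma:eahousekeeping} operates on the general (quantum-input) hierarchy with the `q' superscript, whereas $\Sigma_i$ is its restriction to computational basis state instances. I still need to argue that $A$'s instances remain computational basis states. This is automatic, as the $\exists_c$ operator preserves this property: if $B \in \Sigma_i$ has only computational-basis YES/NO instances, then for any $|\psi\rangle \otimes |y\rangle$ to lie in $\qpromprobY{B}$ (or $\qpromprobN{B}$) with $|y\rangle$ a computational basis state, the product must itself be a computational basis state, forcing $|\psi\rangle$ to be one. Hence $A \in \Sigma_i$.

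For the reverse inclusion $\Sigma_i \subseteq \exists_c \Sigma_i$, taking the certificate length $p \equiv 0$ in the definition of $\exists_c$ gives $\exists_c A = A$ for any $A \in \Sigma_i$; equivalently, one can pad any $\Sigma_i$ problem with a dummy polynomial-length classical certificate that the verifier's circuits simply discard before processing.

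The argument for $\forall_c \Pi_i = \Pi_i$ is completely analogous, invoking Lemma~\ref{lemma:eahousekeeping}(\ref{item:acaceqac}) in place of (\ref{item:ececeqec}); the preservation of the computational-basis structure by $\forall_c$ is proved by the same tensor-product argument. I do not anticipate any real obstacles: the only bookkeeping is to verify that restriction to computational basis instances commutes with a single application of $\exists_c$ or $\forall_c$, and this follows immediately from the definitions.
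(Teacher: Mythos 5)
Your proof is correct and follows the paper's own route: the paper simply observes that both equalities are corollaries of Lemma~\ref{lemma:eahousekeeping}, parts (\ref{item:ececeqec}) and (\ref{item:acaceqac}), exactly as you do. The extra bookkeeping you supply (the trivial-certificate reverse inclusion and the check that the computational-basis-state restriction is preserved under $\exists_c$/$\forall_c$) is sound and merely makes explicit what the paper leaves implicit.
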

\begin{proof}
Both follow as corollaries of Lemma \ref{lemma:eahousekeeping}, parts (\ref{item:ececeqec}) and (\ref{item:acaceqac}). 
\end{proof}
\begin{lemma}
\label{lemma:subsetcollapse}
	For all $i\ge 1$, if $\Sigma_{i}\subseteq\Pi_{i}$ or $\Pi_{i}\subseteq\Sigma_{i}$ then \emph{$\cclass{QPH}\subseteq \Sigma_i$}.
\end{lemma}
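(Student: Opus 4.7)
The plan is to reduce both hypotheses to the single equality $\Sigma_i = \Pi_i$, and then run the standard polynomial-hierarchy collapse argument, adapted to the quantum setting via Lemma \ref{lemma:eahousekeeping}. Since $\Pi_i$ is by definition the complement class of $\Sigma_i$ (obtained by swapping the YES/NO sets of a promise problem), taking complements on both sides of $\Sigma_i \subseteq \Pi_i$ yields $\Pi_i \subseteq \Sigma_i$, and conversely. We may therefore assume $\Sigma_i = \Pi_i$.

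The engine of the collapse is the absorption identity $\exists_c \Sigma_i = \Sigma_i$ (valid for $i \geq 1$), obtained by unfolding $\Sigma_i = \exists_c \Pi_{i-1}$ and applying Lemma \ref{lemma:eahousekeeping}(\ref{item:ececeqec}):
\begin{align*}
\exists_c \Sigma_i \;=\; \exists_c \exists_c \Pi_{i-1} \;=\; \exists_c \Pi_{i-1} \;=\; \Sigma_i,
\end{align*}
together with the dual $\forall_c \Pi_i = \Pi_i$ from part (\ref{item:acaceqac}). Using these, I would prove by induction on $k \geq i$ that $\Sigma_k \subseteq \Sigma_i$. The base case $k = i$ is trivial. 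For the inductive step, assuming $\Sigma_k \subseteq \Sigma_i$, taking complements gives $\Pi_k \subseteq \Pi_i = \Sigma_i$, and therefore
\begin{align*}
\Sigma_{k+1} \;=\; \exists_c \Pi_k \;\subseteq\; \exists_c \Sigma_i \;=\; \Sigma_i.
\end{align*}
Since $\cclass{QPH} = \bigcup_{k \geq 1} \Sigma_k$, the inclusion $\cclass{QPH} \subseteq \Sigma_i$ is immediate.

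There is no genuine obstacle: the argument is essentially the classical proof that $\Sigma_i^{\mathrm{P}} = \Pi_i^{\mathrm{P}}$ implies $\cclass{PH} \subseteq \Sigma_i^{\mathrm{P}}$, transported to the quantum hierarchy. The only delicate point is that the idempotence $\exists_c \exists_c = \exists_c$ (and its $\forall_c$ dual) is essential, and specifically holds for \emph{classical} certificates, which is why the collapse is stated at the restricted levels $\Sigma_i$ rather than the full quantum levels $\Sigma_i^q$; within the classical-certificate hierarchy, Lemma \ref{lemma:eahousekeeping} supplies exactly the bookkeeping identities needed to push the induction through.
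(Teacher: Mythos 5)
Your proof is correct and follows essentially the same route as the paper: reduce the one-sided inclusion to $\Sigma_i=\Pi_i$ via complementation, use the absorption identity $\exists_c\Sigma_i=\Sigma_i$ (Lemma \ref{lemma:eahousekeeping}), and induct upward through the levels. The only difference is cosmetic — you complement the induction hypothesis and use monotonicity of $\exists_c$, whereas the paper expands $\Sigma_{j+1}$ as a full quantifier string over $\Pi_i$ before absorbing — but the key lemma and structure are identical.
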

\begin{proof}
We prove first that if the equality $\Sigma_i=\Pi_i$ held for some $i\ge 1$ then for all $j>i$, $\Sigma_j\subseteq \Sigma_i$. We prove this by induction on $j$. Consider the base case $j=i+1$. By definition, if $\mathcal{A}\in\Sigma_{i+1}$ then $\mathcal{A}\in \exists_c\Pi_i=\exists_c\Sigma_i=\Sigma_i$. Assume for the induction hypothesis that if $\Sigma_i=\Pi_i$ then $\Sigma_{j}\subseteq \Sigma_i$. Let $k=j-i+1$. For $k$ odd and $\mathcal{A}\in\Sigma_{j+1}$ we have that $\mathcal{A}\in\underbrace{\exists_c \forall_c\cdots\exists_c}_{k}\Pi_i=\underbrace{\exists_c \forall_c\cdots\exists_c}_{k}\Sigma_i=\underbrace{\exists_c\forall_c\cdots\forall_c}_{k-1}\Sigma_i=\Sigma_j$. By the induction hypothesis this is a subclass of $\Sigma_i$. The case for even $k$ follows in the 
same way.
Since for all $i\ge 0$, $\Sigma_{i}=\text{co-}\Pi_{i}$, we have that if $\Sigma_{i}\subseteq\Pi_{i}$ or $\Pi_{i}\subseteq\Sigma_{i}$ then $\Sigma_{i}=\Pi_{i}$, and so the hierarchy collapses.
\end{proof}

The following two propositions are important for our purposes, and can be proved using similar techniques to those used in the proofs of $\cclass{AM}=\cclass{BP}\cdot \cclass{NP}$ and $\cclass{AM}\subseteq \Pi_2^P$.
We emphasise that the latter is
in terms of the \emph{quantum} polynomial hierarchy, indeed it would be remarkable if a similar result held for in terms of the classical hierarchy. The proofs follow in Sections \ref{subsection:qam=bpqma} and \ref{subsection:bqpmaupperbound}.

\begin{proposition}
\label{prop:bpqam}
\emph{$\cclass{QCAM}\subseteq\cclass{BP}\cdot \cclass{QCMA}$}, and \emph{$\cclass{QAM}\subseteq\cclass{BP}\cdot \cclass{QMA}$}.
\end{proposition}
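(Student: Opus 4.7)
The plan is to imitate the classical proof of $\cclass{AM}\subseteq\cclass{BP}\cdot\cclass{NP}$: view Arthur's coin flips $y$ as the $\cclass{BP}$ randomness and Merlin's response $z$ (respectively $|\psi_y\rangle$) as the inner $\cclass{QCMA}$ (respectively $\cclass{QMA}$) certificate. First, I would rewrite the QCAM soundness and completeness in their tight per-$y$ supremum form. Since Merlin is free to choose $z_y$ depending on $y$, the optimal averaged acceptance probability in the QCAM definition equals $\mathbb{E}_y\,\max_z\pr[V_{x,y}\text{ accepts }z]$, so the YES clause reads $\mathbb{E}_y\,\max_z\pr[V_{x,y}\text{ accepts }z]\ge a$ and the NO clause reads $\mathbb{E}_y\,\max_z\pr[V_{x,y}\text{ accepts }z]\le b$.

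Next, I would amplify the QCAM protocol to completeness $1-\delta$ and soundness $\delta$ for a small constant $\delta$ (take $\delta=1/9$). This is standard parallel repetition: Arthur samples $t$ independent random strings, Merlin sends $t$ independent classical responses, and Arthur accepts by majority vote of the independent verifications. Because Merlin's messages are classical and the verifier circuits act on disjoint workspaces across copies, a Chernoff bound gives the required gap. The same repetition works for QAM by letting Merlin reply with a tensor product $|\psi_{y_1}\rangle\otimes\cdots\otimes|\psi_{y_t}\rangle$ on disjoint registers; the verifiers still act on disjoint registers, so the analysis reduces to independent trials.

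Given the amplified protocol, define the inner promise problem $\mathcal{B}$ on inputs $\langle x,y\rangle$ by declaring $\langle x,y\rangle\in\mathcal{B}_{\text{YES}}$ iff $\max_z\pr[V_{x,y}\text{ accepts }z]\ge 2/3$ and $\langle x,y\rangle\in\mathcal{B}_{\text{NO}}$ iff $\max_z\pr[V_{x,y}\text{ accepts }z]\le 1/3$. This is manifestly in $\cclass{QCMA}$, with verifier $V_{x,y}$ and classical certificate $z$. Applying Markov's inequality to the $[0,1]$-valued quantity $f(y):=\max_z\pr[V_{x,y}\text{ accepts }z]$ in the NO case yields $\pr_y[f(y)\ge 2/3]\le (3/2)\delta$, hence $\pr_y[\langle x,y\rangle\in\mathcal{B}_{\text{NO}}]\ge 1-(3/2)\delta$; applying Markov to $1-f(y)$ in the YES case gives $\pr_y[\langle x,y\rangle\in\mathcal{B}_{\text{YES}}]\ge 1-3\delta$. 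With $\delta=1/9$ both probabilities exceed $2/3$, exhibiting the problem in $\cclass{BP}\cdot\cclass{QCMA}$.

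The QAM inclusion is proved by exactly the same construction, with $\mathcal{B}$ now a $\cclass{QMA}$ problem because the optimal certificate maximising $\pr[V_{x,y}\text{ accepts }|\psi\rangle]$ is a quantum state rather than a bitstring; the supremum rewriting of soundness and completeness, parallel amplification, and Markov step all go through unchanged. The main technical obstacle throughout is ensuring that amplification does not spoil the inner promise structure: Markov's inequality is precisely what shrinks the bad set of $y$ values that could otherwise fall into the QCMA/QMA ``promise gap'' between $1/3$ and $2/3$.
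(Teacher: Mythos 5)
Your overall route is the same as the paper's: amplify the QCAM/QAM protocol to constant error, then convert the average-over-$y$ guarantee into the statement that for at least a $2/3$ fraction of Arthur's coins the pair $(x,y)$ lands in the YES (resp.\ NO) promise of the inner problem $\mathcal{B}$ obtained by thresholding $\max_z\pr[V_{x,y}\text{ accepts }z]$ at $2/3$ and $1/3$. The paper does exactly this, but outsources both ingredients to Marriott--Watrous: Theorem 4.2 for amplification and Proposition 4.3 for the per-$y$ statement (whose proof is the Markov argument you re-derive). One small slip in your Markov step: in the NO case the inner promise $\mathcal{B}_{\mathrm{NO}}$ requires $f(y)\le 1/3$, so bounding $\pr_y[f(y)\ge 2/3]$ does not give $\pr_y[(x,y)\in\mathcal{B}_{\mathrm{NO}}]$; you should apply Markov at the threshold $1/3$, getting $\pr_y[f(y)>1/3]\le 3\delta$, which with $\delta=1/9$ still yields the required $2/3$.

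The genuinely under-justified step is your amplification sketch. For QAM, a cheating Merlin need not send $|\psi_{y_1}\rangle\otimes\cdots\otimes|\psi_{y_t}\rangle$; he may entangle the $t$ registers, so soundness does not literally ``reduce to independent trials'' --- one needs the standard argument that the per-copy acceptance operators act on disjoint registers, hence commute and can be simultaneously diagonalised, so an entangled certificate does no better than the product of per-copy optima. Moreover, even in the QCAM case with classical responses, the per-copy quantity $\mu(m,V_{x,y_i})=\max_z\pr[V_{x,y_i}\text{ accepts }z]$ is only small \emph{on average} over $y_i$ in the NO case; individual copies can have $\mu$ close to $1$, so a Chernoff bound with per-trial success probability at most $b$ is not available. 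The correct analysis first controls the empirical average of the $\mu(m,V_{x,y_i})$ over the random coins and then applies concentration to the conditionally independent acceptance events, with the acceptance threshold placed between $a$ and $b$ (majority vote only works once $a>1/2>b$). All of this is precisely the content of Marriott--Watrous Theorem 4.2 (and its QCAM analogue), which the paper simply cites; you should either cite it as well or supply this two-level argument explicitly.
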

A corollary of this is the following.
\begin{proposition}
\label{prop:qampi2}
	\emph{$\cclass{QCAM}\subseteq \text{cc-}\Pi_2$}, and \emph{$\cclass{QAM}\subseteq \text{cq-}\Pi_2$}.
\end{proposition}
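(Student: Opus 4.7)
The plan is to mirror the classical argument that $\cclass{AM}\subseteq \Pi_2^{\cclass{P}}$ via Lautemann's shift-trick, using Proposition \ref{prop:bpqam} as the starting point. I will focus on the first inclusion $\cclass{QCAM}\subseteq \text{cc-}\Pi_2$; the $\cclass{QAM}$ case is obtained by replacing ``classical certificate'' with ``quantum certificate'' throughout and observing that the argument carries over verbatim.

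Let $\mathcal{A}\in\cclass{QCAM}$. By Proposition \ref{prop:bpqam}, $\mathcal{A}\in \cclass{BP}\cdot\cclass{QCMA}$, so there is a polynomially bounded $m$, a polynomial-time generated family of BQP verification circuits $\{V_x\}$, and a polynomially bounded certificate length, such that on input $x$ with random string $r\in\{0,1\}^{m(|x|)}$ and classical certificate $y$ the usual $\cclass{BP}\cdot\cclass{QCMA}$ promise is satisfied. Standard amplification of both the outer BP operator (by parallel sampling in $r$) and of the inner $\cclass{QCMA}$ verifier (by repetition and majority vote) is applied simultaneously to redefine $m$ and sharpen the promise: in the YES case the density of ``good'' $r$'s is at least $1-2^{-n}$ and in the NO case it is at most $2^{-n}$, and each individual $V_x(r,y)$ errs with probability at most $2^{-n}$.

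The core step is then Lautemann's shift argument. Write $A_x=\{r\in\{0,1\}^m:\exists y\text{ such that }V_x(r,y)\text{ accepts with probability }\ge 1-2^{-n}\}$. The claim is
\[
x\in \qpromprobY{A}\ \iff\ \forall\, s_1,\dots,s_m\in\{0,1\}^m\ \exists\, r,y_1,\dots,y_m:\ \bigwedge_{i=1}^m V_x(r\oplus s_i,y_i)\text{ accepts}.
\]
The forward direction is a union bound: when $|A_x|/2^m\ge 1-2^{-n}$, for any fixed shifts $s_1,\dots,s_m$ the set $\bigcap_i (A_x\oplus s_i)$ has relative size at least $1-m\cdot 2^{-n}>0$, so an $r$ belonging to every $A_x\oplus s_i$ exists, and for each $i$ a certificate $y_i$ exists. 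The backward direction is a probabilistic existence argument: when $|A_x|/2^m\le 2^{-n}$, sampling $s_1,\dots,s_m$ uniformly gives $\Pr_{s}[\exists r\ \forall i:\ r\oplus s_i\in A_x]\le 2^m\cdot 2^{-nm}<1$, so a separating tuple of shifts exists.

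Finally, the inner predicate ``$\bigwedge_i V_x(r\oplus s_i,y_i)$ accepts'' is evaluated by running the $m$ copies of $V_x$ in parallel on the certificate $(r,y_1,\dots,y_m)$ and taking the AND of their accept qubits; by a union bound the composite circuit has error at most $m\cdot 2^{-n}$, hence is a genuine BQP computation. Its universally quantified input is the classical string $(s_1,\dots,s_m)$ and its existentially quantified certificate is the classical string $(r,y_1,\dots,y_m)$, so $\mathcal{A}\in\forall_c\exists_c\cclass{BQP}=\text{cc-}\Pi_2$. The $\cclass{QAM}$ case is identical except that each $y_i$ is a quantum state, placing the problem in $\forall_c\exists_q\cclass{BQP}=\text{cq-}\Pi_2$. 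I expect the main subtlety to be calibrating the two simultaneous amplifications (BP and QCMA/QMA) so that the parallelised AND of $m$ inner verifications remains bounded-error BQP; once this is arranged, the shift argument itself is purely combinatorial and introduces no quantum difficulty.
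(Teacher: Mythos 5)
Your overall route is the same as the paper's: start from Proposition \ref{prop:bpqam}, amplify, and run Lautemann's shift argument, with the universally quantified shifts and an existentially quantified tuple $(r,y_1,\dots,y_m)$ fed to an AND of $m$ parallel runs, then collapse the classical certificates into the $\exists_c$ (resp.\ $\exists_q$) quantifier. However, there is a genuine gap in your NO-case (soundness) analysis, caused by defining the single set $A_x=\{r:\exists y\ \Pr[V_x(r,y)\text{ accepts}]\ge 1-2^{-n}\}$. Non-membership $r\oplus s_i\notin A_x$ only tells you that no certificate is accepted with probability at least $1-2^{-n}$ on branch $i$; a cheating certificate may still be accepted there with probability just below $1-2^{-n}$, so all you can conclude is that the AND accepts with probability $<1-2^{-n}$, while your completeness bound is only $1-m\cdot 2^{-n}$. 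The ``gap'' is therefore inverted: no single acceptance threshold can fix this, since with threshold $\theta$ the AND verifier has completeness about $1-m(1-\theta)$ but soundness only $\theta$. The source of the problem is that, because the underlying problem $\mathcal{B}$ is a promise problem, ``some certificate accepted with high probability'' and ``all certificates rejected with high probability'' are not complementary events, and your threshold set conflates them.

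The repair is exactly how the paper structures the argument: work with the two promise sets separately. In the YES case use $S_x=\{y:|x\rangle\otimes|y\rangle\in\qpromprobY{B}\}$, which is dense, so for every choice of shifts all $m$ translates share a common point $r$; then every conjunct is a promise-YES instance of $\mathcal{B}$ and is accepted with probability $\ge 1-2^{-n}$, giving completeness $\ge 1-m\cdot2^{-n}$ (this is the paper's Lemma \ref{lemma:quantSimAE}). In the NO case use $S'_x=\{y:|x\rangle\otimes|y\rangle\in\qpromprobN{B}\}$, which is dense, and apply your probabilistic argument to its (small) complement — equivalently, the covering statement of Corollary \ref{lemma:quantSimEA} — to get shifts such that for every $r$ some branch $i$ has $r\oplus s_i\in S'_x$; that branch rejects every certificate except with probability $2^{-n}$, so the AND accepts with probability $\le 2^{-n}$ regardless of the other branches. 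With this substitution your union bound, the probabilistic choice of shifts, and the final collapse of the classical quantifiers into the quantum certificate (Lemma \ref{lemma:eahousekeeping}) go through, and the argument coincides with the paper's proof.
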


In what follows, we will refer to the class \cclass{QCMAM}: a generalisation of $\cclass{QCAM}$ which has an extra round of interaction between Arthur and Merlin.
Kobayashi \emph{et al.} \cite{kobayashi} show that this class is equal to \cclass{QCAM}.
\begin{theorem}[Kobayashi-Le Gall-Nishimura \cite{kobayashi}, Theorem 7 (iv)]
	\label{theorem:kobayashi}
 \emph{$\cclass{QCMAM}=\cclass{QCAM}$}.
\end{theorem}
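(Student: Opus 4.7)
The plan is to prove both inclusions. The inclusion $\cclass{QCAM} \subseteq \cclass{QCMAM}$ is immediate, since any QCAM protocol is realized as a QCMAM protocol with an empty first Merlin message. The substance of the theorem is the converse, which I would handle by adapting the classical proof that $\cclass{MAM} \subseteq \cclass{AM}$: amplify QCMAM soundness first, then swap the order of the initial Merlin--Arthur exchange and absorb the cost of doing so via a union bound over Merlin's possible first messages.

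The first step is to amplify the soundness of the given QCMAM protocol. Let $A \in \cclass{QCMAM}(a,b)$ with $a-b \ge 1/\text{poly}(|x|)$ and let $p(|x|)$ be the length of Merlin's first message $c$. For each fixed $c$, the remaining Arthur--Merlin portion is a public-coin protocol whose verifier is a polynomial-time quantum circuit. I would run this inner portion $k = \Theta(p(|x|))$ times in parallel by having Arthur send independent random strings $y_1, \dots, y_k$ concatenated into a single message, having Merlin reply with $(z_1, \dots, z_k)$, and accepting when at least a majority of the $k$ individual verifications accept. A Babai--Moran-style analysis for public-coin protocols (unaffected by the quantum verifier, because each copy is decided by an independent quantum measurement on disjoint classical inputs) gives a new QCMAM protocol with completeness at least $1 - 2^{-\Omega(p(|x|))}$ and, crucially, soundness at most $2^{-p(|x|)-2}$ for \emph{every} fixed first message $c$.

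Next I would swap the order of the first two messages to produce a QCAM protocol. Arthur sends $y = (y_1, \dots, y_k)$ first; Merlin then responds with the pair $(c, (z_1, \dots, z_k))$; Arthur runs the amplified verification. Completeness is preserved, because in a YES instance Merlin can use the good $c$ together with the best responses after seeing $y$. For soundness in a NO instance, for each candidate $c \in \{0,1\}^{p(|x|)}$ the probability over $y$ that \emph{some} $(z_1, \dots, z_k)$ makes Arthur accept is at most $2^{-p(|x|)-2}$ by the amplified soundness bound, so a union bound over the $2^{p(|x|)}$ choices of $c$ bounds the overall QCAM acceptance probability by $1/4$. Standard QCAM error reduction then produces the desired gap.

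The main obstacle is verifying that parallel-repetition soundness amplification behaves correctly in this quantum-verifier, classical-communication regime. Since every message in the protocol is a classical bitstring and the $k$ verifications act on disjoint classical inputs, the per-copy accept/reject outcomes are classical bits produced by the quantum circuit on those independent inputs, and for any Merlin strategy the $k$ acceptance indicators depend on independent $y_i$'s through responses $z_i$ that Merlin may correlate across copies; the standard public-coin analysis applies unchanged, and the quantum circuit merely implements the (randomized) classical predicate deciding each copy. Once amplification is in hand, the swap-and-union-bound step is entirely routine.
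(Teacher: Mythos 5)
The paper does not actually prove this statement: it is imported verbatim from Kobayashi--Le Gall--Nishimura (their Theorem 7(iv)), so there is no internal argument to compare yours against. Judged on its own terms, your sketch is the natural one and is essentially sound: with all messages classical, a QCMAM protocol for each fixed first message $c$ is a public-coin protocol whose verdict is a randomized predicate computed by a quantum circuit, and the Babai--Moran round-swap (amplify the inner AM stage until its soundness per fixed $c$ is below $2^{-p(|x|)-2}$, then let Arthur move first and union-bound over the $2^{p(|x|)}$ choices of $c$) goes through because, as you note, the $k$ per-copy acceptance indicators are conditionally independent given the $(y_i,z_i)$ and the acceptance test is monotone, so correlating the $z_i$ across copies cannot help Merlin. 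This is also in the spirit of the proof in the cited reference, which establishes the collapse for the classical-message classes by classical round-reduction techniques.

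Two points should be tightened. First, plain majority voting only amplifies when the completeness and soundness values straddle $1/2$; since the hypothesis is merely $a-b\ge 1/\mathrm{poly}$, you should either pre-amplify to a constant gap or use a threshold test at $(a+b)/2$ in the parallel repetition before invoking the Chernoff bound. Second, your soundness statement ``the probability over $y$ that some $(z_1,\dots,z_k)$ makes Arthur accept'' is not a well-defined event, because acceptance is probabilistic for each fixed $(y,z)$; the correct quantity is $\mathbb{E}_y\bigl[\max_{z}\Pr[V \text{ accepts}]\bigr]\le 2^{-p(|x|)-2}$ for each $c$, and the union bound should be run as $\mathbb{E}_y[\max_{c,z}\Pr[\text{accept}]]\le\sum_c \mathbb{E}_y[\max_z \Pr[\text{accept}\mid c]]\le 1/4$. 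With these routine repairs your argument gives a self-contained proof of $\cclass{QCMAM}=\cclass{QCAM}$.
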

The next proposition uses this fact, and allows us to complete the proof of Theorem \ref{theorem:collapse}.
\begin{proposition}
\label{prop:coqcmaqcam}
If \emph{$\cclass{co}$-$\cclass{QCMA}\subseteq \cclass{QCAM}$} then  \emph{$\cclass{QPH}\subseteq\cclass{QCAM}\subseteq\Pi_2$}.
\end{proposition}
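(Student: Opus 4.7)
The plan is to leverage the hypothesis to collapse $\Sigma_2$ into $\cclass{QCAM}$ and then invoke Lemma~\ref{lemma:subsetcollapse}. First I would unfold the definition of the second level of the hierarchy: by construction $\Sigma_2 = \exists_c \Pi_1 = \exists_c\,\text{co-}\cclass{QCMA}$. Applying the hypothesis $\text{co-}\cclass{QCMA}\subseteq \cclass{QCAM}$ inside the quantifier gives $\Sigma_2 \subseteq \exists_c \cclass{QCAM}$.

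Next I would argue that $\exists_c \cclass{QCAM} \subseteq \cclass{QCMAM}$ directly from the definitions. Prepending a classical existentially-quantified certificate to a $\cclass{QCAM}$ verification procedure produces exactly a three-message interaction of the form \emph{Merlin sends a classical certificate; Arthur sends public coins; Merlin responds with a classical message; Arthur verifies}, which matches the $\cclass{QCMAM}$ definition. Invoking Theorem~\ref{theorem:kobayashi} then gives $\cclass{QCMAM}=\cclass{QCAM}$, so chaining the inclusions yields $\Sigma_2 \subseteq \cclass{QCAM}$.

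Combining with Proposition~\ref{prop:qampi2}, which places $\cclass{QCAM}$ inside $\text{cc-}\Pi_2 = \Pi_2$, gives $\Sigma_2 \subseteq \cclass{QCAM}\subseteq \Pi_2$. In particular $\Sigma_2\subseteq \Pi_2$, so by Lemma~\ref{lemma:subsetcollapse} the quantum polynomial hierarchy collapses: $\cclass{QPH}\subseteq \Sigma_2$. Composing with the inclusion just established gives $\cclass{QPH}\subseteq \Sigma_2 \subseteq \cclass{QCAM}\subseteq \Pi_2$, which is the statement of the proposition.

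The main obstacle, though essentially bookkeeping, is the embedding $\exists_c \cclass{QCAM}\subseteq \cclass{QCMAM}$: one must check that the syntactic composition of an outer classical existential certificate with an inner $\cclass{QCAM}$ verifier really fits the formal $\cclass{QCMAM}$ definition, with attention to the completeness/soundness parameters and any standard polynomial amplification needed to land inside a fixed $(a,b)$ gap. Once this routine translation is in place, the Kobayashi--Le Gall--Nishimura theorem absorbs the extra round and the remaining steps are immediate from results already in the excerpt.
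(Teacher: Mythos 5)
Your proposal is correct and follows essentially the same route as the paper: unfold $\Sigma_2=\exists_c\,\text{co-}\cclass{QCMA}$, use the hypothesis to place the inner problem in $\cclass{QCAM}$, absorb the outer classical certificate into a $\cclass{QCMAM}$ protocol, apply the Kobayashi--Le Gall--Nishimura equality $\cclass{QCMAM}=\cclass{QCAM}$, land in $\Pi_2$ via Proposition \ref{prop:qampi2}, and collapse via Lemma \ref{lemma:subsetcollapse}. The only difference is that you flag the bookkeeping step $\exists_c\cclass{QCAM}\subseteq\cclass{QCMAM}$ explicitly, which the paper passes over with the remark that the quantified strings can be treated as certificate strings.
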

\begin{proof}
Let $\mathcal{A}=\qpromprob{A}\in\Sigma_{2}$. Then by definition there exists a promise problem $\mathcal{B}=\qpromprob{B}\in\Pi_{1}=\cclass{co}$-$\cclass{QCMA}$ and a polynomially bounded function $p$ such that for all $|x\rangle\in\qpromprobY{A}$,
\begin{align}
\label{eq:A1}
\exists y\in\{0,1\}^{p(|x|)}|x\rangle\otimes |y\rangle\in\qpromprobY{B},
\end{align}
and for all $|x\rangle\in\qpromprobN{A}$,
\begin{align}
\label{eq:A2}
\forall y\in\{0,1\}^{p(|x|)}|x\rangle\otimes |y\rangle\in\qpromprobN{B}.
\end{align}

If co-$\cclass{QCMA}\subseteq \cclass{QCAM}$ then $\mathcal{B}\in \cclass{QCAM}$. The existentially (Eq. \ref{eq:A1}) and universally (Eq. \ref{eq:A2}) quantified $y$'s can be thought of as certificate strings, and so $\mathcal{A}\in \cclass{QCMAM}$. By Theorem \ref{theorem:kobayashi}, $\cclass{QCMAM}=\cclass{QCAM}$, and so $\mathcal{A}\in\Pi_2$. Hence, $\Sigma_2\subseteq\Pi_2$, and the hierarchy collapses to the second level by Lemma \ref{lemma:subsetcollapse}.
\end{proof}
We now have the tools we need to prove \ref{theorem:collapse}.
\begin{proof}[Proof of Theorem \ref{theorem:collapse}]
Suppose $\mathcal{A}\in \cclass{QCMA}\cap\cclass{co}$-$\cclass{QCAM}$. If $\mathcal{A}$ is \cclass{QCMA}-complete then this implies that $\cclass{QCMA}\subseteq\cclass{co}$-$\cclass{QCAM}$, equivalently $\cclass{co}$-$\cclass{QCMA}\subseteq\cclass{QCAM}$. The hierarchy then collapses to the second level via Proposition \ref{prop:coqcmaqcam}.
\end{proof}

We may now finish this section by providing evidence that \textsc{StabilizerStateIsomorphism} is not \cclass{QCMA}-complete, encapsulated in Corollary \ref{theorem:productCollapse}. We do this by proving the following.
\begin{proposition}
\textsc{StabilizerStateNonIsomorphism} is in \emph{\cclass{QCAM}}.
\end{proposition}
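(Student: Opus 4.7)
The plan is to adapt the Goldwasser--Sipser set lower bound protocol---originally used to place \textsc{GraphNonIsomorphism} in $\cclass{AM}$---to the stabilizer state setting. The crucial extra ingredient that makes this work for \textsc{StabilizerStateNonIsomorphism} (but not for general \textsc{StateNonIsomorphism}) is Theorem~\ref{theorem:classicaldescription}, which lets Arthur extract an efficient classical description of each input state $|\psi_j\rangle$---namely, a set of generators for $S(|\psi_j\rangle) \subset \mathcal{P}_n$---from $O(n)$ copies prepared via the input circuits.

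The set I would use is
\begin{align*}
S := \{([|\phi\rangle],\sigma) :~&[|\phi\rangle] \in \mathrm{Orb}_G([|\psi_0\rangle]) \cup \mathrm{Orb}_G([|\psi_1\rangle]),\\
&\sigma \in G,\ P_\sigma|\phi\rangle = e^{i\theta}|\phi\rangle \text{ for some }\theta\},
\end{align*}
where $[|\phi\rangle]$ denotes the equivalence class of $|\phi\rangle$ modulo global phase (canonically represented by the generators of $S(|\phi\rangle)$) and $G$ acts through $\tau \mapsto P_\tau$. A routine orbit--stabilizer count then yields $|S| = |G|$ in the \textsc{Yes} (isomorphic) case, since the two orbits coincide, and $|S| = 2|G|$ in the \textsc{No} case, since by Theorem~\ref{theorem:agottstab} the orbits are disjoint whenever the states are not phase-equivalent.

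Next I would argue that membership in $S$ is classically checkable in polynomial time given a Merlin-supplied witness $(\tau,j)$: $\sigma,\tau \in G$ via Schreier--Sims; $P_\tau|\psi_j\rangle \propto |\phi\rangle$ by applying $P_\tau$ to each generator of $S(|\psi_j\rangle)$ and comparing the resulting subgroup of $\mathcal{P}_n$ (signs included) with $S(|\phi\rangle)$; and $P_\sigma|\phi\rangle \propto |\phi\rangle$ analogously, by checking $P_\sigma S(|\phi\rangle) P_\sigma^\dagger = S(|\phi\rangle)$. All of these steps manipulate $O(n)$ Pauli generators and run in $\mathrm{poly}(n)$ classical time.

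The $\cclass{QCAM}$ protocol then follows the standard Goldwasser--Sipser template. Arthur's random string encodes a pairwise independent hash $h$ into $\{0,1\}^m$ together with a target $t$, with $m$ chosen so that $2^m$ lies strictly between $|G|$ and $2|G|$ (Arthur computes $|G|$ exactly from the generators via Schreier--Sims). Merlin replies with a classical description of $([|\phi\rangle],\sigma,\tau,j)$. Arthur's quantum verification first prepares $O(n)$ copies of each $|\psi_j\rangle$ and runs the algorithm of Theorem~\ref{theorem:classicaldescription} to obtain classical descriptions, then classically checks $h([|\phi\rangle],\sigma) = t$ together with $([|\phi\rangle],\sigma) \in S$. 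The main technical obstacle I expect is gluing together the exponentially small failure probability of the description-extraction step with the constant completeness--soundness gap of Goldwasser--Sipser; both succumb to standard amplification, and neither affects the structural fact that the protocol has the form (public randomness)--(classical Merlin message)--(efficient quantum verification) required by $\cclass{QCAM}$.
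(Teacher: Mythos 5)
Your proposal is correct and follows essentially the same route as the paper: both use the Goldwasser--Sipser set lower bound on the set of (permuted classical stabilizer description, automorphism) pairs, whose size is $|G|$ in the isomorphic case and $2|G|$ in the non-isomorphic case by the orbit--stabilizer count, with Montanaro's algorithm (Theorem~\ref{theorem:classicaldescription}) letting Arthur extract the stabilizer generators from copies of the input states so that membership can be checked by purely classical manipulation of Pauli generators after Merlin's classical message. Your explicit use of phase-equivalence classes with a canonical generator representation, and your note that Arthur computes $|G|$ via Schreier--Sims, are minor tightenings of details the paper leaves implicit.
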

\begin{proof}
For a stabilizer state $|\psi\rangle$, denote by $s_{\psi}^{(1)},\dots, s_{\psi}^{(n)}\in \{\pm I,\pm X,\pm Y,\pm Z\}^n$ the classical strings that describe the stabilizer generators of $|\psi\rangle$ that we can obtain efficiently using the algorithm of Theorem \ref{theorem:classicaldescription}. We denote by $s_\psi$ the length $2n$ string that is obtained by concatenating these stabilizer strings, that is $s_{\psi} = s_{\psi}^{(1)}\dots s_{\psi}^{(n)}$. Then for any permutation $\sigma\in\mathfrak{S}_n$, we take $\sigma(s_{\psi}) = s_{\psi}^{(\sigma(1))},\dots, s_{\psi}^{(\sigma(n))}$. For a permutation group $G\le \mathfrak{S}_n$, consider the set
\begin{align*}
S_G := \bigcup_{j\in\{0,1\},\sigma\in G}\left\{\left(\sigma\left(s_{\psi_{j}}\right),\pi\right)~:~\pi\in G\land \sigma\left(s_{\psi_{j}}\right) = \sigma\left(s_{\psi_{j}}\right)\right\}.
\end{align*}
If there exists $\sigma$ such that $|\langle \psi_1|P_\sigma|\psi_0\rangle| = 1$ then $\sigma(s_{\psi_0}) = s_{\psi_1}$, and so in this case $|S_G| = |G|$. If for all $\sigma\in G$ we have that $|\langle \psi_1|P_\sigma|\psi_0\rangle|\le 1-\epsilon(n)$ then likewise for all $\sigma\in G$, $\sigma(s_{\psi_0}) \neq s_{\psi_1}$ and therefore $|S_G| = 2|G|$. If we can show that membership in $S_G$ can be efficiently verified by Arthur then we can apply the Goldwasser-Sipser set lower bound protocol \cite{gs} to determine isomorphism of the states. To convince Arthur with high probability that $(\sigma(s_{\psi_j}),\pi)\in S_G$, Merlin sends the permutation $\sigma$ and the index $j\in\{0,1\}$. Arthur can then obtain the string $s_{\psi_{j}}$ with probability greater than $1-1/\text{exp}(n)$ using Montanaro's algorithm of Theorem \ref{theorem:classicaldescription} applied to $U_{\psi_j}|0\rangle$. He can then verify in polynomial time that the string he received is equal to $\sigma(s_{\psi_{j}})$, that $\pi$ is an automorphism of $\sigma(s_{\psi_{j}})$, and that the permutation $\sigma$ is in the group $G$.
\end{proof}

We have provided evidence that SSI can be thought of as an intermediate problem for \cclass{QCMA}. In particular, we have shown that if it were in BQP, then \textsc{GraphIsomorphism} would also be in BQP, and furthermore, that its \cclass{QCMA}-completeness would collapse the quantum polynomial hierarchy. Such evidence is unfortunately currently out of reach for \textsc{StateIsomorphism}, because we have been unable to show that \textsc{StateNonIsomorphism} is in \cclass{QCAM}. Perhaps Arthur and Merlin must always use quantum communication if Arthur is to be convinced that two states are NOT isomorphic. This would be interesting, because he can be convinced that they \emph{are} isomorphic using classical communication only ($\textsc{StateIsomorphism}\in \cclass{QCMA}$).

\subsection{Proof of Proposition \ref{prop:bpqam}}
\label{subsection:qam=bpqma}
We begin by giving a definition of the BP complexity class operator. Note that we are still working in terms of the quantum promise problems defined earlier, which is clear from the use of the calligraphic font $\mathcal{A}$. In the following we take $x\sim X$ to mean that $x$ is an element drawn uniformly at random from a finite set $X$.
\begin{definition}[BP operator]
	Let $\cclass{C}$ be a complexity class. A promise problem $\qpromprob{A}$ is in $\cclass{BP}(a,b)\cdot\cclass{C}$ for functions $a,b:\mathbb{N}\rightarrow[0,1]$ if there exists $\qpromprob{B}\in \cclass{C}$ and a polynomially bounded function $p:\mathbb{N}\rightarrow\mathbb{N}$ such that
	\begin{itemize}
		\item For all $|\psi\rangle\in\qpromprobY{A}$,
		\begin{align*}
		\mathop{\pr}_{y\sim \{0,1\}^{p(|x|)}}[|\psi\rangle\otimes |y\rangle\in \qpromprobY{B}]\ge a(||\psi\rangle|);
		\end{align*}
		\item For all $|\psi\rangle\in\qpromprobN{A}$,
		\begin{align*}
		\mathop{\pr}_{y\sim \{0,1\}^{p(|x|)}}[|\psi\rangle\otimes |y\rangle\notin \qpromprobN{B}]\le b(||\psi\rangle|).
		\end{align*}
	\end{itemize}
\end{definition}
It is clear that the probabilities $a,b$ can be amplified in the usual way by repeating the protocol a sufficient number of times and taking a majority vote.
Let $(\{V_{x,y}\},m,s)$ be a \cclass{QAM} verification procedure. In what follows we make use of the functions \begin{align*}\mu(m,V_{x,y}):=\max_{|\psi\rangle\in\mathcal{H}_2^{\otimes \text{m}(|x|)}}\left(\pr[V_{x,y} \text{ accepts } |\psi\rangle]\right)
\end{align*}
and \begin{align*}\nu(m,V_{x,y}):=\min_{|\psi\rangle\in\mathcal{H}_2^{\otimes \text{m}(|x|)}}\left(\pr[V_{x,y} \text{ rejects } |\psi\rangle]\right).
\end{align*}
The following results of Marriott and Watrous \cite{mw} are useful for our purposes.
\begin{theorem}[Marriott-Watrous \cite{mw}, Theorem 4.2]
	\label{theorem:errorReduction}
	Let $a,b:\mathbb{N}\rightarrow[0,1]$ and polynomially bounded $q:\mathbb{N}\rightarrow[0,1]$ satisfy
	\begin{align*}
	a(n)-b(n)\ge \frac{1}{q(n)}
	\end{align*}
	for all $n\in \mathbb{N}$. Then \emph{$\cclass{QAM}(a,b)$ $\subseteq\cclass{QAM}(1-2^{-r},2^{-r})$}, for all polynomially bounded $r:\mathbb{N}\rightarrow[0,1]$.
\end{theorem}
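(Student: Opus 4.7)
The plan is to establish this bound via a single round of parallel repetition that exploits the fact that Merlin's quantum message is sent only after seeing Arthur's random coins. Given the QAM verification procedure $(V,m,s)$ witnessing $\cclass{QAM}(a,b)$ with $a-b\ge 1/q$, I would construct a new procedure in which Arthur samples $k=\Theta(r\,q^2)$ independent strings $Y=(y_1,\ldots,y_k)$, expects $k\cdot m(|x|)$ qubits from Merlin which he partitions into $k$ registers, runs the original $V_{x,y_i}$ on register $i$, and accepts iff at least $\tfrac{a+b}{2}k$ of the sub-verifiers accept. This remains a valid QAM verification procedure with parameters polynomial in $|x|$.

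For completeness, the honest prover, after seeing $Y$, sends $\bigotimes_{i=1}^k |\psi_{y_i}\rangle$, where $|\psi_y\rangle$ is a maximizer for $V_{x,y}$ achieving acceptance $\mu_{y_i}:=\mu(m,V_{x,y_i})$. Each sub-verifier then accepts independently with probability $\mu_{y_i}$, and since the $(y_i,\mu_{y_i})$ are i.i.d.\ with $\mathbb{E}_y[\mu_y]\ge a$, a Hoeffding bound on these independent Bernoulli trials against the threshold $\tfrac{a+b}{2}$ gives total acceptance at least $1-2^{-r}$.

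The main obstacle is soundness, because a cheating Merlin may submit a state that entangles all $k$ registers and that depends arbitrarily on $Y$. I would handle this by exploiting a commutation structure. Let $E_{y_i,i}$ denote the accept POVM element of $V_{x,y_i}$ embedded in register $i$; the $E_{y_i,i}$ and their complements $I-E_{y_i,i}$ pairwise commute, since they act on disjoint registers. Hence for each fixed $Y$ the ``majority-accept'' POVM element
\begin{align*}
M_Y \;=\; \sum_{\substack{S\subseteq [k]\\ |S|\ge \tfrac{a+b}{2}k}} \;\prod_{i\in S}E_{y_i,i}\;\prod_{i\notin S}(I-E_{y_i,i})
\end{align*}
is jointly diagonalizable in a product eigenbasis. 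The classical majority function is monotone in each sub-probability, so a top eigenvector of $M_Y$ can be taken to be the tensor product of top eigenvectors of the individual $E_{y_i,i}$, and the corresponding eigenvalue is precisely $\pr\bigl[Z_1+\cdots+Z_k\ge \tfrac{a+b}{2}k\bigr]$ where the $Z_i$ are independent $\mathrm{Bern}(\mu_{y_i})$. Averaging over $Y$, the pairs $(y_i,Z_i)$ become i.i.d.\ with $\mathbb{E}[Z_i]=\mathbb{E}_y[\mu_y]\le b$, and a final Chernoff bound against the threshold $\tfrac{a+b}{2}>b$ yields overall acceptance at most $2^{-r}$ for our choice of $k$, concluding the inclusion $\cclass{QAM}(a,b)\subseteq\cclass{QAM}(1-2^{-r},2^{-r})$.
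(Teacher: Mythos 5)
This theorem is not proved in the paper at all: it is imported verbatim from Marriott and Watrous (their Theorem 4.2), so there is no in-paper argument to compare against. Your parallel-repetition-with-threshold proof is correct and is essentially the standard argument used for this result: the commutation of the accept operators $E_{y_i,i}$ across disjoint registers, joint diagonalization in a product eigenbasis, and monotonicity of the threshold function in each Bernoulli parameter correctly show that entangling the $k$ registers cannot beat $\Pr[\sum_i Z_i \ge \tfrac{a+b}{2}k]$ with $Z_i$ independent Bernoulli$(\mu_{y_i})$, after which averaging over $Y$ makes the $(y_i,Z_i)$ i.i.d.\ and Chernoff/Hoeffding bounds finish both completeness and soundness with $k=\Theta(rq^2)$. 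The only point worth flagging is the implicit assumption that Arthur can compute the acceptance threshold $\tfrac{a(n)+b(n)}{2}$ (equivalently, that $a,b$ are polynomial-time computable), which is part of the hypotheses in Marriott--Watrous even though the paper's restatement (with its typo $q,r:\mathbb{N}\rightarrow[0,1]$ rather than $\mathbb{N}\rightarrow\mathbb{N}$) omits it.
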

\begin{proposition}[Marriott-Watrous \cite{mw}, Proposition 4.3]
	\label{prop:altFormulation}
	Let \begin{align*}
	\left(\left\{V_{x,y}~:~x\in\{0,1\}^*,y\in\{0,1\}^{s(|x|)}\right\},m:\mathbb{N}\rightarrow\mathbb{N},s:\mathbb{N}\rightarrow\mathbb{N}\right)
	\end{align*}
	be a \emph{\cclass{QAM}} verification procedure for a promise problem $A$ with completeness and soundness errors bounded by $1/9$. Then for any $x\in\{0,1\}^*$ and for $y\in\{0,1\}^{s(|x|)}$ chosen uniformly at random,
	\begin{itemize}
		\item if $|x\rangle\in \qpromprobY{A}$ then $\pr[\mu(m,V_{x,y})\ge 2/3]\ge 2/3$;
		\item if $|x\rangle\in \qpromprobN{A}$ then $\pr[\mu(m,V_{x,y})\le 1/3]\ge 2/3$.
	\end{itemize}
\end{proposition}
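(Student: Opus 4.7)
The statement is a kind of ``averaging'' lemma that converts the additive completeness/soundness gap of a QAM procedure (with errors bounded by $1/9$) into a multiplicative concentration bound: for most choices of Arthur's random string $y$, the residual QMA-like game $V_{x,y}$ behaves correctly. My plan is to reduce the claim to two standard Markov-type inequalities once the QAM promise is rephrased in terms of the expected value of $\mu(m,V_{x,y})$ over a uniformly random $y$.

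First I would observe the following key reformulation. The quantity $\mu(m,V_{x,y})$ is by definition the optimum acceptance probability of $V_{x,y}$ over all witness states. Hence, taking the supremum inside the average, the QAM promise is equivalent to
\begin{align*}
|x\rangle\in\qpromprobY{A} &\Longrightarrow \mathop{\mathbb{E}}_{y\sim\{0,1\}^{s(|x|)}}\!\bigl[\mu(m,V_{x,y})\bigr] \ge 8/9,\\
|x\rangle\in\qpromprobN{A} &\Longrightarrow \mathop{\mathbb{E}}_{y\sim\{0,1\}^{s(|x|)}}\!\bigl[\mu(m,V_{x,y})\bigr] \le 1/9.
\end{align*}
The upper bound in the NO case is immediate because the promise already ranges over all collections of witnesses, in particular the collection of pointwise maximisers. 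The lower bound in the YES case uses that any witness-yielding family $\{|\psi_y\rangle\}$ gives $\mathbb{E}_y[\mu(m,V_{x,y})] \ge \mathbb{E}_y[\Pr[V_{x,y}\text{ accepts }|\psi_y\rangle]] \ge 8/9$.

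With this reformulation the random variable $X(y):=\mu(m,V_{x,y})$ is bounded in $[0,1]$, so two applications of Markov's inequality finish the argument. In the NO case, ordinary Markov gives
\begin{align*}
\Pr_y\bigl[X(y) > 1/3\bigr] \le \frac{\mathbb{E}[X]}{1/3} \le \frac{1/9}{1/3} = 1/3,
\end{align*}
which is exactly $\Pr[\mu(m,V_{x,y}) \le 1/3] \ge 2/3$. In the YES case I would apply the reverse Markov inequality to $1-X$: for $X\in[0,1]$ and any threshold $t\in[0,1]$,
\begin{align*}
\Pr_y\bigl[X(y) \ge t\bigr] \ge \frac{\mathbb{E}[X]-t}{1-t}.
\end{align*}
Plugging in $\mathbb{E}[X]\ge 8/9$ and $t=2/3$ yields $\Pr[\mu(m,V_{x,y}) \ge 2/3] \ge (8/9-2/3)/(1/3) = 2/3$, as required.

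There is no real technical obstacle here: the only subtlety worth checking carefully is the NO-case reformulation, namely that the supremum over witness collections in the QAM definition can be pushed inside the expectation to give a bound on $\mathbb{E}_y[\mu(m,V_{x,y})]$. This is justified by choosing the pointwise maximiser $|\psi_y\rangle$ (which exists since the acceptance probability is a continuous function of the witness on a compact set), and noting that this family is a legal choice of collection for the NO promise. Everything else is elementary probability on $[0,1]$-valued random variables, so I expect the write-up to be short once the reformulation step is recorded.
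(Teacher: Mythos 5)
Your proof is correct: the reduction of the QAM promise to bounds on $\mathbb{E}_y[\mu(m,V_{x,y})]$ (using the pointwise-maximising witness collection in the NO case) followed by Markov's inequality and its reverse form is exactly the standard Marriott--Watrous argument. Note that the paper itself states this proposition without proof, importing it directly from \cite{mw}, so there is no in-paper proof to diverge from; your write-up is a faithful reconstruction of the cited one.
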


We can use these tools to prove Proposition \ref{prop:bpqam}. We prove it for \cclass{QAM}, the result follows for QCAM by similar reasoning.
\begin{proof}[Proof of Proposition \ref{prop:bpqam}]
	Suppose $\mathcal{A}=\qpromprob{A}\in\cclass{QAM}(a,b)$. By Theorem \ref{theorem:errorReduction}, there exists a $\cclass{QAM}$ verification procedure ($\{V_{x,y}\},m,s$) with completeness and soundness errors bounded by $1/9$. Thus by Proposition \ref{prop:altFormulation} we know that for all $x\in\{0,1\}^{*}$,
	if $|x\rangle\in \qpromprobY{A}$ then
	\begin{align*}
	\mathop{\pr}_{y\sim\{0,1\}^{s(|x|)}}[\mu(m,V_{x,y})\ge 2/3]\ge 2/3,
	\end{align*}
	which means that
	\begin{align*}
	\frac{1}{2^{s(|x|)}}\left|\left\{y\in\{0,1\}^{s(|x|)}~:~\exists |z\rangle\in\mathcal{H}_2^{\otimes m(|x|)}\pr[V_{x,y} \text{ accepts }|z\rangle]\ge 2/3\right\}\right|\ge 2/3.
	\end{align*}
	By similar reasoning, if $|x\rangle\in \qpromprobN{A}$ then
	\begin{align*}
	\frac{1}{2^{s(|x|)}}\left|\left\{y\in\{0,1\}^{s(|x|)}~:~\forall |z\rangle\in\mathcal{H}_2^{\otimes m(|x|)}\pr[V_{x,y} \text{ accepts }|z\rangle]\le 1/3\right\}\right|\ge 2/3.
	\end{align*}
	These conditions are precisely the conditions for a promise problem to belong in \cclass{QMA}. This means we can fix some promise problem $\qpromprob{B}\in\cclass{QMA}(2/3,1/3)$ and re-express these statements in the following form:
	\begin{itemize}
		\item if $|x\rangle\in\qpromprobY{A}$ then
		\begin{align*}
		\mathop{\pr}_{y\sim\{0,1\}^{s(|x|)}}[|x\rangle\otimes |y\rangle\in \qpromprobY{B}]\ge 2/3
		\end{align*}
		\item if $|x\rangle\in \qpromprobN{A}$ then
		\begin{align*}
		\mathop{\pr}_{y\sim\{0,1\}^{s(|x|)}}[|x\rangle\otimes |y\rangle\in \qpromprobN{B}]\ge 2/3,
		\end{align*}
	\end{itemize}
	and so $\mathcal{A}\in \cclass{BP}(2/3,1/3)\cdot\cclass{QMA}(2/3,1/3)$.
	
\end{proof}

\subsection{Proof of Proposition \ref{prop:qampi2}}
\label{subsection:bqpmaupperbound}
The following well known lemmas allow us to put $\cclass{BP}\cdot \cclass{QMA}$ (\emph{resp.} $\cclass{BP}\cdot\cclass{QCMA}$), and thus $\cclass{QAM}$ (\emph{resp.} \cclass{QCAM}), in the second level of the quantum polynomial-time hierarchy. We follow \cite{ab} but recast them in a more helpful form for our purposes. For a set of bitstrings $S\subseteq\{0,1\}^m$ and $x\in\{0,1\}^m$, we take $S\oplus x=\{s\oplus x~:~s\in S\}$.
\begin{lemma}
	Let $S\subseteq\{0,1\}^m$ for $m\ge 1$ such that
	\begin{align*}
	|S|\ge (1-2^{-k})\cdot 2^m,
	\end{align*}
	for $2^k\ge m$. Then there exists $t_1,\dots, t_m\in\{0,1\}^m$ such that
	\begin{align*}
	\bigcup_{i=1}^m S\oplus t_i=\{0,1\}^m.
	\end{align*}
\end{lemma}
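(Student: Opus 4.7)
The plan is to use the probabilistic method: draw $t_1,\dots,t_m$ independently and uniformly at random from $\{0,1\}^m$, and show that with positive probability the union $\bigcup_{i=1}^m S\oplus t_i$ covers all of $\{0,1\}^m$. This then guarantees the existence of a good choice of $t_1,\dots,t_m$.

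First I would fix an arbitrary $y\in\{0,1\}^m$ and estimate the probability that $y$ is \emph{not} covered, i.e.\ the event that $y\notin S\oplus t_i$ for every $i$. Since $t_i$ is uniform on $\{0,1\}^m$, the random variable $y\oplus t_i$ is also uniform on $\{0,1\}^m$, so
\begin{align*}
\Pr_{t_i}[\,y\notin S\oplus t_i\,]=\Pr_{t_i}[\,y\oplus t_i\notin S\,]=\frac{|\{0,1\}^m\setminus S|}{2^m}\le 2^{-k}.
\end{align*}
By independence of the $t_i$'s, the probability that $y$ avoids all $m$ translates is at most $(2^{-k})^m=2^{-km}$.

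Next I would apply a union bound over the $2^m$ possible values of $y$:
\begin{align*}
\Pr\!\left[\,\bigcup_{i=1}^m S\oplus t_i\neq \{0,1\}^m\,\right]\le 2^m\cdot 2^{-km}=2^{m(1-k)}.
\end{align*}
Using the hypothesis $2^k\ge m$, equivalently $k\ge \log_2 m$, this quantity is at most $2^m/m^m$, which is strictly less than $1$ for $m\ge 2$ (the case $m=1$ is trivial since then $S$ is either all of $\{0,1\}^1$ or the condition on $k$ forces $|S|=2$ anyway). Therefore with positive probability over the random choice of $t_1,\dots,t_m$, the translates $S\oplus t_1,\dots,S\oplus t_m$ cover $\{0,1\}^m$, and in particular such a tuple $(t_1,\dots,t_m)$ exists.

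There is no real obstacle here beyond carefully verifying the numeric bound; the argument is a textbook probabilistic covering argument and the hypothesis $2^k\ge m$ is precisely calibrated so that the expected number of uncovered points is less than one.
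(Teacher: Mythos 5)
Your proof is correct and follows essentially the same route as the paper: the probabilistic method with independent uniform shifts $t_1,\dots,t_m$, the bound $2^{-km}$ on the probability that a fixed point is missed by all translates, and a union bound over the $2^m$ points to conclude the failure probability is below $1$. The only difference is cosmetic (you fix the target point $y$ and randomize over the $t_i$, which is a slightly cleaner phrasing of the same computation), so no further comparison is needed.
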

\begin{proof}
	We prove this via the probabilistic method. Consider uniformly random $t_1,\dots, t_m\in\{0,1\}^m$. Then 
	\begin{align*}
	\mathop{\pr}_{r\sim\{0,1\}^m}\left[r\notin \bigcup_{i=1}^m S\oplus t_i\right]&=\prod_{i=1}^m\mathop{\pr}_{r\sim\{0,1\}^m}\left[r\notin S\oplus t_i\right]\le 2 ^{-km}.
	\end{align*}
	Consider the probability that there exists some $v\in\{0,1\}^m$ such that $v\notin \bigcup_{i=1}^m S\oplus t_i$,
	\begin{align*}
	\mathop{\pr}[\exists v\in\{0,1\}^m.v\notin \bigcup_{i=1}^m S\oplus t_i]&\le \sum_{i=1}^{2^m} 2^{-km}\\
	&=\frac{2^m}{2^{km}}\\
	&<1.
	\end{align*}
	Hence,
	\begin{align*}
	\pr\left[\bigcup_{i=1}^m S\oplus t_i=\{0,1\}^m\right]>0,
	\end{align*}
	and so there must exist $t_1,\dots, t_m$ as required.
\end{proof}
This yields the following corollary.
\begin{corollary}
	\label{lemma:quantSimEA}
	Let $S\subseteq\{0,1\}^m$ for $m\ge 1$ such that
	\begin{align*}
	|S|\ge (1-2^{-k})\cdot 2^m,
	\end{align*}
	for $2^k\ge m$. Then there exists $t_1,\dots, t_m$ such that for all $v\in\{0,1\}^m$, there exists $i\in[m]$ such that $t_i\oplus v\in S$.
\end{corollary}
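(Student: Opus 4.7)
The plan is to obtain this corollary as a near-immediate consequence of the preceding lemma by unfolding the definition of the set $S\oplus t_i$. I would first invoke the lemma, under the same hypotheses $|S|\ge (1-2^{-k})\cdot 2^m$ and $2^k\ge m$, to obtain $t_1,\dots,t_m\in\{0,1\}^m$ such that
\[
\bigcup_{i=1}^m (S\oplus t_i) = \{0,1\}^m.
\]

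Next, I would fix an arbitrary $v\in\{0,1\}^m$. By the covering equality above, there exists some index $i\in[m]$ for which $v\in S\oplus t_i$. By definition of $S\oplus t_i$, this means $v = s\oplus t_i$ for some $s\in S$. Using the fact that $\oplus$ is self-inverse on $\{0,1\}^m$, I would then XOR both sides by $t_i$ to conclude $t_i\oplus v = s\in S$, which is exactly the statement we wanted.

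There is essentially no obstacle here: the preceding lemma already does the heavy lifting via the probabilistic argument over random shifts, and the corollary only rephrases the covering condition $\bigcup_i(S\oplus t_i)=\{0,1\}^m$ in an element-wise, quantifier-swapped form that is more directly usable when simulating the existential/universal structure needed to place $\cclass{BP}\cdot\cclass{QMA}$ (respectively $\cclass{BP}\cdot\cclass{QCMA}$) inside $\text{cq-}\Pi_2$ (respectively $\text{cc-}\Pi_2$). In other words, the only ``content'' of the corollary beyond the lemma is the trivial bijection between ``$v$ lies in $S\oplus t_i$'' and ``$t_i\oplus v$ lies in $S$,'' so the proof reduces to one line of symbol manipulation after quoting the lemma.
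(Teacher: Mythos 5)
Your proof is correct and matches the paper's intended argument: the paper states this as an immediate corollary of the preceding covering lemma, and your unfolding of $v\in S\oplus t_i$ into $t_i\oplus v\in S$ via the self-inverse property of $\oplus$ is exactly the (implicit) one-line step the paper relies on.
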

We also require the following lemma, which comes from the opposite direction.
\begin{lemma}
	\label{lemma:quantSimAE}
	Let $S\subseteq\{0,1\}^m$ for $m\ge 1$ such that
	\begin{align*}
	|S|\ge (1-2^{-k})\cdot 2^m,
	\end{align*}
	for $2^k\ge m$. Then for all $t_1,\dots, t_m\in\{0,1\}^m$, there exists $v\in \{0,1\}^m$ such that $\bigwedge_{i\in[m]} \left(u_i\oplus v\in S\right)$.
\end{lemma}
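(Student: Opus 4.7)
The plan is to dualise the probabilistic argument of Corollary~\ref{lemma:quantSimEA}: instead of randomising over the shifts $t_i$ with $v$ held fixed, I fix $t_1,\dots,t_m$ adversarially and let $v\in\{0,1\}^m$ be uniformly random. Writing $\overline{S}:=\{0,1\}^m\setminus S$, the hypothesis gives $|\overline{S}|\le 2^{m-k}$, and the goal is to find a single $v$ avoiding the $m$ forbidden translates $\overline{S}\oplus t_i$ simultaneously.

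First I would observe that $v$ satisfies $t_i\oplus v\in S$ for every $i\in[m]$ precisely when $v\notin\bigcup_{i=1}^m(\overline{S}\oplus t_i)$, so it suffices to bound this union away from the full cube. For each fixed $i$, $\Pr_{v}[t_i\oplus v\in\overline{S}]=|\overline{S}|/2^m\le 2^{-k}$, and the union bound yields
\[\Pr_{v\sim\{0,1\}^m}\!\left[\exists i\in[m]:\, t_i\oplus v\in\overline{S}\right]\;\le\;m\cdot 2^{-k}\;\le\; 1,\]
using the hypothesis $m\le 2^k$ in the last step. Any $v$ lying outside this union is then a witness of the form required by the lemma.

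The main obstacle is that the union bound above is tight at $\le 1$ rather than $<1$ in the borderline case where $m=2^k$ and $|\overline{S}|=2^{m-k}$ exactly; in that case one can even exhibit a genuine counterexample by taking $\overline{S}$ to be an $\mathbb{F}_2$-subspace of codimension $k$ and choosing the $t_i$ to be a complete set of coset representatives, so that the translates tile $\{0,1\}^m$. I would handle this either (i) by replacing the hypothesis with the marginally stronger $2^k>m$, under which the inequality becomes strict and the argument concludes immediately, or (ii) by noting that whenever the lemma is invoked in Section~\ref{subsection:bqpmaupperbound} the amplified \cclass{QAM} verification produced by Theorem~\ref{theorem:errorReduction} forces $|S|$ to exceed $(1-2^{-k})\cdot 2^m$ with exponential slack, which suffices to make $m|\overline{S}|<2^m$ and hence to produce the required $v$.
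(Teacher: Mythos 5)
Your union-bound argument over a uniformly random $v$ is essentially the paper's own proof in probabilistic dress: the paper argues in the contrapositive, noting that if every $v$ were bad then by pigeonhole some single index $i$ would have at least $2^m/m$ bad shifts $v$, forcing $|\{0,1\}^m\setminus S|\ge 2^m/m$; dividing by $2^m$ this is exactly your union bound, so the two proofs buy the same thing and hit the same wall. Where you go beyond the paper is the boundary analysis, and you are right: the paper's proof asserts $|S|<2^m-2^m/m$ where the pigeonhole only justifies $|S|\le 2^m-2^m/m$, so in the tight case $m=2^k$, $|S|=(1-2^{-k})\cdot 2^m$ there is no contradiction, and your coset construction (take $\overline{S}$ an $\mathbb{F}_2$-subspace of codimension $k$ and $t_1,\dots,t_m$ its $2^k=m$ coset representatives, so the translates $\overline{S}\oplus t_i$ tile the cube) shows Lemma \ref{lemma:quantSimAE} as literally stated is false there. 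Your proposed repairs are the right ones and are harmless for how the lemma is used: in the proof of Proposition \ref{prop:qampi2} the error parameter $r$ is freely amplifiable, so one can insist on $2^{r(|x|)}>p(|x|)$ (or, equivalently, on strict slack in the lower bound on $|S_x|$), after which either your version (i) or (ii) makes the union bound strictly less than $1$ and the argument, and the downstream containment $\cclass{QAM}\subseteq\text{cq-}\Pi_2$, goes through unchanged. (Minor note: the $u_i$ in the lemma statement should read $t_i$, as you implicitly assumed.)
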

\begin{proof}
	Assume that there exists $t_1\dots t_m$ such that for all $v\in\{0,1\}^m$ there exists $i\in[m]$ with $t_i\oplus v\notin S$. This implies that
	there exists $i\in\{1,\dots,m\}$ such that, for at least $2^m/m$ elements $v\in\{0,1\}^m$, we have that $t_i\oplus v\notin S$. Then
	\begin{align*}
	|S|<2^m-2^m/m=2^m(1-1/m)\le (1-2^{-k})\cdot 2^m,
	\end{align*}
	contradicting our assumption about the cardinality of $S$.
\end{proof}
We can now prove the Proposition \ref{prop:bpqam}. We prove it for $\cclass{BP}\cdot\cclass{QMA}$; the result for $\cclass{BP}\cdot\cclass{QCMA}$ follows in the same way.
\begin{proof}[Proof of Proposition \ref{prop:bpqam}]
	Let $\qpromprob{A}\in \cclass{BP}\cdot\cclass{QMA}$. Then by definition there exists $\qpromprob{B}\in \cclass{QMA}$ and polynomially bounded $p,r:\mathbb{N}\rightarrow\mathbb{N}$ such that if $|x\rangle\in\qpromprobY{A}$, \begin{align*}
	\mathop{\pr}_{y\sim \{0,1\}^{p(|x|)}}[|x\rangle\otimes |y\rangle\in \promprobY{B}]\ge 1-2^{-r(|x|)}.
	\end{align*}
	Set $S_x=\{y\in\{0,1\}^{p(|x|)}~:~|x\rangle\circ |y\rangle\in \qpromprobY{B}\}$. Then $|x\rangle\in \qpromprobY{B}$ implies that $|S_x|\ge (1-2^{r(|x|)})\cdot 2^{p(|x|)}$. By amplification of BP, we can choose $r$ to be whatever we want, so we choose it such that $2^{r(|x|)}\ge p(|x|)$. Then by Lemma \ref{lemma:quantSimAE}, 
	\begin{align}
	\label{eq:longAE}
	x\in \promprobY{A}\implies \forall t_1\dots t_{p(|x|)}\in\{0,1\}^{p(|x|)}\exists v\in\{0,1\}^{p(|x|)}\exists i\in \{1\dots p(|x|)\}|x\rangle\otimes |t_i\oplus v\rangle\in \qpromprobY{B}.
	\end{align}
	By definition of $\cclass{QMA}$, for any bitstring $y$ such that $|x\rangle\otimes |y\rangle\in \promprobY{B}$,
	\begin{align*}
	\exists |\psi\rangle \in \mathcal{H}_2^{s(|x|)}.\pr[Q_{x\circ y} \text{ accepts } |\psi\rangle] \ge 2/3.
	\end{align*}
	From Lemma \ref{lemma:eahousekeeping} we know we can collapse the classical $\exists$ quantifiers into the quantum one, obtaining  $\forall_c\exists_q$. This means that Eq. (\ref{eq:longAE}) is of the form required by a promise problem in $\text{cq-}\Pi_{2}$.
	
	Set $S'_x=\{y\in\{0,1\}^{p(|x|)}~:~|x\rangle\otimes |y\rangle\in \promprobN{B}\}$. For $x\in \promprobN{A}$, $|S'_x|\ge 1-2^{-r(|x|)}2^{p(|x|)}$, for any $r$ via amplification. Then by Corollary \ref{lemma:quantSimEA} we know that this can be written as a $\exists_c \forall_c$ statement about belonging to $\qpromprobN{B}$. By definition the membership condition for $\qpromprobN{B}$ is a $\forall_q$ statement. Again, the classical and quantum $\forall$ statements can be collapsed so we obtain a $\exists_c\forall_q$ statement for the NO instances, meaning that $\qpromprob{A}\in\text{cq-}\Pi_{2}$. 
\end{proof}
\section{Acknowledgements}
The authors thank Scott Aaronson, L\'aszl\'o Babai, Toby Cubitt, Aram Harrow, Will Matthews, Ashley Montanaro, Andrea Rocchetto, and Simone Severini for very helpful discussions. JL acknowledges financial support by the Engineering and Physical Sciences Research Council [grant number EP/L015242/1]. CEGG thanks UCL CSQ group for hospitality during the first semester of 2017, and acknowledges financial support from the Spanish MINECO (project MTM2014-54240-P) and MECD ``Jos\'e Castillejo''  program (CAS16/00339).

\end{document}